\newtheorem{theorem}{Theorem}[section]
\newtheorem{proposition}[theorem]{Proposition}
\newtheorem{lemma}[theorem]{Lemma}
\newtheorem{corollary}[theorem]{Corollary}
\newtheorem{definition}[theorem]{Definition}
\newtheorem{observation}[theorem]{Observation}
\newtheorem*{rep@theorem}{\rep@title}
\newcommand{\newreptheorem}[2]{%
\newenvironment{rep#1}[1]{ \def\rep@title{#2 \ref{##1}} \begin{rep@theorem}} {\end{rep@theorem}} }
\newcommand{\eps}{\varepsilon}
\newcommand{\qmin}{q_{\min}}
\newcommand{\etamax}{\eta}
\newcommand{\pmax}{p_{\max}}
\newcommand{\pmin}{p_{\min}}
\newcommand{\bE}{\ensuremath{\mathbf{E}}}
\DeclareMathOperator{\poly}{poly}
\DeclareMathOperator{\polylog}{polylog}
\DeclareMathOperator{\maxsize}{maxsize}
\DeclareMathOperator{\depth}{depth}
\DeclareMathOperator{\sink}{sink}
\DeclareMathOperator{\var}{Var}
\DeclareMathOperator{\size}{size}
\DeclareMathOperator{\concat}{Concat}
\begin{document}

\begin{abstract}
  The Lov\'{a}sz Local Lemma (LLL) is a keystone principle in probability theory, guaranteeing the existence of configurations which avoid a collection $\mathcal B$ of ``bad'' events which are mostly independent and have low probability. In its simplest ``symmetric'' form, it asserts that whenever a bad-event has probability $p$ and affects at most $d$ bad-events, and $e p d < 1$, then a configuration avoiding all $\mathcal B$ exists.

  A seminal algorithm of Moser \& Tardos (2010) (which we call the MT algorithm) gives nearly-automatic randomized algorithms for most constructions based on the LLL. However, deterministic algorithms have lagged behind. We address three specific shortcomings of the prior deterministic algorithms.   First, our algorithm applies to the LLL criterion of Shearer (1985); this is more  powerful than alternate LLL criteria and also removes a number of nuisance parameters and leads to cleaner and more legible bounds.   Second, we provide parallel algorithms with much greater flexibility in the functional form of the bad-events. Third, we provide a derandomized version of the MT-distribution, that is, the distribution of the variables at the termination of the MT algorithm.

We show applications to non-repetitive vertex coloring, independent transversals, strong coloring, and other problems. These give deterministic algorithms which essentially match the best previous randomized sequential and parallel algorithms. 
\end{abstract}
\author[David G. Harris]{David G. Harris$^1$}
\setcounter{footnote}{0}
\addtocounter{footnote}{1}
\footnotetext{Department of Computer Science, University of Maryland, 
College Park, MD 20742. 
Email: \texttt{davidgharris29@gmail.com}.}

\title[Deterministic algorithms for the LLL]{Deterministic algorithms for the Lov\'{a}sz Local Lemma: simpler, more general, and more parallel}

\maketitle

This is an extended version of a paper which appeared in the ACM-SIAM Symposium on Discrete Algorithms (SODA) 2022.

\section{Introduction}
\label{r1fmt-sec}
The Lov\'{a}sz Local Lemma (LLL) is a keystone principle in probability theory. It asserts that, in a probability space $\Omega$ provided with a set $\mathcal B$ of ``bad-events'', which are not too likely and are not too interdependent (in a certain technical sense), there is a positive probability that no event in $\mathcal B$ occurs.  The simplest ``symmetric'' form of the LLL states that if each bad-event $B \in \mathcal B$  affects at most $d$ bad-events and has probability at most $\frac{1}{e d}$,  then $\Pr( \bigcap_{B \in \mathcal B} \overline{B} ) > 0$. In particular, a configuration avoiding $\mathcal B$ exists. Other forms of the LLL, such as the ``asymmetric'' criterion, can be used when the bad-events have a more complex dependency structure.

Although the LLL applies to general probability spaces, a simpler ``variable version'' covers most applications to combinatorics and graph theory: the probability space $\Omega$ is defined by $n$ independent variables $X(1), \dots, X(n)$ over an alphabet $\Sigma$, and each bad-event $B \in \mathcal B$ is a boolean function $f_B$ of some variable subset $\var(B) \subseteq \{1, \dots, n\}$. In this setting, bad-events $B, B'$ are dependent in the sense of the LLL, which we denote by  $B \sim B'$, if $\var(B) \cap \var(B') \neq \emptyset$.

For any event $B$, we write $p(B) = \Pr_{\Omega}(B)$ and we define the \emph{neighborhood} $\Gamma(B)$ to be the set of bad-events $B' \in \mathcal B$ with $B \sim B'$ and $B' \neq B$.  We also define $\overline \Gamma(B)$ to be the inclusive neighborhood, i.e. $\overline \Gamma(B) = \Gamma(B) \cup \{B \}$. Finally, we define parameters
$$
d = \max_{B \in \mathcal B} |\overline \Gamma(B)|, \qquad \pmin = \min_{B \in \mathcal B} p(B),\qquad \pmax = \max_{B \in \mathcal B} p(B)
$$

We say a configuration $X = X(1), \dots, X(n)$ is \emph{good} if it avoids all the bad-events, i.e. if $f_B(X) = 0$ for all $B \in \mathcal B$. We let $m = | \mathcal B |$ and $\sigma = |\Sigma|$. 

As a paradigmatic example of the variable-setting LLL, consider a SAT instance where each clause contains $k$ literals and shares variables with at most $L$ other clauses. Here, the probability space $\Omega$ sets each variable to true or false independently with probability $1/2$, and has a bad-event corresponding to each clause being violated. These bad-events have probability $p = 2^{-k}$ and dependency $d = L + 1$. Thus, as long as $L \leq 2^k/e - 1$, a satisfying assignment exists.

The LLL, in its classical probabilistic form, only shows an exponentially small probability that a configuration is good; thus, it does not give efficient algorithms. In a seminal paper \cite{mt}, Moser \& Tardos introduced a simple randomized process, which we refer to as the \emph{MT algorithm}, to give efficient algorithms for nearly all LLL applications.

\begin{algorithm}[H]
\centering
\begin{algorithmic}[1]
\State Draw $X$ from the distribution $\Omega$
\While{some bad-event is true on $X$}
\State Arbitrarily select some true bad-event $B$
\State For each $i \in \var(B)$, draw $X(i)$ from its distribution in $\Omega$. 
\EndWhile
\end{algorithmic}
\caption{The MT algorithm}
\label{mt-seq}
\end{algorithm}

Under nearly the same conditions as the probabilistic LLL, the MT algorithm terminates in polynomial expected time. Moser \& Tardos also described a parallel version of this algorithm.

\subsection{Derandomized LLL algorithms}
For deterministic algorithms, the situation is not as clean.   The original paper of Moser \& Tardos described a deterministic variant for $d \leq O(1)$. This was strengthened by Chandrasekaran, Goyal \& Haeupler \cite{mt2} to unbounded $d$ under a stronger LLL criterion $e \pmax d^{1+\eps} \leq 1$, for any constant $\eps > 0$. For example, the latter criterion applies to $k$-SAT instances when $L \leq 2^{(1-\eps) k}/e - 1$. We note that, for many applications of the LLL, the bad-events are determined by sums of independent random variables. Due to the exponential decay appearing in, e.g. Chernoff bounds, it is relatively straightforward to ensure that such events can also satisfy the stronger LLL criterion.

Many of the deterministic algorithms can be parallelized. For the symmetric LLL, the algorithm of \cite{mt2} concretely has complexity of $O(\log^3(mn))$ time and $\poly(m,n)$ processors on an EREW PRAM. (We refer to this type of parallel deterministic complexity as $NC^3$.)  Alternate LLL derandomization algorithms, with a slightly faster run-time of $O(\log^2(mn))$, were described in \cite{mt3, polylog-paper}. As we will discuss shortly, these parallel deterministic algorithms have many additional technical conditions  on the types of bad-events that are allowed.

\subsection{Our contribution and overview}
We present new deterministic algorithms for the LLL.  There are three main contributions: (1) simpler and more general convergence criteria; (2) a parallel algorithm allowing more flexibility in the bad-events; (3) derandomization of a probabilistic method known as the \emph{MT-distribution}. Let us summarize how these improve over previous algorithms.

\bigskip

\noindent \textbf{Simpler criteria.} Previous MT derandomization algorithms have mostly focused on criteria analogous to the symmetric LLL, for example, the criterion $e \pmax d^{1+\eps} \leq 1$. Some algorithms also cover the asymmetric LLL, but in this case, there are many parameters that need to be checked. These parameters are related to technical conditions for the LLL without clear probabilistic interpretations. For example, the algorithm of \cite{mt2} requires, among other conditions, that the function $x: \mathcal B \rightarrow (0,1)$ witnessing the asymmetric LLL is bounded (both from above and below) by polynomials. While such conditions are usually satisfied in applications, they can be daunting  for non-specialists.

Our new algorithm modifies the algorithm of \cite{mt2} by tracking convergence in terms of a bound  known as \emph{Shearer's criterion}  \cite{shearer}. To summarize briefly, we first enumerate a relatively small collection of low-probability events which might cause the MT algorithm to fail to converge. These are based on ``witness dags'', which are a slight variant of the ``witness trees'' considered in \cite{mt2}. We then use conditional expectations to find a resampling table which causes all such events to be false. At this point, the MT algorithm can be simulated directly.

From a conceptual point of view, this is a relatively minor change to the algorithm. Mostly, it requires more careful counting of witness dags. However, it has two major advantages. First, Shearer's criterion is essentially the most powerful LLL-type criterion in terms of the probabilities and dependency structure of the bad-events. As a result, our deterministic algorithm covers nearly all applications of the probabilistic form of the LLL. In particular, it subsumes the symmetric and asymmetric criteria as well as other criteria such as the cluster-expansion criterion \cite{bissacot}.

But, there is a more important advantage: Shearer's criterion is more technically ``robust'' than the asymmetric LLL. The latter has a number of problematic corner cases, involving degenerate settings of certain variables. This is one of the main reasons the bounds in \cite{mt2} are so complex. We can obtain much more legible bounds which are easier to check and apply. For example,  we get the following crisp result:
\begin{theorem}
  \label{thm1x}
If the vector of probabilities $p(B)^{1-\eps}$ satisfies Shearer's criterion for constant  $\eps > 0$, and $\pmin \geq 1/\poly(n)$, then we can find a good configuration in polynomial time.
\end{theorem}
\begin{corollary}
  \label{cor1x}
  If $e \pmax d^{1+\eps} \leq 1$ then we can find a good configuration in polynomial time.
\end{corollary}
Corollary~\ref{cor1x} matches the result of \cite{mt2}. Theorem~\ref{thm1x} is much more general, as well as significantly simpler, compared to the asymmetric LLL criteria in \cite{mt2}. We describe a number of other simple LLL criteria, which we hope should be much easier to check for applications.

\bigskip

\noindent \textbf{A more general parallel algorithm.}
Our second main contribution is a new $NC^2$ parallel algorithm. This in turn gives parallel algorithms for many combinatorial constructions. To explain the novelty here, let us discuss an important limitation of the existing deterministic parallel algorithms: the indicator functions $f_B$ for the bad-events must satisfy additional structural properties. Specifically, the algorithm of \cite{mt2} requires the bad-events to be computable via decision-trees of depth $O(\log d)$, the algorithm of \cite{mt3} requires the bad-events to be monomials on $O(\log d)$ variables, and the algorithm of \cite{polylog-paper} requires the bad-events to involve $\polylog(n)$ variables. By contrast, the sequential deterministic and parallel randomized algorithms allow nearly arbitrary events.

Our   algorithm is based on a general methodology of Sivakumar \cite{sivakumar} for deterministic construction of probability spaces fooling certain types of ``statistical tests.''  It requires a structural property for the bad-events which is much less restrictive compared to previous algorithms: the bad-events must be computable by automata with roughly $\poly(d)$ states.  As some examples of our parallel algorithm, we get the following results:
\begin{theorem}
  \label{simp-thm2}
  \begin{enumerate}
    \item If $e \pmax d^{1+\eps} \leq 1$ for constant $\eps > 0$, and each bad-event can be determined by an automaton on $\poly(d)$ states, then we can find a good configuration in $NC^2$.
    \item If the vector of probabilities $p(B)^{1-\eps}$ satisfies Shearer's criterion for constant $\eps > 0$, and $\pmin \geq 1/\poly(n)$, and each bad-event $B$ can be determined by an automaton on $\poly(1/p(B))$ states, then we can find a good configuration in $NC^2$.
      \end{enumerate}
\end{theorem}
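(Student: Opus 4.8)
The plan is to instantiate the general parallel algorithm (whose statement and analysis will be the technical heart of the paper) and verify that the two hypotheses listed are special cases of its preconditions. The skeleton I would follow: (1) reduce the LLL condition to Shearer's criterion with a polynomially-bounded slack, (2) bound the number and complexity of the ``obstructions'' (stable-set sequences) that the derandomization must fool, (3) show that each obstruction can be verified by a small-state automaton, hence qualifies as a Sivakumar-style statistical test, (4) run the fooling construction to fix a resampling table, and (5) simulate the Moser--Tardos algorithm deterministically on this table in parallel.

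\medskip

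For part (1), the hypothesis $e\pmax d^{1+\eps}\le 1$ is handled by the classical fact that this implies Shearer's criterion holds with room to spare; concretely, the vector $(\pmax^{1-\eps'})_{B}$ still satisfies Shearer for a slightly smaller constant, giving an exponential gap in the Shearer polynomial evaluated along stable-set sequences. For part (2), the assumption $p(B)^{1-\eps}$ satisfies the LLL together with $p(B)\ge 1/\poly(n)$ is exactly what lets us truncate attention to stable-set sequences of length $O(\log n)$: the probability a length-$\ell$ stable-set sequence is ``consistent'' decays like $(\text{something})^{-\eps \ell}$, and combined with the union bound over the at-most $n^{O(\log n)}$ such sequences this is still a polynomially small (indeed $o(1)$) total failure probability once $\ell$ exceeds $C\log n/\eps$. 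This is the point where the $p(B)\ge 1/\poly(n)$ hypothesis is genuinely used.

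\medskip

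Part (3) is where the automaton hypothesis enters, and I expect it to be the main obstacle to writing cleanly. The test associated to a stable-set sequence $(I_1,\dots,I_\ell)$ asks whether, reading the relevant portion of the resampling table, each bad-event $B$ in $I_j$ is in fact true on the configuration it sees at that stage. If each individual $f_B$ is computed by an automaton with $\poly(d)$ (resp. $\poly(1/p(B))$) states, then the conjunction over a single stable set $I_j$ is computed by a product automaton of size $\poly(d)^{|I_j|}$; since a stable set has bounded size in the relevant regime (or since we can process its members sequentially along the table), and the whole length-$O(\log n)$ sequence is processed by composing these, the total state space is $\poly(d)^{O(\log n)} = n^{O(1)}$ when $d=\poly(n)$ — which holds because $d\le m\le\poly(n)$. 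The delicate bookkeeping is making sure the automaton reads the resampling table in a fixed order compatible with the stable-set structure, and that ``which cell of the table is relevant'' is itself automaton-computable; this is exactly the kind of bound the general algorithm's hypothesis is designed to absorb, so here I would simply cite it and check the size arithmetic.

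\medskip

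Parts (4) and (5) are then immediate invocations of earlier results: Sivakumar's framework (via the general theorem of the paper) fools all these polynomial-state tests simultaneously in $\tilde O(\log^2 n)$ time using $\poly(n)$ processors, producing a fixed resampling table on which no bad stable-set sequence is consistent; by the Kolipaka--Szegedy analysis this forces the MT algorithm to terminate within $O(\log n)$ rounds, and each round (selecting a maximal independent set of true bad-events and resampling) is a standard parallel step. Composing the depth bounds gives the claimed $\tilde O(\log^2 n)$. I do not expect any new idea in (4)--(5); the only care needed is to confirm that simulating MT against a \emph{fixed} table, rather than fresh randomness, still enjoys the logarithmic round bound, which is precisely the content of the Shearer-based convergence analysis established earlier in the paper.
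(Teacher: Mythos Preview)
Your high-level plan matches the paper's approach, but step (3) contains an arithmetic error that breaks the polynomial bound. You write that the product automaton for a length-$O(\log n)$ stable-set sequence has state space $\poly(d)^{O(\log n)}$ and then assert this equals $n^{O(1)}$ when $d=\poly(n)$. It does not: if $d=n^{c}$ then $d^{O(\log n)}=n^{O(\log n)}$, which is quasi-polynomial. The same looseness appears in step (2), where you allow yourself $n^{O(\log n)}$ sequences to enumerate; that is too many to feed into Sivakumar's framework with polynomially many processors.

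The paper closes this gap by tracking wdags by \emph{weight} rather than by size or depth. The hypothesis ``each $f_B$ has an automaton with $\poly(d)$ states,'' combined with $p_{\max}\le 1/(ed)$ in part (1), yields $\eta_B\le p(B)^{-r}$ for a constant $r$ (and this is exactly the hypothesis in part (2)). Consequently the product automaton for a wdag $G$ has capacity $\prod_{v\in G}\eta_{L(v)}\le w(G)^{-r}$. The family $\mathfrak F_\tau$ of wdags to fool is then defined so that every member has $w_{p^{1-\eps}}(G)\ge\tau^{O(1)}$, forcing $w(G)\ge\tau^{O(1)}$ and hence automaton capacity $\tau^{-O(r)}$; with $\tau$ an inverse polynomial in $W_\eps$, this is genuinely polynomial. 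The same weight-based accounting gives $|\mathfrak F_\tau|\le O(W_\eps/\tau^2)$, again polynomial. Without this reparametrization from ``number of nodes'' to ``product of probabilities,'' both the processor count and the automaton capacities are stuck at quasi-polynomial, and the claimed $NC^2$ bound does not follow.
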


This provides a critical advantage for problems where the bad-events are based on sums of random variables. To illustrate,  consider a basic task in graph theory which we refer to as \emph{vertex-splitting}. In its simplest form, given a graph $G$ of maximum degree $\Delta$ we want to color the vertices red or blue so that each vertex gets at most $\tfrac{\Delta}{2} (1+\eps)$ neighbors of either color.   This is an important subroutine in constructions such as independent transversals (which we describe next) as well as defective vertex coloring (see e.g. \cite{polylog-paper}).  There are many extensions and variants on this procedure.

Vertex-splitting is easily handled by the LLL: we color the vertices randomly, and we have a bad-event that a vertex receives too many neighbors of either color. Bad-events for vertices $v,w$ can only affect each other if  they are within distance $2$ in the graph. Thus, $d = \Delta^2$.  Each bad-event can be computed by a simple automaton, which maintains a running counter of the number of the neighbors receiving each color. This has state space of $\Delta^2$, which indeed is $\poly(d)$. 

So this is suitable for our parallel derandomization algorithm in a straightforward way. However, this example, simple as it is, would not be covered by previous derandomization algorithms in general. For instance, the defective vertex coloring algorithm in \cite{polylog-paper} used an alternate LLL derandomization algorithm for vertex splitting, and consequently it required a number of preprocessing steps to reduce the original degree of the graph to $\polylog(n)$. Most LLL applications ``in the wild'' lacked deterministic parallel algorithms because of these types of limitations.

\bigskip

\noindent \textbf{The MT-distribution.} If the LLL applies, then a good configuration $X$ exists. In some applications, we need additional global information about it. One powerful tool  is the \emph{MT-distribution}  \cite{mtdist1}: namely, the distribution on $X$ at the termination of the MT algorithm. (This should not confused with the \emph{LLL-distribution}, namely, the distribution of $\Omega$ conditioned on all bad-events $\mathcal B$ being false, although the two distributions have similar properties.) The MT-distribution has many nice properties, as the probability of an event in the MT-distribution is often roughly comparable to its probability under the original distribution $\Omega$.

To derandomize the MT-distribution, we show that,  given a collection $\mathcal E$ of auxiliary events, we can find a good configuration for which a weighted sum of events in $\mathcal E$ is close to its expected value under the MT-distribution. (A formal statement requires many preliminary definitions.) We also show a parallel algorithm and results for more general LLL criteria. 

\subsection{Applications}
We provide a number of applications to combinatorial problems. Due to the generality of our algorithms, many of these are straightforward transplants of the existing combinatorial proofs. Section~\ref{app1-sec} describes a classic application of the asymmetric LLL to non-repetitive vertex coloring. We get the following result:
\begin{theorem}
  Let $\eps > 0$ be an arbitrary constant. For a graph $G$ of maximum degree $\Delta$, there is an $NC^2$ algorithm to obtain a non-repetitive vertex coloring of $G$ using $O( \Delta^{2+\eps})$ colors.
\end{theorem}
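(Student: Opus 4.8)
The plan is to transplant the classical LLL argument of Alon, Grytczuk, Ha\l{}uszczak, and Riordan — which shows that $O(\Delta^2)$ colors suffice for a non-repetitive coloring — into the deterministic parallel framework above, spending the extra $\eps$ in the palette size $C = \Theta(\Delta^{2+\eps})$ to cover the slack our theorems require. Let $\Omega$ color each vertex of $G$ independently and uniformly from a palette $\Sigma$ with $|\Sigma| = C$. For every path $P$ on an even number $2r$ of vertices $v_1,\dots,v_{2r}$, introduce a bad-event $B_P$ that the induced coloring is a ``square'', i.e.\ $X(v_i) = X(v_{r+i})$ for all $1 \le i \le r$; since $P$ is simple this has $p(B_P) = C^{-r}$, and $\var(B_P)$ is the vertex set of $P$, so $B_P \sim B_Q$ exactly when $P,Q$ share a vertex. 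A coloring is non-repetitive precisely when it avoids every $B_P$. As noted above, this instance genuinely requires the asymmetric LLL: set $x(B_P) = (2\Delta^2)^{-r}$ for a path on $2r$ vertices. Using that a fixed vertex lies on at most $2s\,\Delta^{2s-1}$ paths on $2s$ vertices, the neighborhood sum $\sum_{Q \sim P} x(B_Q)$ collapses to a geometric series of ratio $\Theta(1/\Delta)$, and a short calculation gives $x(B_P)\prod_{Q \sim P}(1 - x(B_Q)) \ge (\Theta(\Delta^2))^{-r} \ge p(B_P)$ once $C \ge \Omega(\Delta^2)$; with $C = \Theta(\Delta^{2+\eps})$ there is room to verify the same inequality for the inflated probabilities $p(B_P)^{1-\eps'}$, where $\eps' = \eps'(\eps)$ is a suitable constant.

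The genuine obstacle is that the instance has superpolynomially many bad-events, since a path may have up to $n$ vertices, so we cannot feed all of them to the derandomization. I would split at a threshold $L = \Theta(\log n / \log\Delta)$, with the hidden constant depending on $\eps$, calling $B_P$ \emph{short} if $P$ has at most $2L$ vertices and \emph{long} otherwise. There are only $\poly(n)$ short bad-events, each has $p(B_P) = C^{-r} \ge 1/\poly(n)$, and each is computed by the obvious automaton that scans $X(1),\dots,X(n)$ and records the colors it has already seen on $P$ whose partner it has not — this needs only $\poly(1/p(B_P))$ states. Hence part (2) of Theorem~\ref{simp-thm2} (the version built on the asymmetric LLL) applies to the short bad-events and produces, in $\tilde O(\log^2 n)$ time, a coloring with no short square.

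To also eliminate the long squares I would run the general (parallel, non-symmetric) derandomized MT-distribution procedure on this same short instance, with auxiliary family $\mathcal E$ equal to all long squares and all weights $c_{B_P} = 1$. The MT-distribution inflates $\Pr_\Omega(B_P) = C^{-r}$ by at most $\prod_{Q \sim P}(1 - x(B_Q))^{-1} \le e^{O(r/\Delta)} \le e^{O(r)}$, while the number of long paths on $2r$ vertices is at most $n\Delta^{2r}$; so the expected number of long squares under the MT-distribution is at most $n\sum_{r > L} (\Delta^2 e^{O(1)}/C)^r$, and choosing the constant in $C = \Theta(\Delta^{2+\eps})$ larger than that $e^{O(1)}$ makes every summand at most $n\cdot 2^{-r}$, so the expectation is below $\tfrac12$ as soon as $L \ge \log_2 n + O(1)$. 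The MT-distribution derandomization then returns a coloring on which the \emph{actual} number of long squares is within a $1 + 1/\poly(n)$ factor of this, hence strictly below $1$, hence $0$. Combined with the short case, this coloring avoids every square and is therefore non-repetitive, and every step above runs in $NC^2$.

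The step I expect to cost the most work is making the general MT-distribution derandomization actually execute on this auxiliary family within $NC^2$: its pessimistic estimator is a sum over exponentially many long paths of geometrically decaying terms, which I would evaluate through a transfer-matrix (adjacency-matrix power) recursion so that it has depth $O(\log^2 n)$; confirming this, and confirming that the general non-symmetric MT-distribution theorem really does accommodate an auxiliary family of this size — in particular with no effective upper bound on how many bad-events a single long path can touch — is the crux. I would also need the routine facts that paths are taken simple (so $p(B_P) = C^{-r}$ exactly), that the sum over $r$ is in fact finite because $r \le n/2$, and that $\Delta \ge 2$ and at least $C$ colors are available, all of which are immediate.
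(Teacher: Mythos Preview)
Your overall outline diverges from the paper in a way that creates a real gap. The paper does \emph{not} use the MT-distribution at all here. Instead it keeps the whole instance polynomial-size by a clever redefinition of the long bad-events: for a fixed threshold $L = \Theta(\log n / (\eps \log \Delta))$ it keeps, as you do, one bad-event for each simple path on $2\ell < 2L$ vertices, but for the long regime it replaces ``the path $v_1,\dots,v_{2\ell}$ is a square'' by the event ``the two length-$L$ paths $v_1,\dots,v_L$ and $v'_1,\dots,v'_L$ (on distinct vertices, not required to be adjacent) receive the same color sequence.'' Any square of half-length $\ell \ge L$ triggers such a pair by taking the first $L$ vertices of each half. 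There are only $n^2\Delta^{2L-2} = \poly(n)$ such pairs, each has probability $C^{-L}$, each involves $2L = O(\log n)$ variables, and each is decided by an automaton of capacity $C^{2L} = \poly(n)$. The paper then verifies the variable-based criterion (Proposition~\ref{simple-shearer}(4)) with $\lambda = 1$ and $\eps/10$-exponential slack and simply invokes Theorem~\ref{par-sum-thm}(4). No auxiliary family, no MT-distribution.

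The gap in your route is precisely the point you flagged as ``the crux'': the parallel MT-distribution derandomization proved in this paper (Theorem~\ref{main-det-thm7} and its corollaries) enumerates $\mathcal E$ explicitly, builds an automaton $F_E$ for each $E \in \mathcal E$, and forms the sets $\mathfrak G^0_\tau, \mathfrak G^1_\tau$ whose size and the algorithm's processor count scale with $m' = |\mathcal B| + |\mathcal E|$ and $W'_\eps$. Taking $\mathcal E$ to be all long squares makes $m'$ superpolynomial, and the automata-fooling step (Theorem~\ref{fool-thm2}) is applied to a list of $|\mathfrak F'_\tau|$ automata, so you cannot shortcut this by a single transfer-matrix computation of the aggregate expectation --- the algorithm needs the individual events to drive its conditional-expectations search. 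Your idea would require a genuinely new derandomization theorem handling implicitly-described, superpolynomial $\mathcal E$, which the paper does not provide (and explicitly lists ``superpolynomial $m$'' among the limitations). The paper's pair-of-paths trick sidesteps the issue entirely and is both shorter and self-contained.
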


This almost matches the best non-constructive bounds \cite{mtdist2} which require $(1+o(1)) \Delta^2$ colors.

In Section~\ref{app2-sec}, we develop a more involved application to independent transversals. Given a graph $G$ along with a partition of its vertices into blocks of size $b$, an \emph{independent transversal (IT)} is an independent set of $G$ which contains exactly one vertex from each block. These objects have a long line of research, from both combinatorial and algorithmic points of view. The parallel MT algorithm can find an IT,  under the condition $b \geq (4+\eps) \Delta(G)$. A randomized sequential algorithm of Graf, Harris, \& Haxell \cite{it3} applies under the nearly-optimal bound $b \geq (2+\eps) \Delta(G)$.

Many combinatorial applications need  a \emph{weighted} IT. Specifically, given some vertex weighting $w: V \rightarrow \mathbb R$, we want to find an IT $I$ maximizing  $w(I) = \sum_{v \in I} w(v)$. This variant is also covered by the algorithm of \cite{it3}. For instance, this is critical for strong coloring (which we describe next).  Using our derandomization of the MT-distribution, along with a few other constructions, we show the following results:  
  \begin{theorem}
    \label{it-thm}
  Let $\eps > 0$ be an arbitrary constant, and let $G$ be a graph of maximum degree $\Delta$ with vertices partitioned into blocks of size $b$. 

  \begin{enumerate}
      \item If $b \geq (2+\eps) \Delta$, there is a deterministic poly-time algorithm to find an independent transversal $I$ of $G$ which additionally satisfies $w(I) \geq w(V)/b$.
    \item If $b \geq (4+\eps) \Delta$, there is an $NC^2$ algorithm to find an independent transversal $I$ of $G$ which additionally satisfies $w(I) \geq \Omega(w(V)/b)$.
  \end{enumerate}  
\end{theorem}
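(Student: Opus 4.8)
The plan is to reduce both parts to the machinery developed above, applied to the standard variable‑LLL encoding of independent transversals: introduce one variable $X(i)$ for each block $V_i$, whose value is a vertex of $V_i$ drawn from a suitable distribution, and for each edge $e = uv$ with $u \in V_i$, $v \in V_j$ a bad event $B_e$ asserting that both endpoints are selected, i.e. $X(i) = u$ and $X(j) = v$. The structural point that makes the parallel algorithm applicable is that $B_e$ is decided by an automaton with only $O(1)$ states — scanning the variables, it merely records whether $X(i) = u$ has occurred and then checks $X(j) = v$ — so the hypothesis of Theorem~\ref{simp-thm2} on the form of the bad‑events is met trivially; and $p(B_e)$ is of order $1/b^2 \geq 1/\poly(n)$.

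\textbf{Part (2).} Here I would use the encoding under which the randomized parallel MT algorithm is already known to converge when $b \geq (4+\eps)\Delta$, so that the relevant LLL criterion holds with a constant slack. Feeding this into Theorem~\ref{simp-thm2} immediately produces an independent transversal $I$ within the stated $NC^2$ bound, and it remains to secure the weight guarantee. For this I would invoke the derandomization of the MT‑distribution. It is a known feature of this instance that under the MT‑distribution every vertex survives with probability $\Pr(v \in I) = \Omega(1/b)$ — this is exactly the source of the $\Omega(w(V)/b)$ bound in the randomized setting \cite{harris-partial} — hence $\bE[w(I)] = \Omega(w(V)/b)$. I would then apply the parallel MT‑distribution derandomization with the auxiliary events $\{v \notin I\}$ weighted by $w(v)$: each such event shares variables only with the $O(b\Delta)$ bad events incident to its block, which is $O(\tfrac{\log n}{\pmax})$ since $\pmax$ is of order $1/b^2$ and $\Delta \leq b$, so the hypothesis is satisfied. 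The conclusion bounds $w(V) - w(I)$ by $(1 + 1/\poly(n))$ times its MT‑expectation; rearranging gives $w(I) \geq \bE[w(I)] - w(V)/\poly(n)$, and choosing the polynomial large enough (using $b \leq n$) makes the additive error $o(w(V)/b)$, so $w(I) = \Omega(w(V)/b)$.

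\textbf{Part (1).} The threshold $b \geq (2+\eps)\Delta$ lies below what any symmetric or asymmetric LLL criterion can reach, so the LLL derandomization cannot be used as a black box. Instead I would follow the randomized sequential algorithm of Graf, Harris \& Haxell \cite{it3}, which couples a Haxell‑style iterative/topological local‑search procedure establishing the existence of the transversal with the fractional weight‑shifting argument of Aharoni, Berger \& Ziv \cite{aharoni} that upgrades existence to the optimal bound $w(I) \geq w(V)/b$. The plan is to derandomize this procedure one step at a time: its probabilistic ingredients — the random sub‑samplings and vertex choices that drive the augmenting steps, whose analyses are concentration/LLL‑type statements — would be replaced by our deterministic sequential LLL algorithm and MT‑distribution derandomization, while the ABZ exchange step, being essentially combinatorial, can be executed greedily; one must also verify that the number of iterations stays polynomial once every randomized sub‑step is replaced by a conditional‑expectation computation.

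The main obstacle is Part (1). The guarantee $w(I) \geq w(V)/b$ is not an LLL‑type statement at all but a fractional‑relaxation (Hall/topological) result, so its derandomization cannot come from a single application of our LLL machinery and must instead be threaded through the ABZ topological argument, invoking the new tools only for its inner probabilistic subroutines; keeping that iterative scheme polynomial‑time while deterministically simulating each random step is where the real work lies. Part (2), by contrast, is essentially a direct composition of Theorem~\ref{simp-thm2} with the MT‑distribution derandomization.
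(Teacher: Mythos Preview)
Your Part~(1) sketch is on the right track and matches the paper's route: plug a deterministic degree-reduction subroutine into the framework of \cite{it3}, whose remaining steps are already deterministic. The paper makes this precise by isolating the single randomized ingredient of \cite{it3} --- given $b \geq (2+\eps)\Delta$ and non-negative weights, produce $L$ with constant block-size $b'$, $b' > 2\Delta[L]$, and $w(L)/b' \geq (1-\lambda)w(V)/b$ --- and derandomizes exactly that via an iterated LLL construction (Proposition~\ref{itlem3}).

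Your Part~(2), however, has a real gap. You claim that for $b \geq (4+\eps)\Delta$ ``the relevant LLL criterion holds with a constant slack'' and can be fed directly into Theorem~\ref{simp-thm2} and the MT-distribution derandomization. But all of the deterministic algorithms in this paper require $\eps$-\emph{exponential} slack, i.e.\ that the vector $p^{1-\eps}$ still satisfies Shearer for some constant $\eps>0$. In the edge-encoding you describe, $p(B_e) = 1/b^2$ while each block-variable meets $\Theta(b\Delta) = \Theta(b^2)$ bad-events; replacing $p$ by $p^{1-\eps}$ inflates each probability by a factor $b^{2\eps}$, and no LLL-type criterion (symmetric, cluster-expansion, or the variable bound of Proposition~\ref{simple-shearer}(4)) survives this for unbounded $b$. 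Equivalently, $e\pmax d^{1+\eps} \asymp b^{2\eps}$ blows up. So neither clause of Theorem~\ref{simp-thm2} applies, and the same obstruction kills the MT-distribution call in Theorem~\ref{main-thm4-summary2}. The cluster-expansion criterion \emph{does} hold with constant \emph{multiplicative} slack when $b \geq (4+\eps)\Delta$, but multiplicative slack is not what the derandomization uses.

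The paper gets around this by \emph{not} applying the LLL to the edge bad-events on the original graph at all. Instead it runs an iterated degree-reduction (Lemma~\ref{itlem1} through Proposition~\ref{itlem3}): at each round, subsample vertices with probability $q \approx (\log^2 b)/b$, with LLL bad-events being Chernoff-type deviations in block sizes and degrees. Those events have probability $b^{-\Omega(1)}$ against dependency $d = O(b^2)$, so the symmetric criterion holds with constant exponential slack and the $NC^2$ machinery applies; the MT-distribution bound on the auxiliary events $\{v \notin L\}$ preserves the weight ratio. After $O(\log^* b)$ rounds the block-size is a constant $b'$ with $b' \geq 4\Delta[L]$ and $w(L)/b' \geq (1-\lambda)w(V)/b$. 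Only then does the paper invoke the direct LLL plus MT-distribution (Proposition~\ref{itlem0}) on this constant-size instance, where the exponential slack is now trivially constant. Your proposal skips this reduction, and without it the direct approach does not fit the hypotheses of any theorem proved here.
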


Finally, in Section~\ref{app3-sec}, we consider \emph{strong coloring}: given a graph $G$ partitioned into blocks of size $b$, we want to partition the vertex set into $b$ independent transversals. A randomized sequential algorithm under the condition $b \geq (3 +\eps) \Delta(G)$ is given in \cite{it3}, and a randomized parallel algorithm under the condition $b \geq 9.49 \Delta(G)$ is given in \cite{harris-llll-par}. Our results on weighted independent transversals give much more efficient deterministic algorithms in both these settings:
\begin{theorem}
  \label{it-thm2}
  Let $\eps > 0$ be an arbitrary constant, and let $G$ be a graph of maximum degree $\Delta$ with vertices partitioned into blocks of size $b$.
  \begin{enumerate}
      \item If $b \geq (3 + \eps) \Delta$, there is a deterministic poly-time algorithm for strong coloring of $G$.
  \item If $b \geq (5 + \eps) \Delta$, there is an $NC^3$ algorithm for strong coloring of $G$.
  \end{enumerate}  
\end{theorem}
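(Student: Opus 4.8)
The plan is to deduce both parts from the weighted-independent-transversal results of Theorem~\ref{it-thm}, by derandomizing (part 1) and parallelizing (part 2) the reduction of strong colouring to weighted ITs used by Aharoni, Berger \& Ziv \cite{aharoni} and Graf, Harris \& Haxell \cite{it3}. The template is to build the strong colouring colour class by colour class: having removed independent transversals $I_1,\dots,I_j$, the residual graph $G_j$ has blocks of size $b-j$ and maximum degree at most $\Delta(G)$, and the next class $I_{j+1}$ is taken not as an arbitrary IT of $G_j$ but as one produced by the \emph{weighted} IT subroutine run with a weight function $w_j$ tailored so that $I_{j+1}$ absorbs an $\Omega\bigl(1/(b-j)\bigr)$ fraction of the ``congestion'' of $G_j$ --- in the spirit of Haxell's topological proof, $w_j(v)$ is a local measure of how much keeping $v$ would obstruct the remaining colours. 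This guarantees that $G_{j+1}$ still satisfies the hypothesis needed to continue, and the gap between the strong-colouring thresholds $(3+\eps)$, $(5+\eps)$ and the weighted-IT thresholds $(2+\eps)$, $(4+\eps)$ of Theorem~\ref{it-thm} is precisely the slack this invariant consumes. One peels in this way only until the residual block size reaches the weighted-IT threshold $\approx 2\Delta(G)$ (resp.\ $\approx 4\Delta(G)$); the remaining $O(\Delta(G))$ colour classes are then finished directly, the point being that the weight functions $w_j$ were chosen so that the residual graph at that moment is no longer an arbitrary low-block graph but one that admits a direct completion.

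For part (1) this is essentially mechanical. Each of the at most $b\le n$ peeling steps invokes the deterministic poly-time weighted-IT algorithm of Theorem~\ref{it-thm}(1), which is applicable since the residual graphs satisfy $b-j\ge(2+\eps')\Delta(G_j)$ throughout the peeling phase; the finishing phase costs $O(\Delta(G))\le O(n)$ further calls. What must be checked is that the weight functions $w_j$ are poly-time computable and that the \emph{exact} guarantee $w_j(I_{j+1})\ge w_j(V(G_j))/(b-j)$ available in the sequential setting suffices to drive the congestion potential down far enough.

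For part (2) the genuine new difficulty is depth: the peeling reduction is inherently sequential and uses up to $b$ rounds, which is not $\polylog(n)$ when $b$ is large. I would handle this with a two-regime argument. When $\Delta(G)$ --- hence $b$ --- is $O(\polylog n)$, there are only $\polylog(n)$ colour classes, so one simply runs the peeling loop above but with the $NC^2$ weighted-IT algorithm of Theorem~\ref{it-thm}(2) at each step, for total depth $\tilde O(b\log^2 n)=\polylog(n)$; here one must additionally check that the merely \emph{approximate} guarantee $w_j(I_{j+1})=\Omega\bigl(w_j(V(G_j))/(b-j)\bigr)$ still suffices, which is exactly where the extra unit in the degree hypothesis ($(5+\eps)\Delta$ rather than $(4+\eps)\Delta$) buys room. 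When $\Delta(G)$ is large, I would instead recurse on the colour palette: split $[b]$ into two halves and split $V$ into $V',V''$ so that every block is split exactly in half and $G[V'],G[V'']$ each have blocks of size $b/2$ and maximum degree at most $\tfrac{\Delta}{2}\bigl(1+\tfrac{\eps}{\log n}\bigr)$. Such a balanced split is exactly a defective-colouring-type LLL instance --- give each vertex a uniformly random one of the two halves of its block's slot set, with a bad event for each vertex saying it has too many neighbours on one side --- and the bad event is computed by a counter automaton with $\poly(\Delta)=\poly(d)$ states, so it is handled by the parallel LLL derandomization of Theorem~\ref{simp-thm2} exactly as in the defective-colouring application of Section~\ref{app1-sec}. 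Recursing $O(\log b)=O(\log n)$ levels at $\tilde O(\log^2 n)$ per level yields the claimed $NC^3$ bound, and the per-level degree loss $\bigl(1+\tfrac{\eps}{\log n}\bigr)$ compounds to only $1+O(\eps)$ over all levels as long as $\Delta(G)\gtrsim \polylog(n)/\eps^2$ --- which is exactly the regime not covered by the first case, so the two regimes overlap.

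The step I expect to be the main obstacle is making the reduction from strong colouring to \emph{weighted} IT robust under the weaker guarantees of the parallel setting: the congestion/obstruction argument must be reworked so that an $\Omega(w(V)/b)$-weight IT (rather than an exact $w(V)/b$-weight one) still prevents the peeling from stalling and so that the extra unit of degree slack is the correct compensation. A secondary technical point is confirming that the balanced-split LLL instance in the large-$\Delta$ case really clears the Shearer/symmetric threshold with the tight concentration $1+\eps/\log n$ the recursion needs --- a routine Chernoff-with-negative-association estimate, but one that forces the stated lower bound on $\Delta(G)$.
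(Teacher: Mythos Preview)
Your reduction is not the one in \cite{aharoni} and \cite{it3}, and the gap this creates is exactly at the step you flag as ``finished directly.'' The Aharoni--Berger--Ziv method does \emph{not} peel off complete independent transversals $I_1,I_2,\dots$ from $G$; that approach stalls precisely because the residual block size drops while the degree does not, and no choice of weight functions $w_j$ on the peeled graphs is known to rescue the last $\Theta(\Delta)$ classes. What ABZ actually do is maintain a \emph{partial} strong colouring $\chi$ and \emph{augment} it: pick a colour $i$, form an auxiliary graph $G'$ (deleting from each block only those vertices whose current colour clashes with a neighbour of the block's $i$-coloured vertex), find an IT $I$ of $G'$, and swap: vertices of $I$ get colour $i$, and any displaced $i$-coloured vertex inherits the old colour of its replacement. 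The key fact (Proposition~\ref{gc1}) is that $|\chi^{-1}(0)|$ drops by the number of $v\in I$ that were uncoloured and lie in a block missing colour $i$. Since $G'$ has block size $b-\Delta(G)\ge(2+\eps)\Delta(G)$, the weighted-IT subroutine applies to $G'$ throughout --- there is no endgame to worry about.

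For part~(1) the paper simply takes an arbitrary uncoloured $u$, an arbitrary colour $i$ missing from its block, and weight $w(v)=[[v=u]]$; Theorem~\ref{it-thm}(1) then returns an IT $I\ni u$, so $|\chi^{-1}(0)|$ drops by at least one, and $n$ augmentations finish. For part~(2), your recursive vertex-splitting for large $b$ is essentially what the paper does (Proposition~\ref{itlem4}), recursing $O(\log b)$ times down to constant block size. But for constant $b$ the paper does not peel: it runs the ABZ augmentation, choosing at each step the colour $i$ that \emph{maximises} $\Phi_{\ell,i}=\sum_U[[\chi^{-1}(i)\cap U=\emptyset]]\cdot|\chi^{-1}(0)\cap U|$, and uses the $\Omega(w(V)/b)$-weight IT from Proposition~\ref{itlem0} with $w(v)$ indicating ``$v$ uncoloured and its block lacks $i$.'' An averaging argument gives $\Phi_{\ell,i}\ge y_\ell^2/(bn)$-type bounds which, for constant $b$, force $y_{\ell+1}\le(1-\Omega(1))y_\ell$, hence $O(\log n)$ augmentation rounds. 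This is where the approximate guarantee suffices and where the extra unit in $(5+\eps)$ versus $(4+\eps)$ is spent: $G'$ has block size $b-\Delta\ge 4\Delta$. Your ``peel until threshold, then finish directly'' would need a separate argument for the finish, which you never supply.
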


\subsection{Limitations of our approach.}
The LLL is a very general principle with many extensions and variations. For completeness, we note some of the scenarios \emph{not} covered by our algorithm.
\begin{enumerate}
\item \textbf{The Lopsided Lov\'{a}sz Local Lemma}. Our definition of dependency is that $B$ is dependent with $B'$ if $\var (B) \cap \var(B') \neq \emptyset$. This can be relaxed to a slightly weaker relation known as \emph{lopsidependency}: $B$ is dependent with $B'$ when the two events \emph{disagree} on the value of some common variable. This provides correspondingly stronger bounds for applications such as $k$-SAT \cite{gst, harris-llll-seq}. Our general technique of enumerating witness dags may still work in this setting; however, this introduces a number of technical complications and we do not explore it here.

\item \textbf{Non-variable probability spaces.} A few applications of the LLL use probability spaces which are not defined by independent variables. For example, Erd\H{o}s \& Spencer \cite{erdos-spencer} showed the existence of certain types of  Latin transversals by applying the LLL for the uniform distribution on permutations. Such ``exotic'' probability spaces now have quite general efficient sequential \cite{harvey} and parallel \cite{harris-llll-par,ahk} randomized algorithms. No deterministic algorithms are known in these settings.

\item \textbf{Super-polynomial value of $m$ or $\sigma$.} We assume that the bad-event set $\mathcal B$ and the alphabet $\Sigma$ are provided explicitly as input. Thus, the algorithm may have runtime  polynomial in $m$ and $\sigma$. Efficient randomized algorithms are often possible given \emph{implicit} access to $\mathcal B$ and/or $\Sigma$. Specifically, for a given configuration $X \in \Sigma^n$, we need to determine which events in $\mathcal B$ are true, if any, and we need to sample variables from $\Omega$; see for example \cite{mtdist1, mtdist2, it3}. These procedures seem very difficult to derandomize in a generic way.
    \end{enumerate}
  
\subsection{Notation and Conventions}
We write $\tilde O(x) = x \polylog(x)$.  For integer $t$ we define $[t] = \{1, \dots, t \}$. 
For vectors $q_1, q_2$, we write $q_1 \leq q_2$ if $q_1(k) \leq q_2(k)$ for all coordinates $k$.

Our parallel algorithms are based on the EREW PRAM model, and we say an algorithm has $NC^k(\phi)$ complexity for some parameter $\phi$ if it uses $\poly(\phi)$ processors and $\tilde O(\log^k \phi)$ runtime. We say that an algorithm is in $NC^k$ if it has $NC^k(\phi)$ complexity where $\phi$ is the size of the input. The precise details of this computational model are not critical; briefly, it allows multiple processors with random read/write access to a single common memory, where any given cell can be accessed by at most one processor (either reading or writing) in any given time-step. 

 To avoid trivialities, we assume throughout that $p(B) \in (0,1)$ for all bad-events $B$. For any variable $k \in [n]$, we define $\mathcal B_k$ to be the set of bad-events $B \in \mathcal B$ with $k \in \var(B)$.

For a graph $G = (V,E)$ and a vertex $v \in V$, we define $N(v)$ to be the neighborhood of $v$, i.e. the set of vertices $u$ with $(u,v) \in E$. We define the maximum degree $\Delta(G) = \max_{v \in V} |N(v)|$; if $G$ is understood we write simply $\Delta$. For a vertex set $U \subseteq V$, we write $G[U]$ for the induced subgraph on $U$.

\section{Background on the Shearer criterion and the MT algorithm}
\label{shearer-sec}
The randomized MT algorithm is simple to describe (and hard to analyze). By contrast,  numerous definitions are required  simply to state the deterministic LLL algorithm. We begin with a self-contained overview of the Shearer criterion and its connection to the MT algorithm. We also describe how it relates to the more familiar criteria such as the symmetric and asymmetric  criteria. Most of the results in this section can be found in various forms in \cite{kolipaka, mt2, mt3, harvey}.

\subsection{Witness DAGs and the Shearer criterion}
Following \cite{mt3}, we define a \emph{witness DAG} (wdag) to be a finite directed acyclic graph $G$, where each vertex $v$ has a label $L(v) \in \mathcal B$, and which satisfies the additional condition that for all distinct vertices $v, v' \in G$ there is a directed edge between $v$ and $v'$ (in either direction) if and only if $L(v) \sim L(v')$.

 We let $\mathfrak W$ denote the set of all wdags. For a wdag $G$,  we define the \emph{size $|G|$} to be the number of nodes in $G$, the \emph{weight $w(G)$} to be $\prod_{v \in G} p(L(v))$, and the \emph{depth} to be the maximum path length of $G$.  For a collection of wdags $\mathfrak A \subseteq \mathfrak W$, we define $w(\mathfrak A) = \sum_{G \in \mathfrak A} w(G)$ and  $\maxsize(\mathfrak A) = \max_{G \in \mathfrak A} |G|$.

 We say a set $I \subseteq \mathcal B$ is \emph{stable} if $B \not \sim B'$ for all distinct pairs $B, B' \in I$. If a wdag $G$ has sink nodes $v_1, \dots, v_s$, then $L(v_1), \dots, L(v_s)$ are distinct and $\{ L(v_1), \dots, L(v_s) \}$ is a stable set of $\mathcal B$, which we denote by $\sink(G)$. For a stable set $I$, we let  $\mathfrak W_I$ be the set of all wdags $G$ with $\sink(G) = I$. 
 
With this notation, we can state the following criterion for the LLL:
\begin{definition}
  \label{ks-def}
The \emph{Shearer criterion holds} if  $w(\mathfrak W) < \infty$.
\end{definition}
 
If  the Shearer criterion holds, then $\Pr_{\Omega}( \bigcap_{B \in \mathcal B} \overline B) > 0$; in particular, a good configuration exists. This is the most general criterion for an abstract probability space with an appropriate dependency graph \cite{kolipaka, shearer}.\footnote{The original version of the Shearer criterion was formulated in terms of an object known as the \emph{independent-set polynomial} \cite{shearer}, and the presentation in \cite{kolipaka} is formulated in terms of an object known as a \emph{stable-set sequence}. Definition~\ref{ks-def} is equivalent to both of these.}   The Shearer criterion is also closely related to the algorithmic LLL. In this context, there are two types of wdags that play an especially important role in determining the ``history'' of a given bad-event $B$:

\begin{definition} We define $\mathfrak S_B$ to be the set of wdags $G$ with a single sink node labeled $B$, i.e.  $\mathfrak S_B =  \mathfrak W_{\{B \}}$. We define $\mathfrak C_B$ to be the set of wdags $G$ with $\sink(G) \subseteq \overline \Gamma(B)$, i.e. $\mathfrak C_B= \bigcup_{I \subseteq \overline \Gamma(B)} \mathfrak W_I$. (These wdags are ``collectible'' to $B$.) Equivalently, $\mathfrak C_B$ is the set of wdags obtained by removing the sink node from a wdag in $\mathfrak S_B$.

We also define $\mathfrak S = \bigcup_{B \in \mathcal B} \mathfrak S_B$ and $\mathfrak C = \bigcup_{B \in \mathcal B} \mathfrak C_B$.
\end{definition}

Note that, since $B \in \overline \Gamma(B)$,  we have $\mathfrak S_B \subseteq \mathfrak C_B$.  Also, in view of the relation between wdags in $\mathfrak S_B$ and $\mathfrak C_B$ (via removing/adding a sink node labelled $B$),  we define the fundamental parameter
$$
\mu(B) := w( \mathfrak S_B) = p(B) w(\mathfrak C_B) 
$$

The Shearer criterion, and associated bounds on $\mu$, are difficult to establish directly, so a number of simpler criteria are used in practice. We summarize a few common ones; the proofs appear in Appendix~\ref{shearer-sec-proof}.
\begin{proposition}
  \label{simple-shearer}
The Shearer criterion is satisfied under the following conditions:
  \begin{enumerate}
  \item (Symmetric LLL) If $e \pmax d \leq 1$. Furthermore, in this case $\mu(B) \leq e p(B)$ for all $B \in \mathcal B$.    
  \item If every $B \in \mathcal B$ satisfies $\sum_{A \in \Gamma(B)} p(A) \leq 1/4$.    Furthermore, in this case $\mu(B) \leq 4 p(B)$ for all $B \in \mathcal B$ with $\Gamma(B) \neq \emptyset$.
  \item (Asymmetric LLL) If there is a function $x: \mathcal B \rightarrow (0,1)$  which satisfies
    $$
    \forall B \in \mathcal B \qquad p(B) \leq x(B) \prod_{A \in \Gamma(B)} (1 - x(A))
    $$
    Furthermore, in this case $\mu(B) \leq \frac{x(B)}{1 - x(B)}$ for all $B \in \mathcal B$.
  \item (Bound by variables, c.f. \cite{beck}) If there is some $\lambda > 0$ which satisfies
    $$
    \forall k \in [n] \qquad \sum_{B \in \mathcal B_k} p(B) (1+\lambda)^{|\var(B)|} \leq \lambda.
    $$
    Furthermore, in this case $\mu(B) \leq (1+\lambda)^{|\var(B)|} p(B)$ for all $B \in \mathcal B$.   (Recall that $\mathcal B_k$ is the set of bad-events $B$ with $k \in \var(B)$.)
  \item (Cluster-expansion criterion \cite{bissacot}) If there is a function $\tilde \mu: \mathcal B \rightarrow [0,\infty)$ which satisfies
    $$
    \forall B \in \mathcal B \qquad  \tilde \mu(B) \geq p(B) \sum_{\text{stable $I \subseteq \overline \Gamma(B)$}} \prod_{A \in I} \tilde \mu(A)
      $$

      Furthermore, in this case $\mu(B) \leq \tilde \mu(B)$ for all $B \in \mathcal B$.
  \end{enumerate}
\end{proposition}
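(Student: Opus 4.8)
The plan is to prove item (5) first, as it is the master estimate, and then obtain (1)--(4) by exhibiting a suitable witness function $\tilde\mu$ in each case.

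For (5), I would show by induction on $h$ the strengthened statement $\mu^{(h)}(I) \le \prod_{B \in I}\tilde\mu(B)$ for every stable set $I$; the base case $h = 0$ is immediate since $\mu^{(0)}(I)$ equals $1$ if $I = \emptyset$ and $0$ otherwise, while $\tilde\mu \ge 0$. For the inductive step, Proposition~\ref{mu-recursion} together with the inductive hypothesis gives $\mu^{(h)}(I) \le p(I)\sum_{J \in \stab(I)}\prod_{A \in J}\tilde\mu(A)$, so the crux is the combinatorial bound $\sum_{J \in \stab(I)}\prod_{A \in J}\tilde\mu(A) \le \prod_{B \in I}\big(\sum_{K \in \stab(B)}\prod_{A \in K}\tilde\mu(A)\big)$. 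To see this, write $I = \{B_1,\dots,B_k\}$; each $J \in \stab(I) \subseteq \bigcup_i \overline\Gamma(B_i)$ is partitioned by sending every $A \in J$ to the least index $i$ with $A \in \overline\Gamma(B_i)$, the resulting blocks $J_1,\dots,J_k$ lie in $\stab(B_1),\dots,\stab(B_k)$, and $J$ is recovered as $\bigcup_i J_i$, so $J \mapsto (J_1,\dots,J_k)$ is injective into the product and all terms are nonnegative. Since $p(I) = \prod_i p(B_i)$, applying the cluster-expansion hypothesis to each $B_i$ closes the induction. Letting $h \to \infty$ yields $\mu(B) \le \tilde\mu(B) < \infty$, so the Shearer criterion holds; the same induction also gives $\mu(J) \le \prod_{A \in J}\tilde\mu(A)$ for every stable $J$, whence $\alpha(B) = \sum_{J \in \stab(B)}\mu(J) \le \sum_{J \in \stab(B)}\prod_{A \in J}\tilde\mu(A)$, which is exactly the quantity the per-event checks below control.

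For (1)--(4) it then suffices to verify, for each $B$, that $p(B)\sum_{J \in \stab(B)}\prod_{A \in J}\tilde\mu(A) \le \tilde\mu(B)$, for which I would use the elementary inequality $\sum_{J \in \stab(B)}\prod_{A \in J}\tilde\mu(A) \le \prod_{A \in \overline\Gamma(B)}(1 + \tilde\mu(A))$ (stable subsets of $\overline\Gamma(B)$ form a subfamily of all subsets). For (1), take $\tilde\mu(B) = e\,p(B)$: the right-hand side is at most $(1 + e\pmax)^{|\overline\Gamma(B)|} \le (1 + 1/d)^d \le e$, giving both $\alpha(B) \le e$ and $\mu(B) \le e\pmax < \infty$. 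For (3), take $\tilde\mu(B) = x(B)/(1-x(B))$ and note a stable $J \in \stab(B)$ is either $\{B\}$ or a stable subset of $\Gamma(B)$; using $1 + \tilde\mu(A) = 1/(1-x(A))$, the hypothesis $p(B) \le x(B)\prod_{A \in \Gamma(B)}(1 - x(A))$, and its consequence $p(B) \le x(B)$, the check reduces to $x(B)(\tilde\mu(B) + 1) = \tilde\mu(B)$, which yields $\mu(B) \le x(B)/(1-x(B))$. For (4), take $\tilde\mu(B) = p(B)(1 + \lambda)^{|\var(B)|}$; here the key point is that the events in a stable $J \in \stab(B)$ have pairwise disjoint variable sets, so picking for each $A \in J$ a variable of $\var(A) \cap \var(B)$ gives an injection, whence $\sum_{J \in \stab(B)}\prod_{A \in J}\tilde\mu(A) \le \prod_{k \in \var(B)}\big(1 + \sum_{A \sim k}\tilde\mu(A)\big) \le (1 + \lambda)^{|\var(B)|}$ straight from the hypothesis. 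For (2), I would first observe that a non-isolated $B$ has some $A \in \Gamma(B)$ with $B \in \Gamma(A)$, so $p(B) \le \sum_{B' \in \Gamma(A)} p(B') \le 1/4$, and that every neighbor of a non-isolated event is non-isolated; then set $\tilde\mu(B) = p(B)/(1-p(B))$ for isolated $B$ (where $\stab(B) = \{\emptyset, \{B\}\}$ makes the check an equality, finite since $\pmax < 1$) and $\tilde\mu(B) = 4 p(B)$ for non-isolated $B$, for which the check becomes $4 p(B) + \prod_{A \in \Gamma(B)}(1 + 4p(A)) \le 1 + e^{4 \sum_{A \in \Gamma(B)} p(A)} \le 1 + e \le 4$.

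The main obstacle is the factoring step inside the proof of (5): since the inclusive neighborhoods $\overline\Gamma(B_i)$ generally overlap, the element-to-index map need not be injective, and the argument is rescued only by noting that the induced partition of $J$ is nonetheless well-defined and that the map $J \mapsto (J_1,\dots,J_k)$ is injective into the (larger, nonnegative) product sum. A secondary subtlety, specific to (2), is realizing that the hypothesis already forces $p(B) \le 1/4$ for non-isolated $B$ — without which the bound $\alpha(B) \le 4$ simply fails — and hence that isolated events must receive a separate, possibly unbounded, witness value.
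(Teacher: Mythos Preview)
Your proof is correct and follows essentially the same architecture as the paper's: establish (5) by induction on $h$ via the partition $J = J_1 \cup \dots \cup J_k$, then reduce each of (1)--(4) to (5) by exhibiting a witness $\tilde\mu$. The only minor difference is that the paper routes (1) and (2) through (3) (taking $x(B) = \tfrac{de\,p(B)}{d+1}$ and $x(B) = 2p(B)$ respectively, then invoking $\tilde\mu(B) = x(B)/(1-x(B))$), whereas you plug $\tilde\mu(B) = e\,p(B)$ and $\tilde\mu(B) = 4p(B)$ directly into (5); your shortcuts are slightly cleaner and avoid the intermediate asymmetric-LLL bookkeeping, but the underlying ideas are identical.
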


In analyzing the MT algorithm, it is often useful to consider a hypothetical situation where the bad-events have artificially inflated probabilities, while their dependency structure is left unchanged. For a vector $q: \mathcal B \rightarrow [0,1]$, we define the \emph{adjusted} weight of a wdag $G$ by $w_q(G) = \prod_{v \in G} q(L(v))$.  We likewise define $q_{\min} = \min_{B \in \mathcal B} q(B)$ and $q_{\max} = \max_{B \in \mathcal B} q(B)$ and $\mu_q(B) = w_q(\mathfrak S_B)$. 

We say  $q$ \emph{converges} if $w_{q}(\mathfrak W) < \infty$, and $q$ \emph{converges with $\eps$-slack} if $q^{1-\eps}$ converges for $\eps \in (0,1)$. (The vector $q^{1-\eps}$ is  defined coordinate-wise.)   We record a few useful bounds:
\begin{proposition}
  \label{wwbound}
If $p$ converges with $\eps$-slack, then for $q = p^{1-\eps/2}$, we have the following bounds:
  \begin{enumerate}
  \item $q$ converges with $\eps/2$-slack
  \item $\sum_{B \in \mathcal B_k} \mu_q(B) \leq \frac{2}{\eps (1-\pmax)}$ for each $k \in [n]$
  \item $\sum_{B \in \mathcal B} w_q(\mathfrak C_B) \leq \frac{2 n}{\eps \pmin (1-\pmax)}$
  \item $w_q(\mathfrak C_B) \leq (\frac{4}{\eps(1-\pmax)})^{|\var(B)|}$ for each $B \in \mathcal B$
 \end{enumerate}
\end{proposition}

\subsection{The resampling table}
In the MT algorithm as we have presented it, the new values for each variable are drawn in an online fashion when a bad-event $B$ is encountered. We refer to this as \emph{resampling} $B$. One key analytical technique of Moser \& Tardos \cite{mt} is to instead pre-compute a \emph{resampling table} $R$, which records an infinite list of values $R(i,0), R(i,1), \dots, $ for each variable $i$. Each variable $X(i)$ is initially set to $R(i,0)$; when it is first resampled, it is set to $R(i,1)$, and so on.  There is a natural probability distribution on $R$, which is to draw each entry $R(i,j)$ independently from the distribution of $X(i)$ in $\Omega$; with some abuse of notation, we refer to this as drawing $R \sim \Omega$.

 There are a number of important definitions to tie together the MT algorithm, the resampling table, and the analysis of wdags.
 
For a wdag $G$ and vertex $v \in G$, we define the configuration $X_{v,R}$ by $X_{v,R}(i) = R(i, j_i)$ for all $i$, where $j_i$ is the number of directed edges $(u,v) \in G$ with $L(u) \in \mathcal B_i$. For example, $X_{v,R}(i) = R(i,0)$ if $v$ is a source node of $G$.  We say $G$ is \emph{compatible} with $R$ if, for each node $v \in G$, the event $L(v)$ holds on  configuration $X_{v,R}$.  We define $\mathfrak R(R)$ to be the set of wdags $G$ which are compatible with $R$;  note that this depends only on entries $R(i,j)$ for $j \leq \depth(G)$. If $R$ is understood, we often simply write $\mathfrak R$.

For a wdag $G$ and a set $U$ of vertices of $G$, we define the wdag $G(U)$ to be the induced subgraph on all vertices $w$ with a path to any $u \in U$. We say a wdag $H$ is a \emph{prefix} of $G$, denoted $H \trianglelefteq G$, if $H = G(U)$ for a vertex set $U$. For a set of vertices $u_1, \dots, u_s$ we also write $G(u_1, \dots, u_s)$ as shorthand for $G(\{u_1, \dots, u_s \})$.
 
Given an execution of the MT algorithm up to some time $t$, we call the sequence of resampled bad-events $B_1, \dots, B_t$ an \emph{MT-execution on $R$};  the algorithm has not necessarily terminated at this point and some bad-events may still be true. We define an associated wdag we call the \emph{History-wdag} with vertices $v_1, \dots, v_t$ labeled $B_1,  \dots, B_t$, with a directed edge $(v_i, v_j)$ if $i < j$ and $B_i \sim B_j$.

We quote a few useful results from \cite{mt3}. For completeness, we include proofs in Appendix~\ref{shearer-sec-proof}.
\begin{proposition}
\label{gb1}
\begin{enumerate}
\item If $G$ is compatible with $R$ and $H \trianglelefteq G$, then $H$ is compatible with $R$.
\item The History-wdag is compatible with $R$.
\item For any wdag $G$, we have $\Pr_{R \sim \Omega}(\text{$G$ compatible with $R$}) = w(G)$.
\item For a fixed resampling table $R$, the MT algorithm performs at most $| \mathfrak S \cap \mathfrak R(R) |$ resamplings. Furthermore, it only uses entries $R(i,j)$ with $j \leq \maxsize ( \mathfrak S\cap \mathfrak R(R))$.
\end{enumerate}
\end{proposition}

At this point, let us briefly explain the roles played by the wdag sets $\mathfrak S$ and $\mathfrak C$ for the randomized and deterministic algorithms. Observe that when $R \sim \Omega$ the expected number of resamplings in the MT algorithm is at most
$$
\bE[ |\mathfrak S\cap \mathfrak R| ] = \sum_{G \in \mathfrak S} \Pr( \text{$G$ compatible with $R$} ) = \sum_{G \in \mathfrak S} w(G) = w(\mathfrak S)
$$

In particular, if Shearer's criterion holds, then $w(\mathfrak S) < w(\mathfrak W) < \infty$ and the resampling process terminates with probability one, and it has work factor approximately $w(\mathfrak S)$. This is the basic principle behind the randomized MT algorithm \cite{kolipaka}. For the deterministic algorithm, it is not enough to bound the \emph{expected} number of resamplings; any resampling, however unlikely, still must be explicitly checked. For a given bad-event $B$, the work-factor will be roughly $w(\mathfrak C_B)$ to check $B$, and $w(\mathfrak C)$ for the overall algorithm.

\subsection{The MT-distribution}
\label{mtdist-sec}
The \emph{MT-distribution} \cite{mtdist1} is the distribution on the configuration $X$ at the termination of the MT algorithm, assuming we have fixed some rule for which bad-event to resample at each time.  Consider some event $E$ in the probability space $\Omega$, which is a boolean function of a set of variables $\var(E)$. We introduce a number of related definitions:
\begin{itemize}
\item $\Gamma(E)$ is the set of bad-events $B$ with $\var(B) \cap \var (E) \neq \emptyset$.
  \item $\mathcal B^E \subseteq \mathcal B$ is the set of bad-events $B \in \mathcal B$ with $\Pr_{\Omega}(B \cap \neg E) > 0$.
  \item $\mathfrak C'_E$ is the set of wdags $G$ such that $\sink(G) \subseteq \Gamma(E)$ and $L(v) \in \mathcal B^E$ for all $v \in G$.
    \item $p(E) = \Pr_{\Omega}(E)$
    \item $\mu(E) = \Pr_{\Omega}(E) w(\mathfrak C'_E)$.
\end{itemize}

We record a few bounds; the proofs are analogous to Proposition~\ref{simple-shearer} and Proposition~\ref{wwbound} and are omitted.
\begin{proposition}
  \label{mtdist-simple1}
  \begin{enumerate}
\item If $e \pmax d \leq 1$, then $w(\mathfrak C'_E) \leq e^{e \pmax |\Gamma(E)|}$.
\item If $p$ converges with $\eps$-slack, then $w_q(\mathfrak C'_E) \leq (\frac{4}{\eps(1- \pmax)})^{|\var(E)|}$ for $q = p^{1-\eps/2}$.
\item If $\mathcal B^E$ satisfies Proposition~\ref{simple-shearer}(3) with function $x$, then $w(\mathfrak C'_E) \leq \prod_{B \in \Gamma(E)} (1 - x(B))^{-1}$.
\item If $\mathcal B^E$ satisfies Proposition~\ref{simple-shearer}(4) with parameter $\lambda$, then $w(\mathfrak C'_E) \leq (1+\lambda)^{|\var(E)|}$.
  \item  If $\mathcal B^E$ satisfies Proposition~\ref{simple-shearer}(5) with function $\tilde \mu$, then $w(\mathfrak C'_E) \leq \sum_{\text{stable $J \subseteq \Gamma(E)$}} \prod_{B \in J} \tilde \mu(B)$.
  \end{enumerate}
\end{proposition}

For any $G \in \mathfrak C'_E$, we define a configuration $X_{\text{root}, R}$ by setting $X_{\text{root}, R}(i) = R(i, j_i)$, where $j_i$ is the number of vertices $u \in G$ with $L(u) \in \mathcal B_i$. We say that $G$ is $E$-compatible with $R$ if $G$ is compatible with $R$ (in the usual sense as we have defined it), and in addition the event $E$ holds on configuration $X_{\text{root}, R}$ for $G$. We let $\mathfrak R'_E(R)$ denote the set of all wdags $G \in \mathfrak C'_E$ which are $E$-compatible with $R$; again, if $R$ is understood we just write $\mathfrak R'_E$. Since  $X_{\text{root}, R}$ follows the distribution $\Omega$ and it is determined by entries of $R$ disjoint from the configurations $X_{v,R}$, we have
$$
\Pr_{R \sim \Omega}(\text{$G$ is $E$-compatible with $R$}) = w(G) p(E)
$$

The following is the fundamental characterization of the MT-distribution; for completeness we include a proof in Appendix~\ref{shearer-sec-proof}.
\begin{theorem}[\cite{mtdist3}]
  \label{mtdist-thm1}
  Suppose that an MT-execution on resampling table $R$ produces a configuration $X$ on which $E$ is true. Then some $G \in \mathfrak C'_E$ is $E$-compatible with $R$. 
\end{theorem}

Theorem~\ref{mtdist-thm1} provides a simple and clean bound on the probability of event $E$ in the MT-distribution, namely:
\begin{equation}
\label{mtdistn-tt}
  \Pr_{MT}(E) \leq \sum_{G \in \mathfrak C'_E} \Pr(\text{$G$ is $E$-compatible with $R$})  = \sum_{G \in \mathfrak C'_E} w(G) p(E) =  \mu(E).
\end{equation}

So suppose we are given some set $\mathcal E$  of auxiliary events and non-negative weights $c_E$ for $E \in \mathcal E$.
By Eq.~(\ref{mtdistn-tt}), there exists a good configuration $X$ which additionally satisfies
\begin{equation}
  \label{mtdist-eqn2}
\sum_{E \in \mathcal E} c_E E(X) \leq \sum_{E \in \mathcal E} c_E \mu(E)
\end{equation}
where we write $E(X)$ as the indicator function, i.e. $E(X) = 1$ if $E$ holds on $X$, else $E(X) = 0$. 

By executing $O(1/\delta)$ independent repetitions of the MT algorithm, we can efficiently find  a good configuration satisfying the slightly weaker bound
\begin{equation}
  \label{mtdist-eqn3}
\sum_{E \in \mathcal E} c_E E(X)\leq (1+\delta) \sum_{E \in \mathcal E} c_E \mu(E)
\end{equation}
for any desired $\delta > 0$. We refer to a good configuration satisfying Eq.~(\ref{mtdist-eqn3}) as a \emph{$\delta$-good} configuration. Our goal will be to match this deterministically.

\section{Basic derandomization of MT}
\label{seq-sec}
Our overall strategy follows the same broad outline as \cite{mt, mt2}: when Shearer's criterion is satisfied with slack, then we search for a resampling table $R$ where $\mathfrak S\cap \mathfrak R$ has polynomial size (which must exist by probabilistic arguments), and then run the MT algorithm on $R$.   To complicate matters, $\mathfrak S$ is an infinite set;  we need to work with a family of wdags which allows us to ``finitely cover''  it.

Our algorithm will also handle the MT-distribution, so we suppose we are also given a set $\mathcal E$ of auxiliary events, with corresponding non-negative weights $c_E$ for $E \in \mathcal E$. We may have $\mathcal E = \emptyset$; many of our results will be specialized for this case. Assuming $\mathcal E$ and $\mathcal B$ are fixed, we define $$
\mathfrak F = \Bigl( \bigcup_{B \in \mathcal B} \mathfrak C_B \Bigr) \cup \Bigl( \bigcup_{E \in \mathcal E} \mathfrak C'_E \Bigr)
$$

For a threshold $\tau$ to be specified, we define ${\mathfrak F}^{\text{high}}_{\tau}$ to be the set of wdags $G \in \mathfrak F$ with $w_{p^{1-\eps}}(G) \geq \tau$, and ${\mathfrak F}^{\text{low}}_{\tau}$ to be the set of wdags $G \in \mathfrak F - \mathfrak F^{\text{high}}_{\tau}$ satisfying either of the two properties (i) $w_{p^{1-\eps}}(G) \geq \tau^2$, or (ii) $G$ has a single sink node $v$ and $w_q(G - v) \geq \tau$.  We also define $\mathfrak F_{\tau} = \mathfrak F^{\text{high}}_{\tau} \cup \mathfrak F^{\text{low}}_{\tau}$. 

Intuitively, ${\mathfrak F}^{\text{high}}_{\tau}$ and ${\mathfrak F}^{\text{low}}_{\tau}$ represent possible resamplings for the randomized algorithm which have high and low probabilities respectively. The remaining wdags in $\mathfrak F - \mathfrak F_{\tau}$ would represent resamplings with ultra-low probabilities; we will not need to deal with these explicitly.

 To bound the runtime of the algorithms in this setting, we define two main parameters:
$$
m' = | \mathcal E| + |\mathcal B |, \qquad  W_{\eps}  = \sum_{B \in \mathcal B} w_{p^{1-\eps}} (\mathfrak C_B) + \sum_{E \in \mathcal E} w_{p^{1-\eps}} (\mathfrak C'_E)
$$

\begin{lemma}
\label{gb3}
If ${\mathfrak F}^{\text{low}}_{\tau}\cap \mathfrak R = \emptyset$, then $\mathfrak F\cap \mathfrak R = {\mathfrak F}^{\text{high}}_{\tau}\cap \mathfrak R$.
\end{lemma}
\begin{proof}
We will show the contrapositive: if $\mathfrak F\cap \mathfrak R \neq {\mathfrak F}^{\text{high}}_{\tau}\cap \mathfrak R$, then the smallest such wdag $G \in (\mathfrak F\cap \mathfrak R) - \mathfrak F^{\text{high}}_{\tau}$ is in $F^{\text{low}}_{\tau}$.

If $s > 1$, then consider $H_1 = G(v_1)$ and $H_2 = G(v_2, \dots, v_s)$. These are both in $\mathfrak F$ and $\mathfrak R$. Since every node of $G$ is a node of $H_1$ or $H_2$ or both, we have $w_q(H_1) w_q(H_2) \leq w_q(G)$. Also, since $H_1$ omits $v_2$ while $H_2$ omits $v_1$, we have $|H_1| < |G|$ and $|H_2| < |G|$. By minimality of $|G|$, it must be that $H_1 \in \mathfrak F^{\text{high}}_{\tau}$ and $H_2 \in \mathfrak F^{\text{high}}_{\tau}$. So $w_q(H_1) \geq \tau$ and $w_q(H_2) \geq \tau$ which implies $w_q(G) \geq \tau^2$ and so $G \in \mathfrak F_{\tau}^{\text{low}}$.

If $s = 1$, then consider $H = G - v$. Here $H \in \mathfrak C_{L(v)} \subseteq \mathfrak F$. Since $H$ is a prefix of $G$, it is compatible with $R$. By maximality of $|G|$ we have $H \in \mathfrak F_{\tau}^{\text{high}}$ and so $w_q(G-v) \geq \tau$. Thus again $G \in \mathfrak F_{\tau}^{\text{low}}$.
\end{proof}

\begin{corollary}
\label{gb4}
Suppose that ${\mathfrak F}^{\text{low}}_{\tau}\cap \mathfrak R = \emptyset$. Then a necessary condition for $E$ to hold on the output of the MT algorithm is that ${\mathfrak F}^{\text{high}}_{\tau}\cap  \mathfrak R'_E \neq \emptyset$
\end{corollary}
\begin{proof}
If $E$ holds on the output, then by Theorem~\ref{mtdist-thm1} there exists some wdag $G \in \mathfrak R'_E$. In particular, $G \in \mathfrak F \cap \mathfrak R$. By Lemma~\ref{gb3}, we know that $G \in \mathfrak F^{\text{high}}_{\tau}$ as well.
\end{proof}

We summarize a few bounds on $\mathfrak F_{\tau}$ in terms of $W_{\epsilon}$, adapted and strengthened from the ``counting-by-weight'' arguments of \cite{mt2, mt3}. The proofs appear in Appendix~\ref{shearer-sec-proof}.
\begin{proposition}
  \label{gb2}
  Let $q = p^{1-\eps}$ for $\eps \in (0, \tfrac{1}{2})$ and let $\tau \in (0, \tfrac{1}{2})$. Then:
  \begin{enumerate}
  \item $W_{\eps} \geq \max\{ m', \frac{1}{1 - \pmax}, w_q(\mathfrak F) \}$
  \item $w( \mathfrak F^{\text{low}}_{\tau} ) \leq \tau^{\eps} W_{\eps}$
    \item $|\mathfrak F_{\tau}| \leq O(W_{\eps} / \tau^2)$
    \item $\maxsize(\mathfrak F_{\tau}) \leq O( \min \{ |\mathfrak F_{\tau}|,  W_{\eps} \log \tfrac{1}{\tau} \} )$
\end{enumerate}
\end{proposition}

Our algorithm requires a subroutine to enumerate wdags; since this is similar to prior works, we defer this to  Appendix~\ref{gb2a-proof}. The algorithm also has a problem-specific component; namely, it requires computing conditional probabilities for events in $\mathcal B$. Given any event $B \in \mathcal B$, and for $X \sim \Omega$, we need to compute the conditional probability of $B$ of the form:
$$
\Pr( \text{$B$ holds on $X$} \mid X(i_1) = j_1, \dots, X(i_k) = j_k )
$$

We call an algorithm to compute these quantities a \emph{partial-expectations oracle (PEO)}. Furthermore, the condition that a wdag $G$ is compatible with $R$ can be viewed as the conjunction of events $L(v)$ on the configurations $X_{v,R}$. We can compute conditional probabilities that $G$ is compatible with $R$, given that $R \sim \Omega$ and certain values of $R$ have been fixed, with $|G|$ executions of the PEO.\footnote{In some cases, we may not be able to exactly compute the conditional expectations of the bad-events, but we have some pessimistic estimator, that is, some statistic $S(X) \geq 0$ whose conditional expectations we can compute and where $S(X) \geq 1$ whenever $B$ holds. Our algorithms will still work in this case (except that the probability of $B$ would be replaced by $\mathbf E[S(X)]$).  To simplify our exposition, we do not explicitly discuss this variant.}

We are now ready to state our main derandomization algorithm:

\begin{theorem}
  \label{main-det-thm5}
  Suppose  $\mathcal B \cup \mathcal E$ has a PEO with runtime $T$.  Then there is a deterministic algorithm which takes an input parameter $\delta \in (0,1)$ and produces a $\delta$-good configuration, with runtime $(W_{\eps}/\delta)^{O(1/\eps)} n \sigma T $.
\end{theorem}
\begin{proof}
Let us define $\beta = W_{\eps}/\delta$.  We summarize our algorithm as follows:
  \begin{enumerate}
  \item Find a threshold $\tau$ with $w( \mathfrak F^{\text{low}}_{\tau}) \leq \delta/2$ and $| \mathfrak F_{\tau} | \leq \beta^{O(1/\eps)}$.
  \item Generate $\mathfrak F_{\tau}$.
  \item Select a resampling table $R$ to minimize a potential function $\Phi(R)$ defined in terms of $\mathfrak F_{\tau}$.
  \item Run the MT algorithm on $R$.
  \end{enumerate}
  
For steps (1) and (2), we use an exponential back-off strategy to set $\tau$. Namely, for $i = 0, 1,2, \dots, $ we use Theorem~\ref{enum-alg} to generate the wdag set $\mathfrak F_{\tau}$ where $\tau= 2^{-i}$, and then check if $w(\mathfrak F^{\text{low}}_{\tau}) \leq \delta/2$; if so, we stop the process.  Note that $\bar \tau = (\beta/4)^{-1/\eps}$ satisfies this condition, and so the process stops by iteration $\lceil \log_2 1/\bar \tau \rceil \leq O(  \frac{\log \beta}{\eps})$ with a final threshold $\tau \geq \bar \tau / 2$. By Proposition~\ref{gb2}(3), the resulting set $\mathfrak F_{\tau}$  has $|\mathfrak F_{\tau}| \leq O( W_{\eps}/\tau^2) \leq O( W_{\eps} / \bar \tau^2)  \leq \beta^{O(1/\eps)}$ and $\maxsize( \mathfrak F_{\tau}) = O( W_{\eps} \log \tfrac{1}{\tau} ) \leq O( \frac{ W_{\eps} \log \beta}{\eps})$. So the overall runtime for these steps is $\beta^{O(1/\eps)}$.

  For step (3), we compute $s = \sum_{E \in \mathcal E}  c_E p(E) w( \mathfrak F^{\text{high}}_{\tau} \cap \mathfrak C'_E)$, and then form the potential function
  $$
  \Phi(R) = | \mathfrak F^{\text{low}}_{\tau} \cap \mathfrak R(R) |+ \frac{1}{2 s} \sum_{E \in \mathcal E} | \mathfrak F^{\text{high}}_{\tau} \cap \mathfrak R'_E(R) |
  $$

Note that for $R \sim \Omega$ we have $\bE \bigl[ |\mathfrak F^{\text{low}}_{\tau}\cap \mathfrak R| \bigr] = w( \mathfrak F^{\text{low}}_{\tau}) \leq  \delta/2$, and similarly $\bE [ | \mathfrak F^{\text{high}}_{\tau} \cap \mathfrak R'_E(R) | ] = w( \mathfrak F^{\text{high}}_{\tau} p(E)$. So $\bE_{R \sim \Omega}[\Phi(R)] \leq \delta/2 + 1/2$.  We apply the method of conditional expectations, using our PEO, to find a value for the resampling table $R$ such that $\Phi(R)$ is at most is expectation, i.e. $\Phi(R) \leq \delta/2 + 1/2$. Since  $| \mathfrak F^{\text{low}}_{\tau} \cap \mathfrak R |$ is an integer and $\delta < 1$, this implies $\mathfrak F^{\text{low}}_{\tau}\cap \mathfrak R = \emptyset$.

 For step (4), we run the MT algorithm on $R$ in $n T \poly(\maxsize( \mathfrak F_{\tau}),m') \leq n T \beta^{O(1/\eps)}$ time, generating a good configuration $X$. By Corollary~\ref{gb4}, a necessary condition for $E$ to hold on $X$ is that some $G \in \mathfrak C'_E \cap \mathfrak F^{\text{high}}_{\tau}$ is $E$-compatible with $R$. We thus have
  $$
  \sum_{E \in \mathcal E} c_E E(X) \leq  \sum_{E \in \mathcal E} c_E | \mathfrak F^{\text{high}}_{\tau} \cap \mathfrak R'_E | \leq 2 s \Phi(R)
  $$
  
Since $\Phi(R) \leq \delta/2 + 1/2$, we have $\sum_{E \in \mathcal E} c_E E(X) \leq s \delta + s \leq (1+\delta) \sum_{E \in \mathcal E} c_E \mu(E)$.
\end{proof}

If we do not care about the MT-distribution, then setting $\mathcal E = \emptyset$ and $\delta = 1/2$ gives a simplified statement:
\begin{theorem}
  \label{main-det-thm1}
  Suppose  $\mathcal B$ has a PEO with runtime $T$.  Then there is a deterministic algorithm which runs in time $W_{\eps}^{O(1/\eps)} n \sigma T$ to find a good configuration.
\end{theorem}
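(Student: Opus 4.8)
### Proof Proposal

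\textbf{Overall strategy.} The plan is to realize the informal outline from the start of Section~\ref{seq-sec}: use the method of conditional expectations to construct, entry by entry, a resampling table $R$ (truncated to a polynomial number of rows) such that $\mathfrak F^1_\tau/R = \emptyset$; then invoke Proposition~\ref{gb2}(6) together with Theorem~\ref{par-resampling-fast} to conclude that running the MT algorithm on $R$ terminates quickly with a good configuration. The potential function driving the conditional-expectations argument will be $\sum_{G \in \mathfrak F^1_\tau} [[\Phi(G,R)]]$, i.e., the number of ``bad witnesses'' compatible with $R$; the goal is to drive this count to $0$.

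\textbf{Key steps, in order.} First I would fix $\eps' = \eps/2$ (say) and set $q = p^{1-\eps'}$, and choose the threshold $\tau$ to be a polynomially small quantity, roughly $\tau = 1/(n\sigma W_{\eps})^{C}$ for a suitable constant $C$; by Proposition~\ref{gb2}(1), $w(\mathfrak F^1_\tau) \leq \tau^{\eps'} W_{\eps}$, which we want to make strictly less than $1$ (in fact $< 1/\poly$), so that the \emph{expected} number of compatible witnesses in $\mathfrak F^1_\tau$, namely $\sum_{G \in \mathfrak F^1_\tau} \Pr_{R\sim\Omega}(\Phi(G,R)) = \sum_{G} w(G) = w(\mathfrak F^1_\tau)$ (using Proposition~\ref{gb1}(3)), is below $1$. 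Second, I would observe via Proposition~\ref{gb2}(4),(5) that $\mathfrak F_\tau$ (hence $\mathfrak F^1_\tau$) has only $O(W_\eps/\tau^2)$ wdags, each of size $O(g\log\frac1\tau)$, so the whole family can be enumerated in $\poly(W_\eps, n, \sigma)$ time, and only rows $R(i,j)$ with $j \leq \maxsize(\mathfrak F_\tau) = O(g\log\frac1\tau)$ are ever referenced. Third, I would apply the method of conditional expectations over the finitely many relevant table entries $R(i,j)$ (there are at most $n \cdot O(g\log\frac1\tau)$ of them, each ranging over $\Sigma$): process them one at a time, and for each, pick the value in $\Sigma$ minimizing the conditional expectation of $\sum_{G\in\mathfrak F^1_\tau}[[\Phi(G,R)]]$. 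Here I would use the remark immediately preceding the theorem: because $\Phi(G,R) = \bigwedge_{v\in G} \big(L(v) \text{ holds on } X_{v,R}\big)$ and the configurations $X_{v,R}$ depend on disjoint entries of $R$, each conditional probability $\Pr(\Phi(G,R)\mid \text{fixed entries})$ is computable in time $|G|\cdot T = O(g\log\tfrac1\tau)\cdot T$ via the partial-expectations oracle, by multiplying the per-vertex conditional probabilities. Summing over the $O(W_\eps/\tau^2)$ wdags and the $O(ng\log\tfrac1\tau)$ entries and $\sigma$ alphabet choices gives the claimed runtime, once $\tau$ is substituted; since $g \leq W_\eps$ and $1/\tau = (n\sigma W_\eps)^{O(1)}$, everything collapses into $W_\eps^{O(1/\eps)} n\sigma T$ (the $1/\eps$ in the exponent comes from the $\tau^{\eps}$ slack needing $1/\tau \sim W_\eps^{\Theta(1/\eps)}$). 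Fourth, after the process, the conditional expectation is $\leq$ its initial value $w(\mathfrak F^1_\tau) < 1$, and since it is a non-negative integer it equals $0$: the constructed (finite) table $R$ has $\mathfrak F^1_\tau/R = \emptyset$. Fifth, extend $R$ arbitrarily to an infinite table (the extra rows are never used). By Proposition~\ref{gb2}(6), $\mathfrak C/R \subseteq \mathfrak F^0_\tau/R$, and in particular $\mathfrak S/R \subseteq \mathfrak F^0_\tau/R \subseteq \mathfrak F^0_\tau$, so $|\mathfrak S/R| \leq |\mathfrak F^0_\tau| = O(W_\eps/\tau^2) = \poly$. By Theorem~\ref{par-resampling-fast}, running Algorithm~\ref{mt-seq2} on $R$ performs at most $|\mathfrak S/R|$ resamplings and references only rows $j \leq \maxsize(\mathfrak S/R) = O(g\log\frac1\tau)$, which are exactly the rows we constructed; each resampling step scans $\mathcal B$ and evaluates bad-events (absorbable into $T$ or handled directly). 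When it halts, $R(\bullet,0)$ is a good configuration. Finally, prepend the $O(n\sigma T)$ preprocessing step that fixes the isolated variables, which is dominated.

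\textbf{Main obstacle.} I expect the delicate point to be the bookkeeping that makes the conditional-expectations step actually \emph{computable} and the runtime bound clean: one must verify that enumerating $\mathfrak F_\tau$ is feasible (this is a ``generate collectible wdags of bounded size/weight'' routine, polynomial because both the count and the sizes are polynomially bounded by Proposition~\ref{gb2}(4),(5), but it requires care since naive enumeration of all small labeled DAGs could be wasteful), and that the per-vertex factorization of $\Phi(G,R)$ legitimately reduces each conditional probability to $O(|G|)$ oracle calls on disjoint fixed-entry sets. A secondary subtlety is choosing $\tau$ so that $\tau^{\eps} W_\eps < 1$ while keeping $1/\tau$ only quasi-polynomial in $W_\eps$ and polynomial in $n,\sigma$ — this is where the $W_\eps^{O(1/\eps)}$ factor in the runtime originates, and one must confirm it does not blow up further when combined with the $|\mathfrak F_\tau| = O(W_\eps/\tau^2)$ enumeration cost. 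The correctness core (integer-valued non-negative potential $< 1$ forces it to $0$; $\mathfrak F^1_\tau/R=\emptyset \Rightarrow \mathfrak S/R$ small; MT on a good-enough table terminates with a good configuration) is essentially immediate from the quoted propositions.
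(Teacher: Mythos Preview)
Your approach is essentially the paper's: enumerate $\mathfrak F_\tau$, drive the potential $\sum_{G\in\mathfrak F^1_\tau}[[\Phi(G,R)]]$ below $1$ by conditional expectations over the first $\maxsize(\mathfrak F_\tau)$ rows of $R$, then invoke Proposition~\ref{gb2}(6) and Theorem~\ref{par-resampling-fast}. The correctness core and the runtime accounting you sketch are right.

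There is, however, one genuine gap. You set $\tau \approx 1/(n\sigma W_\eps)^C$, but the algorithm has no access to $W_\eps$: as the paper stresses just before Section~\ref{seq-sec}, the quantities $\mu,\alpha,W_\eps$ are infinite sums that cannot be computed efficiently and must appear only in the \emph{analysis}, never in the algorithm itself. So your step (1) is not executable as stated. The paper closes exactly this gap with an exponential back-off: for $i=0,1,2,\dots$ set $\tau=2^{-i}$, enumerate $\mathfrak F_\tau$ (this uses only the explicit input $\mathcal B$ and the oracle-computable values $p(B)$), and check whether the \emph{computable} quantity $w(\mathfrak F^1_\tau)$ is at most $1/2$; stop at the first success. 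Since Proposition~\ref{gb2}(1) guarantees $w(\mathfrak F^1_\tau)\le\tau^\eps W_\eps$, the search halts by the time $\tau\le(2W_\eps)^{-1/\eps}$, so the final $\tau$ is at least half this and all the polynomial bounds you cite go through. Your ``secondary subtlety'' paragraph almost touches this, but you still frame the choice of $\tau$ as a direct function of $W_\eps$ rather than as a search with a computable stopping rule.

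Two minor remarks: the detour through $\eps'=\eps/2$ is unnecessary (the paper works directly with $\eps$), and there is no need to involve $n,\sigma$ in $\tau$; the target $w(\mathfrak F^1_\tau)\le 1/2$ depends only on $W_\eps$ and $\eps$.
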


Both Theorem~\ref{main-det-thm5} and Theorem~\ref{main-det-thm1} still have many parameters.   The following bounds apply to a few common situations and are easier to apply:
\begin{theorem}
  \label{main-thm4-summary}
  Suppose $\mathcal B \cup \mathcal E$ has a PEO with $\poly(n)$ runtime. Let $\eps > 0$ be an arbitrary constant, and define parameter $\phi = \max\{m', n, 1/\delta, \sigma \}$.
  
  \begin{enumerate}
  \item If $e \pmax d^{1+\eps} \leq 1$, then we can find a $\delta$-good configuration in  $\poly(\phi, e^{\pmax d'})$ time where $d' = \max_{E \in \mathcal E} |\Gamma(E)|$.
    
  \item If $\sum_{B \in \mathcal B_k} p(B)^{1-\eps} (1+\lambda)^{|\var(B)|} \leq \lambda$ for all $k \in [n]$, then we can find a $\delta$-good configuration in $\poly(\phi, (1+\lambda)^h)$ time where $h = \max_{E \in \mathcal B \cup \mathcal E} |\var(E)|$.

   \item If $p$ converges  with $\eps$-slack and $\pmax \leq 1 - \Omega(1)$, then we can find a $\delta$-good configuration in $\poly(\phi, e^h)$ time where $h = \max_{E \in \mathcal B \cup \mathcal E} |\var(E)|$.
  \end{enumerate}
\end{theorem}
\begin{proof}
  \begin{enumerate}
  \item Assume without loss of generality that $\eps < 1/4$. If $d \geq 2$, then $p^{1-\eps/4}$ satisfies the symmetric LLL criterion $e \pmax^{1-\eps/4} d \leq 1$. So $w_{p^{1-\eps}}(\mathfrak C_B) \leq e$ for all $B$. For each $E \in \mathcal E$, Proposition~\ref{mtdist-simple1} gives $w(\mathfrak C'_E) \leq e^{e \pmax d'}$. So overall $W_{\eps} \leq e m + m' e^{e \pmax d'}$. 
  
Similarly, if $d = 1$, then $p^{1-\eps/4}$ satisfies the cluster-expansion criterion with $\tilde \mu(B) = 2 p(B)$ for all $B$; in this case again $W_{\eps} \leq 2 m  + m' e^{e \pmax d'}$. 

  \item Here $p$ converges with $\eps$-slack. By Propositions~\ref{simple-shearer} and ~\ref{mtdist-simple1} we have $w(\mathfrak C'_E) \leq (1+\lambda)^{|\var(E)|}$ for all $E \in \mathcal E$ and $w(\mathfrak C_B) \leq (1+\lambda)^h$ for all $B \in \mathcal B$. So  $W_{\eps} \leq m'  (1+\lambda)^h$.

  \item Let $q = p^{1-\eps/2}$. By Proposition~\ref{wwbound}(4), we have $w_q(\mathfrak C_B) \leq (\frac{4}{\eps (1-\pmax)})^{|\var(B)|}$ for any bad-event $B$. Likewise, by Proposition~\ref{mtdist-simple1} we have $w_q(\mathfrak C'_E) \leq (\frac{4}{\eps (1-\pmax)})^{|\var(E)|}$ for any $E \in \mathcal E$. Thus,  $W_{\eps/2} \leq \frac{m'}{(\eps (1 - \pmax))^{O(h)}} \leq m' e^{O(h)}.$  \qedhere
  \end{enumerate}
\end{proof}

There is an additional useful preprocessing step when $\mathcal E = \emptyset$. We say a bad-event $B$ is \emph{isolated} if $\Gamma(B) = \emptyset$; we say a variable $i \in [n]$ is isolated if $i \in \var(B)$ for some isolated $B$. It is straightforward to use the PEO to determine values for isolated variables to make all isolated bad-events false. Using this preprocessing step and with simplification of parameters, we can obtain crisper bounds:
\begin{theorem}
  \label{main-thm1-summary}
  Suppose $\mathcal B$ has a PEO with $\poly(n)$ runtime. Let $\eps \in (0,1)$ be an arbitrary constant, and let $\phi = \max\{ m,n,\sigma \}$.
  \begin{enumerate}
  \item If $e \pmax d^{1+\eps} \leq 1$, then we can find a good configuration in  $\poly(\phi)$ time.
  
  \item If $\sum_{B \in \mathcal B_k} p(B)^{1-\eps} (1+\lambda)^{|\var(B)|} \leq \lambda$  for all $k \in [n]$, then we can find a good configuration in $\poly(\phi, (1+\lambda)^h)$ time where $h = \max_{B \in \mathcal B} |\var(B)|$.
  \item If $\sum_{A \in \Gamma(B) } p(A)^{1-\eps} \leq 1/4$ for all $B \in \mathcal B$, then we can find a good configuration in $\poly(\phi)$ time.

  \item If $p \leq q$ where the vector $q$ converges with $\eps$-slack, then we can find a good configuration in $\poly(\phi,1/\qmin)$ time.
    
  \end{enumerate}
\end{theorem}
\begin{proof}
  \begin{enumerate}
  \item[(1,2)] These are restatements of Theorem~\ref{main-thm4-summary}.
   
  \item[(3)]  Here, $p$ converges with $\eps$-slack and $w_{p^{1-\eps}}(\mathfrak C_B) \leq 4$ for all non-isolated bad-events $B$. Thus, the residual problem after removing the isolated bad-events has $W_{\eps} \leq 4 m$. 

  \item[(4)] Note that if $q(B) \geq 1 - q_{\text{min}}$ for any bad-event $B$, then $B$ must be isolated. For, if $B \sim B'$, we would have $q(B) + q(B') \geq 1-q_{\text{min}} + q_{\text{min}}^{1-\eps} > 1$. In particular, the Shearer criterion would fail on $q$ just taking into account the two bad-events $B, B'$.  So after removing isolated bad-events, we get $q(B) \leq 1 - q_{\text{min}}$ for all $B$. Now Proposition~\ref{wwbound}(3) gives $\sum_{B \in \mathcal B} w_{q^{1-\eps/2}}(\mathfrak C_B) \leq \frac{2 n}{\eps q_{\text{min}}^2}$. Since $p \leq q$, this implies $W_{\eps/2} \leq \frac{2 n}{\eps q_{\text{min}}^2}$. \qedhere
          \end{enumerate}
\end{proof}

We note one counter-intuitive aspect of Theorem~\ref{main-thm1-summary}(4): setting $q = p$ would give the result we have stated as Theorem~\ref{thm1x}. But this would imply that if some bad-event $B$ has $p(B) = 0$, then the algorithm would have infinite runtime. But this should only help us, as we can simply ignore $B$ in that case.  To explain this paradox, note that in  most applications of the LLL, we do not compute the exact probabilities $p(B)$; instead, we derive (often crude) upper bounds $q(B)$ on them. For example, in the symmetric LLL criterion, we set $q(B) = \frac{1}{e d}$ for all $B$. It is these \emph{upper bounds}, not the actual probabilities, that must be at least $1/\poly(n)$.

\section{Parallel algorithms via log-space statistical tests}
\label{stat-test-sec}
The algorithm of Section~\ref{seq-sec}, based on conditional expectations,  is inherently sequential. We will develop an alternate parallel algorithm, based on a general derandomization method of Sivakumar \cite{sivakumar} for fooling certain types of automata. Let us first provide an overview of this method, and next in Section~\ref{fool-mt-sec} we describe how to apply it to the MT algorithm.

Formally, we define an \emph{automaton} to be a tuple $(F, A,a^{\text{start}})$, where $A$ is a state space, $a^{\text{start}} \in A$ is the initial state, and $F: A \times [n] \times \Sigma \rightarrow A$ is the transition function. For brevity, we often write just $F$ for the automaton.  We define $\size(F)$ to be the cardinality of the state space $A$. We define multi-step transition functions $F^t: \Sigma^n \rightarrow A$ recursively by setting $F^0(X) = a^{\text{start}}$ and $F^{t}(X) = F( F^{t-1}(X), t, X(t))$ for $t \geq 1$.

Less formally, given an input sequence $X = (X(1), \dots, X(n))$, the automaton begins at the designated start state and updates its state $a$ as $a \leftarrow F( a, t, X(t))$ for $t = 1, \dots, n$. The transition function  here may depend on $t$; this is different from the usual definition in theory of formal languages.  To avoid technicalities, we always assume that $F$ can be computed in $NC^2(n)$.

Given an event $E$ determined by variables $X$, we say that automaton $F$ \emph{decides} $E$ if the state space of $F$ includes two terminal states labeled $0$ and $1$, such that $F^n(X) = E(X)$ for all $X \in \Sigma^n$. As a simple example, consider the event $\sum u_j X(j) \geq c$ for some given threshold $c$ and vector $u \in \{0, 1 \}^n$. This can be decided by an automaton whose state maintains the running sum $a = \sum_{j \leq t} u_j X(j)$. For $t < n$, it has transition function $F( a, t, x_t) = a + u_t X(t)$. For $t = n$, it  compares the running sum to the threshold $c$, updating $F(a, n, x_t)$ to be the indicator that $a+u_n X(n) \geq c$.

Such finite automata have surprisingly broad applications and powerful algorithmic tools. As two simple examples, we have the following:
\begin{observation}
  \label{simple-dynamic}
 Suppose automaton $F$ decides $E$ and has size $\eta$.
  \begin{enumerate}
  \item  In $\tilde O(\log n \log (n \sigma \eta))$ time and $\poly(n,\eta, \sigma)$ processors we can compute $\Pr_{\Omega}(E)$ for any given product distribution $\Omega$ over $\Sigma^n$.
  \item In $\tilde O(\log n \log (n \sigma \eta))$ time and $\poly(n,\eta, \sigma)$ processors, we can find a configuration $X$ avoiding $E$, if any such configuration exists.
  \end{enumerate}
\end{observation}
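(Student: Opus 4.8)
The plan is to treat the automaton $F$ as evolving a ``state object'' across the $n$ time steps and to compose the $n$ transition steps by a balanced binary tree, in $O(\log n)$ rounds of matrix multiplication. For part (1), I would track the distribution of the automaton's state. Let $M_t$ be the $A \times A$ column-stochastic matrix with $(M_t)_{a',a} = \Pr_{X(t)\sim\Omega}[F(a,t,X(t)) = a'] = \sum_{x : F(a,t,x)=a'} \Pr_\Omega[X(t)=x]$; each entry is a sum of at most $\sigma$ marginal probabilities, and since evaluating $F$ costs $\tilde O(\log^2 n)$ and there are $\eta\sigma$ entries per matrix, all of $M_1,\dots,M_n$ can be formed with $\poly(n,\eta,\sigma)$ processors in $\tilde O(\log(n\sigma\eta))$ depth. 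Writing $\pi_0$ for the indicator vector of $a_{\text{start}}$, we have $\Pr_\Omega(E) = \Pr[F^n(X)=1] = (M_n M_{n-1}\cdots M_1\, \pi_0)_1$, the coordinate of the terminal state labeled $1$. Computing this product by a balanced binary tree uses $O(\log n)$ levels, each a batch of independent $\eta\times\eta$ matrix multiplications; a single such multiplication has $O(\log\eta)$ arithmetic depth, and standard parallel integer/rational arithmetic keeps each operation within depth $\tilde O(\log(n\sigma\eta))$ even accounting for the (still polynomially bounded) growth in bit-length of the accumulated entries. This gives total depth $\tilde O(\log n\cdot\log(n\sigma\eta))$ and $\poly(n,\eta,\sigma)$ processors.

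For part (2), I would replace real arithmetic by Boolean reachability. Define $T_t \in \{0,1\}^{A\times A}$ by $(T_t)_{a',a} = 1$ iff $\exists x\in\Sigma$ with $F(a,t,x) = a'$; as above all $T_t$ are computable in $\tilde O(\log(n\sigma\eta))$ depth with $\poly(n,\eta,\sigma)$ processors. Using a balanced composition tree I would compute, for every dyadic interval $[i+1,j]$, the Boolean product $T_{[i+1,j]} := T_j T_{j-1}\cdots T_{i+1}$; over the Boolean semiring each multiplication is again $O(\log\eta)$ depth, and there are $O(\log n)$ tree levels. In particular $P := T_{[1,n]}$ satisfies $(P)_{0,a_{\text{start}}} = 1$ iff there is a configuration $X$ reaching the terminal state $0$, i.e.\ a configuration avoiding $E$; if this entry is $0$, the algorithm correctly reports that no such configuration exists.

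It remains to construct a witness configuration when one exists, and this is the one step I expect to be genuinely delicate, since the naive backward dynamic program that certifies existence is inherently sequential. I would instead reconstruct a valid state-trajectory $a_0 = a_{\text{start}}, a_1, \dots, a_n = 0$ by divide-and-conquer on the precomputed dyadic products: to fill in the trajectory on an interval $[i,j]$ given its endpoints $a_i, a_j$ with $(T_{[i+1,j]})_{a_j,a_i} = 1$, pick the midpoint $k$ and any state $a_k$ with $(T_{[i+1,k]})_{a_k,a_i} = 1$ and $(T_{[k+1,j]})_{a_j,a_k} = 1$ --- such an $a_k$ exists precisely because the Boolean matrix product equals $1$ --- then recurse on both halves in parallel. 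The recursion has depth $O(\log n)$, and at each node the search for $a_k$ is a parallel selection over at most $\eta$ candidates, costing $O(\log\eta)$; hence the whole trajectory is produced in $\tilde O(\log n\cdot\log\eta)$ depth. Finally, in parallel over $t$, set $X(t)$ to any $x\in\Sigma$ with $F(a_{t-1},t,x) = a_t$ (which exists by construction), at cost $O(\log\sigma)$. Then $F^n(X) = a_n = 0$, so $X$ avoids $E$, and the total complexity is $\tilde O(\log n\cdot\log(n\sigma\eta))$ with $\poly(n,\eta,\sigma)$ processors, as claimed.
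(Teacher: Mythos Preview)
Your proposal is correct and follows essentially the same dyadic-doubling approach as the paper: compute transition information (probabilities for part (1), reachability and witnesses for part (2)) over intervals of length $2^h$ for $h=0,1,\dots,\lceil\log_2 n\rceil$, combining halves at each level. The only cosmetic difference is that you phrase part (1) as a matrix product and split part (2) into a reachability phase followed by a separate path-reconstruction phase, whereas the paper bundles the witness configurations directly into the level-$h$ table; both variants are standard and yield the stated bounds.
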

\begin{proof}
  For the first result, we recursively compute the probability of transiting from state $a_1$ at time $t$ to state $a_2$ at time $t + 2^h$, for all values $t = 0, \dots, n$ and all pairs of states $a_1, a_2$.  Taking $h = 0, \dots, \lceil \log_2 n \rceil$ contributes the $O(\log n)$ term in the runtime. For the second result,  we compute a configuration $X(t), \dots, X(t+2^h)$ which transits from any state $a_1$ at time $t$ to any state $a_2$ at time $t+2^h$ (if such exists). Again, enumerating over $h$ contributes $O(\log n)$ runtime.
\end{proof}

Note in particular that Observation~\ref{simple-dynamic} provides a PEO for the event $E$.

Nisan \cite{nis1, nis2} showed a much more powerful property of such automata, which is that they admit the construction of a ``fooling'' probability distribution $D$. We define this formally as follows:
\begin{definition}
Distribution $D$ fools the automaton $F$ to error $\eps$ if, for any state $s \in A$, we have
$$
\bigl| \Pr_{X \sim D} (F^n(X) = s) - \Pr_{X \sim \Omega}(F^n(X)=s) \bigr| \leq \eps
$$
\end{definition}

Nisan's original works \cite{nis1, nis2} did not give precise complexity bounds. Later work of \cite{mrs, harris2} further optimized the construction of $D$. We use the following slight variants of results of \cite{harris2}:

\begin{theorem}
  \label{fool-thm2}
Let $\Omega$ be a probability distribution on independent variables $X(1), \dots, X(n)$ over alphabet $\Sigma$ with $|\Sigma| = \sigma$. (The variables $X(i)$ may have different distributions.) Let $F_1, \dots, F_k$ be automata deciding events $E_1, \dots, E_k$. Define $\phi = \max\{ \size(F_i), n, k, \sigma\}$. 
  \begin{enumerate}
  \item There is a deterministic parallel algorithm to find a distribution $D$ of support size $\poly(\phi)$ which fools the automata  to any desired error $\eps > 0$. The algorithm has a complexity of  $\tilde O(\log (\phi/\eps) \log n)$ time and $\poly(\phi, 1/\eps)$ processors.
    \item There is a deterministic parallel algorithm which, for input parameters $\rho \in (0,1)$ and $s_1, \dots, s_k \geq 0$, produces a configuration $X \in \Sigma^n$ with $\sum_i s_i E_i(X) \leq (1+ \rho) \sum_i s_i \Pr_{\Omega} (E_i)$. The algorithm complexity is $\tilde O(\log (\phi/\rho) \log n)$ time and $\poly(\phi, 1/\rho)$ processors.
\end{enumerate}
\end{theorem}

The proofs use standard techniques and are deferred to Appendix~\ref{appendix2}.

In this context, the automata $F_i$ in this case are also called \emph{logspace statistical tests}, viewing the input variables $X(1), \dots, X(n)$ as an incoming data stream where each automaton computes some test statistic.  We next turn to applying this machinery to derandomize the MT algorithm.

\section{Logspace statistical tests for the Moser-Tardos algorithm}
\label{fool-mt-sec}
Now instead of requiring some (sequential) PEO, we suppose that each event $E \in \mathcal B \cup E$ can be decided by some automaton $F_E$.  We measure the complexity with the following definition:
\begin{definition}[Size bound for automata]
  \label{aut-complex-def}
The automata for $\mathcal B \cup E$ have \emph{size bound} $(r, \eta)$ if every $B \in \mathcal B$ has $\size(F_B) \leq \min\{ p(B)^{-r}, \eta \}$ and every $E \in \mathcal E$ has $size(F_E) \leq \eta$.
\end{definition}

By Observation~\ref{simple-dynamic}(1), we can efficiently compute $p(E)$ for any  $E \in \mathcal B \cup \mathcal E$.

The key algorithmic idea is to transform the automata $F_B$ into larger automata which decide if wdags are compatible with $R$. We will need to ensure that these larger automata read the entries of $R$ in the same order. We will use the \emph{lexicographic order} on the entries $R(i,j)$: i.e. the order $$
(1,0), (2,0), (3,0), \dots, (n,0), (1,1), \dots, (n,1), \dots.
$$

\begin{proposition}
  \label{eta-g-bound}
For any wdag $G$, there is an automaton $F_G$ to decide if $G$ is compatible with $R$, which has $\size(F_G) = \prod_{v \in G} \size(F_{L(v)})$. For any event $E$ and wdag $G$, there is an automaton $F_{E,G}$ to decide $G$ is $E$-compatible with $R$, which has $\size(F_{E,G}) = \size(F_E) \size(F_G)$.

These both read the entries $R(i,j)$ in the lexicographic order up to $j \leq \depth(G)$.
\end{proposition}
\begin{proof}
  For each node $v \in G$, the automaton $F_G$ has a state variable $a_v$ from state-space $A_{L(v)}$ of automaton $F_{L(v)}$. So $\size(F_G) = \prod_{v \in G} \size(F_{L(v)})$, and the state is a tuple $a = (a_v \mid v \in G)$. The starting state is $(a^{\text{start}}_v \mid v \in G)$ where $a^{\text{start}}_v$ is the starting state of automaton $F_{L(v)}$. When we process $R(i,j)$, we determine if there is any $v \in G$ such that $X_{v,R}(i)$ is determined to be $R(i,j)$. If there is such $v$ (necessarily unique), we update $a$ by updating $a_v \leftarrow F_{L(v)}(a_v, i, R(i,j))$.  At the end of the process, we get $G$ compatible with $R$ iff $a_v = 1$ for every $v \in G$.
  
The automaton $F_{E,G}$ also has an additional state variable $a_E$ from state-space $A_E$ of automaton $F_E$ which checks $X_{\text{root},R}$. The analysis is completely analogous.
\end{proof}

\begin{lemma}
  \label{eta-g-bound2}
  Suppose $p$ converges with $\eps$-slack for $\eps < 1/2$, and $\mathcal B \cup \mathcal E$ has automata with size bound $(r, \eta)$. Then the automata $F_G$ and $F_{E,G}$ in Proposition~\ref{eta-g-bound} read the entries $R(i,j)$ in lexicographic order up to $j \leq \maxsize( \mathfrak F_{\tau} )$ and have size at most $\eta^2 \tau^{-4 r}$.
\end{lemma}
\begin{proof}
Clearly each $G$ has $\depth(G) \leq \maxsize (\mathfrak F_{\tau}) - 1$. We next show the bound on $\size(F_G)$; the bound on $\size(F_{E,G})$ is completely analogous. First, if $w_q(G) \geq \tau^2$, then $w(G) \geq \tau^{\frac{2}{1-\eps}} \geq \tau^4$. Since $\size(F_B) \leq p(B)^{-r}$, we have
  $$
  \size(F_G) \leq \prod_{v \in G} \size(F_{L(v)}) \leq  \prod_{v \in G} p(L(v))^{-r} =  w(G)^{-r} \leq \tau^{-4 r}
  $$

Otherwise, if $G$ has a single sink node $u$ and $w_q(G-u) \geq \tau$, then since $\size(F_{L(u)}) \leq \eta$, we have
  \[
  \size(F_G) \leq \size(F_{L(u)}) \prod_{v \in G - u} \size(F_{L(v)})  \leq \eta \prod_{v \in G - u} p(L(v))^{-r} \leq \eta w(G - u)^{-r} \leq \eta \tau^{-4 r}. \qedhere
  \]
  \end{proof}

Using this subroutine, we get the following main result for our parallel algorithm:

\begin{theorem}
  \label{main-det-thm7}
  Suppose $p$ converges with $\eps$-slack, and $\mathcal B \cup \mathcal E$ has automata with size bound $(r, \eta)$. Then there is an algorithm taking an input parameter $\delta \in (0,1)$ and producing a $\delta$-good configuration, running in $\tilde O( \log( (W_{\eps}/\delta)^r n \eta \sigma) \log(W_{\eps} n / \delta) / \eps)$ time and $\poly( (W_{\eps} n/ \delta)^{r/\eps}, \eta, \sigma)$ processors.
\end{theorem}
\begin{proof}
We assume without loss of generality $\eps < 1/2$, and we write $\beta = W_{\eps}/\delta$ throughout.  We summarize the algorithm as follows:
  \begin{enumerate}
  \item Find  a threshold $\tau$ such that $w( \mathfrak F^{\text{low}}_{\tau}) \leq \delta/100$ and $| \mathfrak F_{\tau} | \leq \beta^{O(1/\eps)}$.
  \item Generate the set $\mathfrak F_{\tau}$.
  \item  Build automata to decide events whether wdags $G \in \mathfrak F_{\tau}$ are compatible with $R$.
  \item Select $R$ to minimize an appropriate potential function $\Phi(R)$ which is computed by a weighted sum of the automata.
  \item Simulate the MT algorithm on $R$.
  \end{enumerate}
  
  For steps (1) and (2), we use a double-exponential back-off strategy: for $i = 0, 1, 2, \dots$, we guess $\tau = 2^{-2^i}$, use Theorem~\ref{enum-alg} to generate $\mathfrak F_\tau$, and check if $w(\mathfrak F^{\text{low}}_{\tau}) \leq \delta/100$. Since $\bar \tau = (0.01 \beta)^{-1/\eps}$ satisfies this condition, the process terminates by iteration $\lceil \log_2 \log_2 (1/\bar \tau) \rceil$, with a final threshold $\tau \geq \bar \tau^2 \geq \beta^{-O(1/\eps)}$.   In each iteration, we have $\maxsize(\mathfrak F_{\tau}) = O(W_{\eps} \log \tfrac{1}{\tau}) \leq O(\frac{W_{\eps} \log \beta}{\eps})$ and $|\mathfrak F_{\tau}| \leq O(W_{\eps}/\tau^2) \leq \beta^{O(1/\eps)}$, so overall this uses $\beta^{O(1/\eps)}$ processors and $\tilde O(\tfrac{\log^2 \beta}{\eps})$ time. 
  
For step (3), we use Lemma~\ref{eta-g-bound2} to construct automata  to decide whether any $G \in \mathfrak F_{\tau}$ is compatible with $R$ or is $E$-compatible with $R$ for any event $E \in \mathcal E$. The automata have size $\tau^{-O(r)} \eta^{O(1)} \leq \poly(\beta^{r/\eps},\eta)$, and depend on the first $\maxsize (\mathfrak F_{\tau})$ rows of $R$.

For step (4), we first compute the values $a_E = p(E)  w(  \mathfrak F^{\text{high}}_{\tau} \cap \mathfrak C'_E )$ for each $E \in \mathcal E$, as well as the sums $s = \sum_{E \in \mathcal E} a_E c_E$ and $t = w( \mathfrak F^{\text{high}}_{\tau})$. We then form the potential function
  $$
  \Phi(R) =| \mathfrak F^{\text{low}}_{\tau} \cap \mathfrak R(R) |+ \frac{\delta}{100 t} | \mathfrak F^{\text{high}}_{\tau} \cap \mathfrak R(R) | +  \frac{1}{10 s}  \sum_{\substack{E \in \mathcal E}}c_E |  \mathfrak F^{\text{high}}_{\tau} \cap  \mathfrak R'_E(R) | 
  $$

As in the proof of Theorem~\ref{main-det-thm5},  the definitions of $s$ and $t$ and the condition on $\tau$ imply $\bE_{R \sim \Omega}[\Phi(R)] \leq 0.1 + 0.02 \delta$. 

Furthermore, $\Phi(R)$ is a sum of weighted indicator functions for if each $G$ is compatible with $R$ or is $E$-compatible with $R$; these in turn are decided by the automata. We thus apply Theorem~\ref{fool-thm2}(2) with parameter $\rho = 0.1 \delta$ for the statistic $\Phi(R)$. Observe that there are $k = |\mathfrak F_{\tau}|$ automata and each has size at most $\eta^2 \tau^{-4 r}$. So the complexity parameter $\phi$ of Theorem~\ref{fool-thm} satisfies $\phi \leq \poly( \beta, \tau^{-r}, \etamax, n, \sigma)$; overall this step uses $\poly(\beta^{r/\eps}, \eta, n, \sigma)$ processors and $\tilde O( \log(\beta^r n \eta \sigma) \log(\beta n)/ \eps)$ time. It produces a resampling table $R$ with
  $$
  \Phi(R) \leq (1+\rho) \bE_{R \sim \Omega} [\Phi(R)] \leq (1 + \rho) (0.1 + 0.02 \delta) \leq 0.1 + 0.1 \delta
  $$

As $\Phi(R) < 1$, we have $\mathfrak F^{\text{low}}_{\tau}\cap \mathfrak R  = \emptyset$ and hence $\mathfrak F\cap \mathfrak R \subseteq \mathfrak F^{\text{high}}_\tau$.   So $\mathfrak S\cap \mathfrak R$ is explicitly enumerated and has size $| \mathfrak S\cap \mathfrak R | \leq |\mathfrak F^{\text{high}}_{\tau} \cap \mathfrak R| \leq 100 t / \delta \times \Phi(R) \leq O(\beta)$. 

To finish, we use one additional optimization of \cite{mt3}: rather than executing the MT algorithm directly, we can simulate it efficiently via a single computation of a maximal independent set. As we show in Lemma~\ref{lem-post}, this runs in $\tilde O(\log^2(\beta n) / \eps)$ time and $\poly(\beta, 1/\eps, 1/\delta, n)$ processors. It generates a configuration $X$ which is the output of the MT algorithm on $R$; in particular, $X$ is good.   As $\mathfrak F^{\text{low}}_{\tau}\cap \mathfrak R  = \emptyset$,  Corollary~\ref{gb4} then gives:
 \begin{align*}
   \sum_{E \in \mathcal E} c_E E(X) &\leq  \sum_{\substack{E \in \mathcal E}} c_E | \mathfrak F_{\tau}^{\text{high}} \cap \mathfrak R'_E| \leq 10 s \Phi(R)   \leq s(1 + \delta)  \leq (1+\delta) \sum_{E \in \mathcal E} c_E \mu(E). \qedhere
\end{align*}
\end{proof}

For $\mathcal E = \emptyset$ we have the simpler result:
\begin{theorem}
  \label{main-det-thm3}
  Suppose $p$ converges with $\eps$-slack and $\mathcal B$ has automata with size bound $(r, \eta)$.    Then there is an algorithm with $\tilde O( \log(W_{\eps}^r n \eta \sigma) \log(W_{\eps} n)/\eps)$ time and $\poly( W_{\eps}^{r/\eps}, n,\eta, \sigma)$ processors to find a good configuration.
\end{theorem}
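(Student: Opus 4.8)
The plan is to follow the same four-stage skeleton as Theorem~\ref{main-det-thm1}, but replace the sequential method-of-conditional-expectations in stage (3) with the parallel automata-fooling machinery of Section~\ref{stat-test-sec}, and to perform all the preprocessing and post-processing in parallel as well. Concretely: (1) choose a threshold $\tau$; (2) build $\mathfrak F_{\tau}$ and the deciding automata $F_G$; (3) use Theorem~\ref{fool-thm2} to find a configuration for the first $b = \maxsize(\mathfrak F_{\tau})$ rows of $R$ that makes $|\mathfrak F^1_{\tau}/R| = 0$; (4) build $\mathfrak S/R \subseteq \mathfrak F^0_{\tau}/R$ explicitly and invoke Lemma~\ref{lem-post}.

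First I would carry out stage (1) in parallel. Using the same exponential back-off as in Theorem~\ref{main-det-thm1} (guessing $\tau = 2^{-i}$), we need $w(\mathfrak F^1_{\tau}) \leq 1/2$; by Proposition~\ref{gb2}(1) this holds once $\tau \leq \tau_{\min} := (2 W_{\eps})^{-1/\eps}$, so the process stops with $\tau \geq \tfrac12\tau_{\min}$, giving $|\mathfrak F_{\tau}| \leq W_{\eps}^{O(1/\eps)}$ and $b = \maxsize(\mathfrak F_{\tau}) \leq O(g\log\tfrac1\tau) = O(\tfrac{g}{\eps}\log W_{\eps})$. Recall from the remark after Definition~\ref{aut-complex-def} that $g \leq O(r)$. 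To check the stopping condition $w(\mathfrak F^1_{\tau}) \leq 1/2$ we must compute each $w(G)$, which requires the probabilities $p(B)$; by Observation~\ref{simple-dynamic}(1) applied to $F_B$ these are computable in parallel in $\tilde O(\log n\log(n\sigma\eta_{\max}))$ time. The enumeration of $\mathfrak F_{\tau}$ by the branching process is a bounded-depth ($O(b)$) search tree of polynomial width and can be parallelized. All the different guesses $i$ can be run simultaneously. For stage (2), Theorem~\ref{eta-g-bound2} gives, for each $G \in \mathfrak F_{\tau}$, an automaton $F_G$ deciding $\Phi(G,R)$ of capacity $\eta_{\max}\tau^{-O(r)} = \eta_{\max} W_{\eps}^{O(r/\eps)}$, reading the entries $R(i,j)$ (with $j \leq b$) in lexicographic order; these entries form a data stream of $nb = \poly(W_{\eps}^{r/\eps}, n)$ variables over alphabet $\Sigma$, distributed according to the product distribution induced by $\Omega$.

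For stage (3), set $k = |\mathfrak F^1_{\tau}| \leq W_{\eps}^{O(1/\eps)}$, let the events be $E_G = \Phi(G,R)$ for $G \in \mathfrak F^1_{\tau}$, decided by the automata $F_G$, and note $\sum_G \Pr_{R\sim\Omega}(E_G) = w(\mathfrak F^1_{\tau}) \leq 1/2 < 1$ by Proposition~\ref{gb1}(3). Apply Theorem~\ref{fool-thm2}(2) with all $s_G = 1$ and $\delta$ a small constant (say $\delta = 1$, or $\delta = 1/2$), using $\phi = \max(\eta_i, n', k, \sigma)$ where $n' = nb$ is the stream length and $\eta_i \leq \eta_{\max} W_{\eps}^{O(r/\eps)}$; this yields in $\tilde O(\log\phi\log n')$ time and $\poly(\phi)$ processors a value of the first $b$ rows of $R$ with $\sum_G [[E_G(R)]] \leq (1+\delta)\cdot\tfrac12 < 2$, and since this quantity is an integer it is in fact $0$, i.e. $\mathfrak F^1_{\tau}/R = \emptyset$. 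By Proposition~\ref{gb2}(6), $\mathfrak C/R \subseteq \mathfrak F^0_{\tau}/R$, and since $\mathfrak S \subseteq \mathfrak C$ we get $\mathfrak S/R \subseteq \mathfrak F^0_{\tau}/R$, which has size at most $|\mathfrak F^0_{\tau}| \leq W_{\eps}^{O(1/\eps)}$. For stage (4), we must produce $\mathfrak S/R$ \emph{explicitly}: enumerate $\mathfrak F^0_{\tau}$ (already done in stage (1)), discard those not compatible with $R$ — checking $\Phi(G,R)$ for fixed $R$ is just running the automaton $F_G$, which is parallelizable — and then from the resulting subset of $\mathfrak F^0_{\tau}$ extract exactly the single-sink wdags compatible with $R$ (by Observation~\ref{gb00} and the prefix-closure in Proposition~\ref{gb1}(1), every $G \in \mathfrak S/R$ is a prefix of something in $\mathfrak F^0_{\tau}/R$, and conversely every single-sink prefix of a compatible wdag is in $\mathfrak S/R$; one can enumerate these prefixes since each has size $\leq b$). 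Finally feed $\mathfrak S/R$ with $\phi' = |\mathfrak S/R| \leq W_{\eps}^{O(1/\eps)}$ into Lemma~\ref{lem-post} to obtain a good configuration $Y$ in $\tilde O(\log^2(\phi' m n))$ time. Composing the runtime bounds — stages (1)–(2) contribute $\tilde O(\log n\log(n\sigma\eta_{\max}) + b\cdot\text{poly-log})$, stage (3) contributes $\tilde O(\log\phi\log(nb))$ with $\phi = W_{\eps}^{O(r/\eps)}n\eta_{\max}\sigma$, and stage (4) contributes $\tilde O(\log^2(W_{\eps}^{O(1/\eps)}mn))$ — and using $W_{\eps} \geq m+g$ from Proposition~\ref{wwbound}(1) to absorb $m$ and $g$, the dominant term is $\tilde O(\log(W_{\eps}^r n\eta_{\max}\sigma)\cdot\log(W_{\eps}n)/\eps)$, with processor count $\poly(W_{\eps}^{r/\eps}, n, \eta_{\max}, \sigma)$, as claimed.

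The main obstacle I anticipate is stage (4): correctly and efficiently extracting the explicit list $\mathfrak S/R$ from $\mathfrak F^0_{\tau}/R$ in parallel, and verifying that this list is genuinely \emph{all} of $\mathfrak S/R$ rather than a superset or subset. The subtle point is that $\mathfrak F^0_{\tau}$ consists of collectible, not necessarily single-sink, wdags, and a member of $\mathfrak S/R$ need not itself lie in $\mathfrak F^0_{\tau}$ — it may only be a prefix of one. One must argue, via Proposition~\ref{gb2}(6) and prefix-closure, that $\mathfrak S/R$ equals the set of single-sink prefixes of elements of $\mathfrak F^0_{\tau}/R$, that this set is polynomially bounded (which follows since $|\mathfrak S/R| \leq |\mathfrak F^0_{\tau}|$ by Theorem~\ref{par-resampling-fast}-type counting, or directly since each such prefix has $\leq b$ nodes and injects into $\mathfrak F^0_{\tau}/R$), and that enumerating it is a bounded-depth branching computation. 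A secondary concern is bookkeeping the alphabet/non-Bernoulli distribution through Theorem~\ref{fool-thm2} and tracking that the stream length $nb$ and the automaton capacities stay within the stated $\poly(W_{\eps}^{r/\eps}, n, \eta_{\max}, \sigma)$ bound; this is routine given $g \leq O(r)$ and the back-off bound on $\tau$, but needs to be stated carefully so the final runtime expression comes out as written.
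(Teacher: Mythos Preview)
Your approach is essentially the paper's, but two points need correction.

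First, your anticipated obstacle in stage~(4) is not real. Once $\mathfrak F^1_{\tau}/R=\emptyset$, Proposition~\ref{gb2}(6) gives $\mathfrak C/R\subseteq\mathfrak F^0_{\tau}/R$, and since $\mathfrak S\subseteq\mathfrak C$ every $G\in\mathfrak S/R$ lies \emph{directly} in $\mathfrak F^0_{\tau}$---not merely as a prefix. So extracting $\mathfrak S/R$ is just filtering $\mathfrak F^0_{\tau}$ to the single-sink wdags compatible with $R$; no prefix-enumeration is needed. (Also, your ``$<2$ hence integer hence $0$'' slip: you need the bound strictly below $1$, so take $\delta<1$, e.g.\ $\delta=1/2$ gives $\leq 3/4$.)

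Second, and more substantively, your bound $|\mathfrak S/R|\leq|\mathfrak F^0_{\tau}|\leq W_{\eps}^{O(1/\eps)}$ makes Lemma~\ref{lem-post} cost $\tilde O\bigl(\log^2(W_{\eps}^{O(1/\eps)}mn)\bigr)=\tilde O\bigl(\log^2(W_{\eps}n)/\eps^2\bigr)$, which carries an extra $1/\eps$ factor not present in the stated runtime. The paper avoids this by folding $\mathfrak F^0_{\tau}$ into the potential as well: with $s=w(\mathfrak F^0_{\tau})$ it sets
\[
S(R)=\frac{1}{10s}\sum_{G\in\mathfrak F^0_{\tau}}[[\Phi(G,R)]]+\sum_{G\in\mathfrak F^1_{\tau}}[[\Phi(G,R)]],
\]
so that after Theorem~\ref{fool-thm2}(2) one obtains both $\mathfrak F^1_{\tau}/R=\emptyset$ and $|\mathfrak F^0_{\tau}/R|\leq O(s)\leq O(W_{\eps})$, hence $|\mathfrak S/R|\leq O(W_{\eps})$ rather than $W_{\eps}^{O(1/\eps)}$. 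With this single modification your argument matches the paper and achieves the stated bound.
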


As before, the bounds of Theorem~\ref{main-det-thm7} and Theorem~\ref{main-det-thm3} can be simplified for a number of common LLL scenarios.
\begin{theorem}
  \label{main-thm4-summary2}
 Suppose $\mathcal B \cup \mathcal E$ has automata with size bound $(r, \eta)$ for an arbitrary constant $r \geq 1$. Let $\eps \in (0,1)$ be an arbitrary constant, and define parameter $\phi = \max\{ m', n, \sigma, \eta, 1/\delta \}$.
  
  \begin{enumerate}
  \item If $e \pmax d^{1+\eps} \leq 1$, then we can find a $\delta$-good configuration in $NC^2( \phi e^{\pmax d'})$ complexity, where $d' = \max_{E \in \mathcal E} |\Gamma(E)|$.
      \item If $\sum_{B \in \mathcal B_k} p(B)^{1-\eps} (1+\lambda)^{|\var(B)|} \leq \lambda$ for all $k \in [n]$, then we can find a $\delta$-good configuration in $NC^2( \phi (1+\lambda)^h)$ complexity where $h = \max_{E \in \mathcal B \cup \mathcal E} |\var(E)|$
  \item If $p$ converges with $\eps$-slack, then we can find a $\delta$-good configuration in $NC^2(\phi e^h)$ complexity where $h = \max_{E \in \mathcal B \cup \mathcal E} |\var(E)|$.
  \end{enumerate}
\end{theorem}
\begin{proof}
  \begin{enumerate}
  \item We have $w_{p^{1-\eps}}(\mathfrak C'_E) \leq e^{e \pmax |\Gamma(E)|} \leq e^{O(\pmax d')}$ for all $E \in \mathcal E$, and so $W_{\eps} \leq \poly(\phi)$.
  \item We have $w_{p^{1-\eps}}(\mathfrak C'_E) \leq (1+\lambda)^{|\var(E)|}$ for all $E \in \mathcal E$, and so $W_{\eps} \leq \poly(\phi)$.
  \item As in Theorem~\ref{main-thm4-summary}(3), we have  $W_{\eps/2} \leq m' (\eps(1 - \pmax))^{-O(h)}$.  We may assume that $\size(F_B) \geq 2$ for any bad-event  $B$, as otherwise we would have $p(B) \in \{0,1 \}$ and $B$ can simply be ignored. As a consequence of this, Definition~\ref{aut-complex-def} implies that $p(B) \leq 1 - \Omega(1)$ for all $B$. So $W_{\eps/2} \leq m' e^{O(h)}$. \qedhere
  \end{enumerate}
  \end{proof}

When $\mathcal E = \emptyset$,  we can use Observation~\ref{simple-dynamic}(2) to efficiently set all isolated variables, such that all the isolated bad-events become false. With this preprocessing step, and some simplification of the parameters, we get the following results:
\begin{theorem}
  \label{par-sum-easy2}
  \begin{enumerate}
  \item Suppose $p \leq q$ where the vector $q$ converges with $\eps$-slack for an arbitrary constant $\eps \in (0,1)$, and each $B \in \mathcal B$ is decided by automaton with size $\poly(1/q(B))$.  Then we can find a good configuration in $NC^2(m n \sigma / \qmin)$ complexity.
  \item   Suppose $e \pmax d^{1+\eps} \leq 1$ for an arbitrary constant $\eps > 0$, and each $B \in \mathcal B$ is decided by an automaton with size $\poly(d)$.  Then we can find a good configuration in $NC^2(m n \sigma)$ complexity.
  \end{enumerate}
\end{theorem}

\begin{theorem}
  \label{par-sum-thm}
  Let $r \geq 1, \eps \in (0,1)$ be arbitrary constants. Suppose $\mathcal B$ has automata with size bound $(r, \eta)$, and let $\phi = \max\{ m,n,\eta, \sigma \}$.
  
  \begin{enumerate}
  \item If $\sum_{A \in \Gamma(B)} p(A)^{1-\eps} \leq 1/4$ for all $B \in \mathcal B$, then we can find a good configuration in $NC^2(\phi)$ complexity.
  
  \item If $\sum_{B \in \mathcal B_k} p(B)^{1-\eps} (1+\lambda)^{|\var(B)|} \leq \lambda$ for all $k \in [n]$, then we can find a good configuration in $NC^2( \phi (1+\lambda)^h)$ complexity where $h = \max_{B \in \mathcal B} |\var(B)|$. 
  \end{enumerate}
\end{theorem}

\section{Non-repetitive vertex coloring}
\label{app1-sec}
As our first example application, we consider a classic and straightforward exercise for the LLL.  Given a graph $G$ with maximum degree $\Delta$, we want to $C$-color the vertices such that no vertex-simple path has a repeated color sequence, i.e. a path on distinct vertices $v_1, \dots, v_{2 \ell}$ receiving colors $c_1, \dots, c_{\ell}, c_1, \dots, c_{\ell}$ respectively. This problem was introduced by Alon et al. \cite{alon2}, based on old results of Thue for non-repetitive sequences. The minimum number of colors needed is referred to as the \emph{Thue number} $C=\pi(G)$.

The analysis in \cite{alon2} used the asymmetric LLL to show $\pi(G) \leq 2 e^{16} \Delta^2$. A series of works have since improved the constant factor and provided efficient randomized algorithms. Most recently, \cite{mtdist2} described a sequential poly-time zero-error randomized algorithm with $C = \Delta^2 + O(\Delta^{5/3})$ colors, as well as a zero-error randomized parallel algorithm in $O(\log^4 n)$ time with a slightly larger value $C = \Delta^2 + O(\frac{\Delta^2}{\log \Delta})$. (See \cite{wood} for a survey of bounds and algorithms for non-repetitive coloring.)

We get the following straightforward algorithmic derandomization:
\begin{theorem}
  Let $\eps > 0$ be an arbitrary constant. There is an $NC^2$ algorithm which takes as input a graph $G$ and returns a non-repetitive vertex coloring of $G$ using $O(\Delta(G)^{2 + \eps})$ colors.
\end{theorem}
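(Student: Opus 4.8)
The structural groundwork is already in place: each of the $n$ vertices picks a color uniformly from $[C]$ with $C = \lceil \Delta(G)^{2+\eps}\rceil$; there are $m \le \poly(n)$ bad-events of the two stated types; a good configuration is exactly a non-repetitive coloring; and, as long as $\Delta(G) \ge K_\eps$, the vector $p(B)^{1-\eps/10}$ satisfies the ``bound-by-variables'' criterion of Proposition~\ref{simple-shearer}(4) with $\lambda = 1$. The plan is to feed this into the parallel derandomization of Theorem~\ref{par-sum-thm}(4), run with exponential-slack parameter $\eps/10$ and $\lambda = 1$, and then check that every parameter entering its complexity bound is $\poly(n)$, so that the output is obtained in $\tilde O(\log^2 n)$ time with $\poly(n)$ processors. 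The one ingredient not yet in hand is an automaton, of the size required by the capacity-bound hypothesis, for each bad-event.

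Constructing the automata is straightforward. Each bad-event $B$ is a predicate on the at most $2L$ vertices lying on the path (or pair of paths) that defines it, where $L = O\!\bigl(\tfrac{\log n}{\eps \log \Delta(G)}\bigr)$. Reading the variables $X(1),\dots,X(n)$ in order of vertex index, the automaton $F_B$ records the colors of these $\le 2L$ vertices as it encounters them, and in its final step checks whether the two halves of the recorded color sequence agree (type~1), respectively whether the two paths' color sequences agree (type~2). Its state space has size at most $(\sigma+1)^{2L} \le (2C)^{2L}$. Since $\sigma = C = \Delta(G)^{O(1)}$ and $L\log\Delta(G) = O(\tfrac{\log n}{\eps})$, this is $n^{O(1/\eps)} = \poly(n)$; and since $p(B) = C^{-\ell}$ (type~1 on a path of $2\ell$ vertices) or $p(B) = C^{-L}$ (type~2), and $C \ge 4$, the same quantity is at most $p(B)^{-3}$. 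Hence $\mathcal B$ has capacity bound $(3, \eta_{\max})$ with $\eta_{\max} = \poly(n)$ in the sense of Definition~\ref{aut-complex-def}. The transition functions are trivial to evaluate, and the list of all bad-events together with their automata is produced in $NC$ by the standard technique of iteratively extending a list of short paths.

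It then remains to collect parameters. With $\lambda = 1$ we have $h = \max_B |\var(B)| \le 2L = O(\tfrac{\log n}{\eps})$, so $\lambda h = O(\log n)$ for fixed $\eps$; also $\eps/10 = \Omega(1)$, and $m, \sigma, \eta_{\max}$ are all $\poly(n)$. Thus every condition in the ``in particular'' clause of Theorem~\ref{par-sum-thm}(4) is met, and it returns a good configuration --- equivalently, by the observation above, a non-repetitive coloring using $C = O(\Delta(G)^{2+\eps})$ colors --- in $\tilde O(\log^2 n)$ time and $\poly(n)$ processors, i.e.\ in $NC^2$. For the corner case $\Delta(G) < K_\eps$ (where $\Delta(G) \ge 2$; the cases $\Delta(G) \le 1$ are trivial), we simply rerun the entire construction with $C$ replaced by a sufficiently large $\eps$-dependent constant $K'_\eps$: examining the proof that the criterion holds, the governing quantity $\phi = \Delta(G)^2(1+\lambda)^2 / C^{1-\eps/10}$ is pushed below $1$, indeed as small as we wish, by taking $K'_\eps$ large, so Proposition~\ref{simple-shearer}(4), the automaton bounds, and the parameter estimates all carry over verbatim, while $K'_\eps = O(1) = O(\Delta(G)^{2+\eps})$.

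The proof is thus essentially bookkeeping, and the one point needing care is that the deliberately generous choice $C = \Delta(G)^{2+\eps}$ has to accomplish three things simultaneously: keep the number of bad-events polynomial even though paths run up to length $2L = \Theta(\log n/(\eps\log\Delta(G)))$; keep each automaton's state space both polynomial in $n$ and polynomially bounded in $1/p(B)$, so that a single constant $r$ suffices in the capacity bound; and still leave $\eps/10$ worth of exponential slack with $\lambda = 1$, which is exactly what Theorem~\ref{par-sum-thm}(4) requires to certify $NC^2$. Everything else --- the automata, the enumeration of bad-events, and the low-degree case --- is routine.
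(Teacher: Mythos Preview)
Your proof is correct and follows essentially the same approach as the paper: build per-event automata that record the colors along the defining path(s), verify the capacity bound $(r,\eta_{\max})$ with constant $r$ and $\eta_{\max}=\poly(n)$, check that $m,\sigma,h\lambda$ are all under control, and invoke Theorem~\ref{par-sum-thm}(4) with slack parameter $\eps/10$. The paper gets away with $r=2$ (recording $2\ell$ colors in a $C^{2\ell}$-state automaton, versus $p(B)=C^{-\ell}$), while you take $r=3$; and the paper dismisses small $\Delta(G)$ in one line whereas you argue it out, but neither difference is material.

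One small wording fix: when you write ``the same quantity is at most $p(B)^{-3}$,'' make clear that the quantity in question is the event-specific state-space bound $(2C)^{2\ell}$ (respectively $(2C)^{2L}$), not the uniform bound $(2C)^{2L}$ you just stated --- otherwise the inequality fails for type-1 events with small~$\ell$.
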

\begin{proof}

 Let $C = \lceil \Delta^{2 + \eps} \rceil$, and define parameter $L = \lceil \frac{10 \log n}{\eps \log \Delta} \rceil$  . Consider applying the LLL, wherein each vertex selects a color from $[C]$ uniformly at random, and there are two types of bad-events. First, for each simple path $v_1, \dots, v_{2 \ell}$ where $\ell \leq L$, we have a bad-event that the vertices $v_1, \dots, v_{2 \ell}$ receive a repeated color sequence. Second, for each pair of paths $v_1, \dots, v_{L}, v'_1, \dots, v'_L$ with distinct vertices, we have a bad-event that path $v_1, \dots, v_L$ has the same color sequence as $v'_1, \dots, v'_L$; here $v_L$ does not need to be a  neighbor $v'_1$.

Observe that a good configuration corresponds to a non-repetitive vertex coloring. For, suppose a path $v_1, \dots, v_{2 \ell}$ has a repeated color sequence. If $\ell < L$, this directly corresponds to a bad-event. If $\ell \geq L$, then $v_1, \dots, v_{\ell}$ and $v'_1, \dots, v'_{\ell}$ are paths with distinct vertices and the same color sequence, where we define $v'_i = v_{\ell + i}$.
  
  We claim that, for $\Delta$ larger than some constant, the vector $p(B)^{1-0.1 \eps}$ satisfies the criterion of Proposition~\ref{simple-shearer}(4) with $\lambda = 1$.  For, consider the set of bad-events $\mathcal B_v$ which are affected by a given vertex $v$. For each $\ell = 1, \dots, L$ there are at most $\ell \Delta^{2 \ell - 1 }$ paths of length $2\ell$ going through $v$, and there are at most $n L \Delta^{2 \ell - 1}$ pairs of length-$L$ paths going through $v$. The former events have probability $C^{-\ell}$ and the latter have probability $C^{-L}$. If we define $\phi = 4 \Delta^2 / C^{1-0.1 \eps}$, we have 
  $$
  \sum_{B \in \mathcal B_v} p(B)^{1-0.1\eps} (1+\lambda)^{|\var(B)|} \leq \frac{n L \Delta^{2 L - 1} 2^{2 L} }{ C^{(1-0.1\eps) L} }  + \sum_{\ell=1}^{L-1} \frac{\ell \Delta^{2 \ell - 1} 2^{2 \ell}}{ C^{(1-0.1\eps) \ell}}  = \frac{n L}{\Delta} \phi^L + \frac{1}{\Delta} \sum_{\ell=1}^{L-1} \ell \phi^\ell.
  $$
   
   Since $C \geq \Delta^{2 + \eps}$ we have $\phi \leq \Delta^{-\eps/2} \leq 1/4$ for sufficiently large $\Delta$.   Our choice of $L$ gives $\phi^L \leq n^{-5}$, and $\sum_{\ell=1}^{L-1} \ell \phi^\ell \leq \sum_{\ell=1}^{\infty} \ell \phi^\ell = \frac{\phi}{(1 - \phi)^2} \leq 1/2$.
Thus, overall 
\[
\sum_{B \in \mathcal B_v} p(B)^{1-0.1 \eps} (1+\lambda)^{|\var(B)|} \leq \frac{n L}{\Delta} n^{-5} + \frac{1}{\Delta} (1/2)  \leq 1 = \lambda. 
\]

Now, there are $n^2 \Delta^{2 L-2}$ possibilities for the second of bad-event, and for each $\ell = 1, \dots, L$, there are $n \Delta^{2 \ell - 1}$ possibilities for the first type of bad-event. Summing over $\ell$,  the total number of bad-events is $\poly(n, \Delta^L)$; for fixed $\eps$ this is $\poly(n)$.
 
  We can construct automata for the bad-events by simply recording the colors taken by the vertices along each path. For a path of length $2 \ell$, this automaton has size $C^{2 \ell}$. The corresponding bad-event $B$ has probability $C^{-\ell}$.  Thus, $\size(F_B) \leq \min \{ p(B)^{-r}, \eta \}$ for $\eta = C^{2 L}, r = 2$, where note  that $C^{2 L} \leq \Delta^{(4+2 \eps) L} \leq \poly(n)$. Also, $|\var(B)| \leq 2L \leq O(\log n)$.  
  
  So applying Theorem~\ref{par-sum-thm}(2) with $\eps/10$ in place of $\eps$ gives the desired algorithm.
  \end{proof}

\section{Independent transversals}
\label{app2-sec}
As a more involved application, we examine the combinatorial structure known as the \emph{independent transversal}, which has attracted a long line of research. In this setting, we have a graph $G = (V, E)$ along with a partition $\mathcal V$ of $V$. The parts of this partition are referred to as \emph{blocks}. We say  the partition is $b$-regular if $|W| = b$ for all blocks $W$. An independent transversal (IT) of $G$ is a vertex set $I$ which is an independent set, and which additionally satisfies $|I \cap W| = 1$ for all $W \in \mathcal V$.

 Many combinatorial problems, such as graph list-coloring, can be formulated in terms of independent transversals; see \cite{graf} for a more extensive background. One fundamental problem is to determine sufficient conditions for the existence of  an independent transversal in a graph. For instance,  Haxell \cite{haxell-match} showed that bound $b \geq 2 \Delta$ suffices to guarantee existence of an IT, which by a matching lower bound of \cite{szabo} is tight. 

Some applications require a weighted generalization of IT's: given a vertex weighting $w:V \rightarrow \mathbb R$, we need to find an IT $I$ maximizing the sum  $w(I) = \sum_{v \in I} w(v)$.  Aharoni, Berger \& Ziv \cite{aharoni} showed that when $b \geq 2 \Delta$, there exists an IT $I$ with $w(I) \geq w(V) / b$. The work \cite{it3}, building on \cite{it2}, gives a nearly-matching randomized algorithm under the condition $b \geq (2 +\eps) \Delta$ for constant $\eps > 0$. Similarly, when $w(v) \geq 0$ for all $v$ (we say in this case that $w$ is \emph{non-negative}) and $b \geq 4  \Delta$, then \cite{harris-partial} shows that the randomized MT algorithm directly gives an IT $I$ with $w(I) \geq \Omega(w(V)/b)$. 

In this section, we derandomize these results. The constructions are based on using the LLL for vertex-splitting:  we sample a $q$ fraction of the vertices so that the blocks are reduced from size $b$ to roughly $q b$ and the vertex degrees are reduced to roughly $q \Delta$. We also use the MT-distribution to retain roughly $q w(V)$ of the vertex weights. 
\begin{lemma}
  \label{itlem1}
  There is an $NC^2$ algorithm which takes as input a graph $G$ with non-negative vertex weighting $w$, a $b$-regular vertex partition where $b \geq \max\{ K, \Delta(G) \}$ for some sufficiently large constant $K$, and a parameter $q$ with $\frac{\log^2 b}{b} \leq q \leq 1$,   and returns a vertex subset $L \subseteq V$ satisfying the following properties:
  \begin{enumerate}
  \item $q b - 10 \sqrt{q b \log b} \leq |L \cap W| \leq q b + 10 \sqrt{q b \log b}$ for all blocks $W$
  \item $|L \cap N(v)| \leq q D + 10 \sqrt{q D \log b}$ for all vertices $v$, where $D = \max \{b/10, \Delta(G) \}$
  \item $w(L) \geq q w(V) (1 - 1/b^8)$
  \end{enumerate}
\end{lemma}
\begin{proof}
For brevity we write $\Delta = \Delta(G)$ throughout.   We apply the LLL, wherein each vertex $v$ goes into $L$ independently with probability $q$. There are two types of bad-events: first, for each block  $W$, we have a bad-event that $| |L \cap W| - q b| \geq 10 \sqrt{qb \log b}$; second, for each vertex $v$, we have a bad-event that $|L \cap N(v)| > q b + 10 \sqrt{q D \log b}$. 

For the first type of event, note that $|L \cap W|$ is a binomial random variable with mean $\mu = q b$, and the bad-event would constitute a relative deviation of $\delta = 10 \sqrt{(\log b)/(q b)}$. By our bound on $q$, we have $\delta \leq 1$ for $K$ sufficiently large. Using a simplified version of Chernoff's bound, we get $\Pr_{\Omega}(B) \leq 2 e^{-\mu \delta^2/3} = 2 b^{-100/3}$. Similarly, for the second type of event, define $\hat \mu = q D$, and note that $|L \cap N(v) |$ is a binomial random variable with mean $q |N(v)| \leq q \Delta \leq \hat \mu$. The bad-event would constitute a relative deviation of $\delta = 10 \sqrt{\log b/(q D)}$ from $\hat \mu$; again we have $\delta \leq 1$ for $b$ sufficiently large and using the simplified form of Chernoff's bound, the probability is at most $e^{-\hat \mu \delta^2/3} \leq  e^{-q D \delta^2/3} = b^{-100/3}$.

So the LLL instance has $\pmax \leq 2 b^{-100/3}$. Each vertex affects one bad-event of the first type and $\Delta$ of the second type, while the first type of bad-event depends on $b$ vertices and the second type depends on $\Delta$ vertices. Since $\Delta \leq D \leq b$, this LLL instance has $d \leq 3 b^2$. So it satisfies the symmetric criterion $e \pmax d^{1+\eps} \leq 1$ for $\eps = 1/2$ and $b$ sufficiently large.

We will use the MT-distribution to obtain the bound on $w(L)$. Each vertex $v$ has an auxiliary event $E_v$  that $v \notin L$, with corresponding weight $c_{E_v} = w(v)$. As described above, the event $E_v$  affects at most $\Delta + 1 \leq 2 b$ bad-events. So Proposition~\ref{mtdist-simple1} gives $\mu(E_v) \leq \Pr_{\Omega}(E_v) e^{e \pmax |\Gamma(E_v)|} \leq (1-q) e^{e (b^{-33}) (2 b)} $; for $b$ sufficiently large, this is at most $(1 - q) (1 + b^{-31})$.

We next describe automata for the events. Each event $E_v$ can easily be decided by an automaton of size $2$. For the first type of bad-event, we need to compute the running sum of $|L \cap W|$; for the second type of bad-event, we need to compute the running sum of $|L \cap N(v)|$. These are determined by counters which are bounded in the range $\{0, \dots, b \}$ and $\{0, \dots, \Delta \}$ respectively. Since $\Delta \leq D \leq  b$, the size is at most $\poly(d)$.

At this point, we have all the ingredients to apply Theorem~\ref{main-thm4-summary2}(1) with $\delta = b^{-10}$. This has $NC^2(mn)$ complexity, and produces a vertex set $L$ satisfying the first two properties. Furthermore, the MT-distribution condition of Eq.~(\ref{mtdist-eqn3}) gives $\sum_{v \in V - L} w(v) \leq (1+\delta) \sum_{v \in V} w(v) \mu(E_v)$. So $$
w(L) = w(V) - \sum_{v \in V - L} w(v) \geq w(V) - (1 + b^{-10}) \cdot (1-q) (1 + b^{-31}) w(V).
$$
Since $q b \geq \log^2 b$, this is at least $q w(V) (1 - b^{-8})$ for $b$ sufficiently large.
\end{proof}

\begin{proposition}
  \label{itlem2}
 There is an $NC^2$ algorithm which takes as input a graph $G$ with a non-negative vertex weighting $w$, a $b$-regular vertex partition where $b \geq \max\{ K, \Delta(G) \}$ for some sufficiently large constant $K$, and returns a vertex subset $L'$ satisfying the following three properties:
  \begin{enumerate}
  \item The induced partition on $L'$ (that is, the partition $\{ W \cap L': W \in \mathcal V \}$) is $b'$-regular for $b' = \lceil \log^2 b - 10 \log^{3/2} b \rceil$
  \item $\Delta(G[L']) \leq b' \Bigl( (D/b)  + \log^{-0.4} b \Bigr)$ where $D = \max \{ b/10, \Delta(G) \}$.
  \item $w(L')/b' \geq (1 - \log^{-0.4} b) w(V) / b$
  \end{enumerate}
\end{proposition}
\begin{proof}
  We first apply Lemma~\ref{itlem1}  with $q = \frac{\log^2 b}{b}$, and for each block $W$ we discard the $|L \cap W| - b'$ vertices of lowest weight. Letting $L'$ denote the remaining vertices, each block $W$ has
  $$
  \frac{w(L' \cap W)}{b'} \geq \frac{w(L \cap W)}{|L \cap W|} \geq \frac{w(L \cap W)}{\log^2 b + 10 \log^{3/2} b}.
  $$

  Summing over all blocks $W$ and using the estimate $w(L) \geq q w(V) (1 - b^{-8})$ gives:
  $$
  \frac{w(L')}{b'} \geq \frac{w(V) (1 - b^{-8})}{\log^2 b + 10 \log^{3/2} b} \cdot \frac{\log^2 b}{b}.
  $$

  For $b$ sufficiently large, this is at least $w(V) (1 - \log^{-0.4} b) / b$. Also, each vertex $v$ has $|N(v) \cap L'| \leq |N(v) \cap L| \leq q D + 10 \sqrt{q D \log b}$, so:
  $$
\frac{ |N(v) \cap L'|}{b'} \leq \frac{ \tfrac{\log^2 b}{b}  D + 10 \sqrt{ \tfrac{\log^2 b}{b} D \log b} }{\log^2 b - 10 \log^{3/2} b}.
  $$

Simple analysis shows that for $D = \Theta(b)$ and $b$ sufficiently large, this is at most $D/b + \log^{-0.4} b$.
  \end{proof}

We now use a standard iterated LLL construction with repeated applications of Proposition~\ref{itlem2}.
\begin{proposition}
  \label{itlem3}
  Let $\eps, \lambda > 0$ be arbitrary constants and let $\phi \in (1,10)$. There is an $NC^2$ algorithm which takes as input a graph $G = (V,E)$ with a non-negative vertex weighting $w$ and a $b$-regular vertex partition for $b \geq (\phi + \eps) \Delta(G)$. It returns a vertex subset $L \subseteq V$ satisfying the following three properties:
  \begin{enumerate}
  \item The induced partition on $L$ is $b'$-regular for some $b' \leq O(1)$.
  \item $b' \geq  \phi \Delta(G[L])$
  \item $w(L)/b' \geq (1 - \lambda)  w(V)/b$
  \end{enumerate}
\end{proposition}
\begin{proof}
Our plan is to iteratively generate vertex sets $L_i$ for $i = 0,1,2, \dots, $,  where  $L_0 = V$ and we apply Proposition~\ref{itlem2} to the induced subgraph $G[L_i]$ to obtain the next vertex set $L_{i+1}$. We let $b_i$ be the block-size of the induced partition on $L_i$, and we stop this process when $b_i$ falls below some threshold $\tau$ ( to be specified).  The sequence $b_i$ then follows the recurrence relation:
\begin{eqnarray*}
b_0 = b, \qquad \qquad b_{i+1} = \lceil \log^2 b_i - 10 \log^{3/2} b_i \rceil
\end{eqnarray*}

Each iteration of Proposition~\ref{itlem2} takes $\tilde O(\log^2 n)$ time and this terminates by iteration $O(\log^* b)$. As we apply Proposition~\ref{itlem2}, we need to check the preconditions $b_i \geq K$ and $b_i \geq \Delta(G[L_i])$. The former condition is clear for $\tau$ sufficiently large. In order to check the latter condition, we will recursively define a parameter $\delta_i$ by 
$$
\delta_0 = b/(\phi + \eps), \qquad  \qquad \delta_{i+1} = b_{i+1} (\delta_i / b_i) + \log^{-0.4} b_i    
$$
and we will maintain the following condition at each step:
  \begin{equation}
\label{maintain-eqn}
\delta_i \geq \Delta(G[L_i])
 \end{equation}
  
By telescoping sums,  we can observe that $\delta_i/b_i =  1/(\phi + \eps)+ \sum_{j <i} \log^{-0.4} b_j$. The sequence $\log b_i$ decreases to zero super-exponentially, so for $\tau$ sufficiently large we can ensure that
\begin{equation}
  \label{x-eqn}
\phi \delta_i \leq b_i \leq (\phi +  \eps) \delta_i
\end{equation}
holds for all iterations $i$. Now let us turn to showing Eq.~(\ref{maintain-eqn}) and showing $b_i \geq \Delta(G[L_i])$ by induction on $i$. They are immediate for $i = 0$ by our hypotheses. For $i \geq 0$, we use the bounds of Eq.~(\ref{maintain-eqn}),(\ref{x-eqn}) to get $b_i \geq \phi \delta_i \geq \delta_i \geq \Delta(G[L_i])$.  For the next iteration $i$, we get:
  $$
  \Delta(G[L_{i+1}]) \leq b_{i+1} (D_i/b_i  + \log^{-0.4} b_i) \qquad  \text{for $D_i = \max \{ \Delta(G[L_i]), b_i/10 \}$}
  $$
  Because of the bound in Eq.~(\ref{x-eqn}), the bound $\phi \leq 10$, and the bound $\Delta(G[L_i]) \leq \delta_i$ from Eq.~(\ref{maintain-eqn}), we have $D_i \leq  \delta_i$ and hence $\Delta(G[L_{i+1}]) \leq \delta_{i+1}$ as desired.
  
 This process terminates at some iteration $J$, where the vertex set $L_J$ has blocks of size $b_J \leq \tau$ and $b_J \geq \phi \delta_j \geq \phi \Delta(G[L_J])$. Furthermore,  each iteration has $w(L_{i+1})/b_{i+1} \geq (1 - \log^{-0.4} b_i) w(L_i)/b_i$, so by telescoping products we have
$$
\frac{w(L_J)}{b_J} \geq \frac{w(V)}{b} \prod_{i=0}^{J-1} (1 - \log^{-0.4} b_J)
$$

For $\tau$ sufficiently large (as a function of $\lambda$),  this product  is at least $1-\lambda$.
\end{proof}

We are ready for the deterministic sequential algorithm for weighted IT's:
\begin{proof}[Proof of the sequential algorithm for Theorem~\ref{it-thm}]
A polynomial-time algorithm to find such an independent transversal $I$ is provided in \cite{it3}. Most of the steps in this algorithm are deterministic. The only randomized part is to solve the following task: for arbitrarily small constants $\eps, \lambda$, we are given a graph $G$ and $b$-regular vertex partition for $b \geq (2+\eps) \Delta$, with a non-negative vertex weighting $w$, and we must produce a vertex subset $L$ so the induced partition is $b'$-regular with $2 \Delta(G[L]) < b' \leq O(1)$, and which has $w(L)/b' \geq  (1 - \lambda) w(V)/b$. (See \cite[Lemma 18]{it3} for additional details.)   We can achieve this task deterministically by applying Proposition~\ref{itlem3} with $\phi = 2$. 
\end{proof}

We now turn to the parallel algorithm. We first discuss an $NC^2$ algorithm when $b$ is \emph{constant}.
\begin{proposition}
  \label{itlem0}
  Let $C$ be an arbitrary constant. There is an $NC^2$ algorithm which takes as input a graph $G$ with a non-negative vertex weighting $w$ and a $b$-regular vertex partition such that $4 \Delta(G) \leq b \leq C$, and returns an IT $I$ of $G$ with 
  $$
  w(I) \geq \Bigl( \frac{1-\lambda}{2 b - 1} \Bigr) w(V)
  $$
\end{proposition}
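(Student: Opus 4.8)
The plan is to phrase the construction as a Lovász Local Lemma instance and invoke the derandomized MT-distribution machinery of Sections~\ref{fool-mt-sec} and~\ref{mtdist-sec}. Take the probability space $\Omega$ with one variable $X(U)$ per block $U \in \mathcal V$, uniform over the $b$ vertices of $U$, so that $T = \{ X(U) : U \in \mathcal V \}$ is a candidate transversal; for each edge $e=(u,w)$ of $G$ joining distinct blocks introduce a bad-event $B_e$ asserting $X(\mathcal V(u)) = u$ and $X(\mathcal V(w)) = w$, so that a good configuration is exactly an independent transversal. Here $p(B_e) = 1/b^2$, each $B_e$ depends on two variables, and each variable touches at most $b\Delta(G) \le b^2/4$ bad-events, so (using $b \ge 4\Delta(G)$) the bound-by-variables criterion of Proposition~\ref{simple-shearer}(4) holds with $\lambda = 1$; since $b \le b_{\max}$ is a constant, all derived parameters ($m$, $W_\eps$, $\eta_{\max}$, $h$) are $O(1)$, and each $B_e$ is decided by a constant-capacity automaton, so Theorem~\ref{main-thm4-summary2}(2) runs in $\tilde O(\log^2 n)$ time and $\poly(n)$ processors, i.e.\ $NC^2$. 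The one subtlety is that at $b = 4\Delta(G)$ the criterion is exactly tight and carries no exponential slack; this boundary case needs separate treatment, which is cheap because all size parameters are $O(1)$ and we are allowed the $1-\lambda$ slack in the target.

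For the weight I would add an auxiliary event set $\mathcal E = \{ E_v : v \in V \}$, where $E_v$ is the event ``$v \notin I$'' with weight $c_{E_v} = w(v)$. Each $E_v$ depends on the single variable $X(\mathcal V(v))$, is decided by an automaton of capacity $2$, and shares variables with at most $b\Delta(G)$ bad-events, so Theorem~\ref{main-thm4-summary2}(2) applies; run with a small constant $\delta$ it produces an independent transversal $I$ with $\sum_v w(v)[[v \notin I]] \le (1+\delta)\sum_v w(v)\mu(E_v)$, hence $w(I) \ge w(V) - (1+\delta)\sum_v w(v)\mu(E_v)$. To estimate $\mu(E_v) = \alpha^*(E_v) p(E_v)$ note $p(E_v) = 1 - 1/b$, and that every bad-event incident to the block $\mathcal V(v)$ but not to $v$ itself lies outside $\mathcal B^{E_v}$; consequently any wdag in $\mathfrak C'[E_v]$ has its unique sink labelled $B_{(v,w)}$ for a cross-neighbour $w$ of $v$, so $\alpha^*(E_v) \le 1 + \sum_{w} \mu(B_{(v,w)})$, which together with the cluster-expansion estimate $\mu(B_e) \le \tilde\mu(B_e)$ (Proposition~\ref{simple-shearer}(5)) bounds $\alpha^*(E_v)$.

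The hard part is the exact constant. The computation just sketched only yields $\mu(E_v) \le 1 - \Theta(1/b^2)$, hence $w(I) \ge \Theta(w(V)/b^2)$, a factor $\Theta(b)$ short of the claimed $\tfrac{1-\lambda}{2b-1}w(V)$ — and this is intrinsic to the coarse bound $\Pr_{MT}(E) \le \mu(E)$, since for a probability-$1/b$ event like ``$v$ is chosen'' no such bound can push the complementary probability above $1 - \mu(E_v)$. Closing this gap is the constant-block-size analogue of the weighted-IT bound of \cite{harris-partial}: one must show directly that $\Pr_{MT}(v \in I) \ge \tfrac{1-\lambda}{2b-1}$, the intuition being that $v$ is chosen initially with probability $1/b$ and then ``survives'' the resampling process with probability at least roughly $1/2$, via a refined witness-DAG argument exploiting that a single block participates in few resamplings. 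I would then need to re-run the derandomization so that this sharper statistic, rather than $\mu(E_v)$, is the quantity being fooled by the automata-based construction of Section~\ref{fool-mt-sec}; this may also call for restructuring $\Omega$ (for instance a preliminary thinning step) so that the edge-LLL acquires genuine exponential slack and the survival argument goes through cleanly. The structural $NC^2$ ingredients are routine given Section~\ref{fool-mt-sec}; obtaining the $\tfrac{1}{2b-1}$ constant is the only real obstacle.
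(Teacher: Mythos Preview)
Your setup (one variable per block, one bad-event per edge, auxiliary events $E_v$ for ``$v$ was not selected'', and invocation of Theorem~\ref{main-thm4-summary2}) is exactly the paper's, and your observation that $\mathcal B^{E_v}$ omits every edge-event incident to $\mathcal V(v)\setminus\{v\}$ is the right one. The genuine gap is in the last step, where you conclude that the bound $\Pr_{MT}(E)\le\mu(E)$ is intrinsically a factor $b$ too weak. It is not: the paper obtains the full $\tfrac{1-\lambda}{2b-1}$ constant directly from $\mu(E_v)$, with no refined witness-DAG survival argument and no restructuring of $\Omega$.

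The point you are missing is that $\alpha^*(E_v)$ should be bounded via a cluster-expansion weighting tailored to the \emph{restricted} system $\mathcal B^{E_v}$, not via the uniform weight inherited from the full system. In $\mathcal B^{E_v}$, each edge-event $B_{(v,w)}$ incident to $v$ has only $\Delta$ neighbours on the $\mathcal V(v)$-side (the other $\approx (b-1)\Delta$ events having been deleted), so these events can legitimately carry a smaller weight. Concretely the paper takes $\tilde\mu(B)=t_1=4/b^2$ for edges not touching $v$ and a strictly smaller $\tilde\mu(B)=t_2\approx 2/b^2$ for the edges $(v,w)$; one checks this pair satisfies the cluster-expansion criterion on $\mathcal B^{E_v}$ and yields $\alpha^*(E_v)\le 1+\Delta t_2 = \tfrac{2b}{2b-1}$, hence $\mu(E_v)\le\tfrac{b-1}{b}\cdot\tfrac{2b}{2b-1}=\tfrac{2b-2}{2b-1}$, which is exactly what is needed after choosing $\delta=\lambda/(2b-2)$. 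Your uniform choice $\tilde\mu(B)=4/b^2$ gives only $\alpha^*(E_v)\le 1+1/b$; multiplied by $p(E_v)=1-1/b$ this collapses to $1-1/b^2$, and it is precisely this accidental cancellation---not any weakness of the MT-distribution framework---that costs you the factor $b$. (For the exponential-slack issue at $b=4\Delta$, the paper likewise uses cluster-expansion rather than bound-by-variables, verifies directly that for each fixed $b\le b_{\max}$ there is a constant $\eps>0$ of slack, and then invokes Theorem~\ref{main-thm4-summary2}(3).)
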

\begin{proof}
We apply the LLL, wherein each block $W$ independently selects a vertex $X_W$ uniformly at random. For each edge $(u,v)$ there is a bad-event that both end-points are selected.  We will apply the cluster-expansion criterion with $\tilde \mu(B) = \alpha$ for all $B$, for some scalar $\alpha > 0$ to be determined.  Consider an edge $(u,v)$ with associated bad-event $B$. To form a stable set $I \subseteq \overline \Gamma(B)$, we may select either $B$ itself, or we may select any other edge from the block of $u$ and any other edge from the block of $v$. There are at most $b \Delta - 1$ other edges in these blocks, so
  $$
  \sum_{\text{stable $I \subseteq \overline \Gamma(B)$}} \prod_{A \in I} \tilde \mu(A) \leq \alpha + (1 + (b \Delta - 1) \alpha )^2
  $$
Since $\Pr_{\Omega}(B) = 1/b^2$ and $\Delta \leq b/4$,  the cluster-expansion criterion is satisfied with $\eps$-slack if
  $$
  \alpha \geq (1/b^2)^{1-\eps} \bigl( \alpha + (1 + (b^2/4 - 1) \alpha)^2 \bigr),
  $$
and it is routine to verify this holds for some $\alpha, \eps > 0$ which depends solely on $b$.

Following \cite{mtdist3}, we use the MT-distribution to bound the resulting weight $w(I)$. For each vertex $v$ in a block $W_v$, define the event $E_v$ that $v \notin I$, with associated weight $c_{E_v} = w(E_v)$.  Let us compute $\mu(\mathfrak C'_{E_v})$. Note that $\mathcal B^{E_v}$  does not contain bad-events for any edge $(u,v')$ where $v' \in W_v - \{ v \}$.  We apply the cluster-expansion criterion for the bad-events $\mathcal B^{E_v}$; for a bad-event corresponding to an edge $(u_1, u_2)$ where $u_1, u_2 \in V - W_v$, we set $\tilde \mu(B) = \alpha_1 = 4/b^2$, and for a bad-event corresponding to an edge $(u,v)$ we set $\tilde \mu(B) = \alpha_2 = \frac{4}{2 b (b-1)}$. For the first type of bad-event, a stable set $I \subseteq \overline \Gamma(B)$ may have one event from the block of $u_1$ and another from the block of $u_2$. Since $\alpha_2 \leq \alpha_1$, we calculate:
  $$
p(B)   \sum_{\text{stable $I \subseteq \overline \Gamma(B)$}} \prod_{A \in I} \tilde \mu(A) \leq b^{-2} (1 + b \Delta \alpha_1 )^2 = 4/b^2 = \tilde \mu(B)
$$

  For the second type of event, a stable set $I \subseteq \overline \Gamma(B)$  may have one event from an edge $(v,u')$ and a second from the block of $u$, giving
  $$
p(B)    \sum_{\text{stable $I \subseteq \overline \Gamma(B)$}} \prod_{A \in I} \tilde \mu(A) \leq b^{-2} (1 + \Delta \alpha_2 ) (1 + b \Delta \alpha_1) = \frac{4 b}{b^2(2 b-1)} = \tilde \mu(B)
    $$

Now, clearly $\Pr_{\Omega}(E_v) = (b-1)/b$, and stable-set $I \subseteq \Gamma(E_v)$ has at most one edge $(v,u')$, so 
    $$
    \mu(\mathfrak C'_{E_v}) \leq \frac{b-1}{b} \Bigl( 1 + \Delta \alpha_2 \Bigr) \leq \frac{2b-2}{2b-1}
    $$
    
    The events $E_v$ and bad-events $B$ can easily be checked by automata of size 2. We apply Theorem~\ref{main-thm4-summary2}(3) with $\delta = \lambda/(2b - 2)$, noting that the bad-events as well as the events in $\mathcal E$ involve at most $2$ variables. This generates an independent transversal $I$ with
  \begin{align*}
    \sum_{v \notin I} w(v)  \leq (1 + \delta) \sum_v c_{E_v} \mu(E_v) \leq w(V) \Bigl( 1 + \frac{\lambda}{2b-2} \Bigr) \Bigl( \frac{2b-2}{2b-1} \Bigr) = w(V) \Bigl( \frac{2 b - 2 + \lambda}{2b - 1} \Bigr)
  \end{align*}
and so
  \[
  w(I) = w(V) - \sum_{v \notin I} w(v) \geq w(V) - w(V) \Bigl( \frac{2 b - 2 + \lambda}{2b - 1} \Bigr)= w(V) \Bigl( \frac{1 - \lambda}{2 b - 1} \Bigr). \qedhere
  \]
\end{proof}

We can now obtain the parallel part of Theorem~\ref{it-thm}. We show a slightly more detailed result:
\begin{theorem}
  Let $\eps, \lambda > 0$ be arbitrary constants. There is an $NC^2$ algorithm which takes as input a graph $G = (V,E)$ with a non-negative vertex weighting $w$ and a $b$-regular vertex partition where $b \geq (4 + \eps) \Delta(G)$. It returns an independent transversal $I$ of $G$ with $w(I) \geq (\tfrac{1}{2}-\lambda) w(V)/b$.
 \end{theorem}
\begin{proof}
Apply Proposition~\ref{itlem3} with $\phi = 4$. For fixed $\eps$ and $\lambda$, the resulting graph $G' = G[L]$ has block-size $b'$ with $4 \Delta(G') \leq b' \leq O(1)$, and has $w(L')/b' \geq (1 - \lambda) w(V)/b$.   Next apply Proposition~\ref{itlem0} to $G'$, getting an IT $I$ with $w(I) \geq (1 - \lambda) w(L')/(2 b' - 1)$. Since $\lambda$ here is an arbitrary constant, the result follows by rescaling $\lambda$.
\end{proof}

\section{Strong coloring}
\label{app3-sec}
We now consider strong coloring of graphs, which is closely related to independent transversals. Given a $b$-regular vertex partition $\mathcal V$ of a graph $G = (V,E)$, we define a \emph{strong coloring} of $G$ with respect to $\mathcal V$ to be a partition of $V$ into independent transversals $V = I_1 \sqcup \dots \sqcup I_b$. Equivalently, it is a proper vertex $b$-coloring where exactly one vertex in each block receives each color.

The best general bound currently known \cite{haxstrong2} in terms of $b$ and $\Delta$ is that strong coloring is possible when $b \geq (11/4) \Delta$ for large $\Delta$.  Aharoni, Berger, \& Ziv \cite{aharoni} showed that the condition $b \geq 3 \Delta$ also suffices (for all $\Delta$); this was turned into a randomized sequential algorithm in \cite{it3} under the slightly stronger condition $b \geq (3 + \eps) \Delta$ for arbitrary constant $\eps > 0$.  It is conjectured that the correct bound is $b \geq 2 \Delta$.

The parallel algorithms lag behind significantly. The current best randomized algorithm \cite{harris-llll-par} is based on the Lopsided LLL for random permutations; it requires $b \geq (\frac{256}{27} + \eps) \Delta$ and runs in $\tilde O(\log^4 n)$ time. We are not aware of any deterministic parallel algorithms.

Let us explain the construction of \cite{aharoni}: consider a  \emph{partial strong coloring}, that is, a partial coloring where there is no edge $(u,v)$ where both vertices both receive the same (non-blank) color, and likewise there is no pair of vertices in the same block with the same color. We let $U(\chi)$ denote the number of \emph{uncolored} vertices in $\chi$.

 For a given index $i \in [b]$, we may form an \emph{augmentation graph} $G^{\text{aug}}_i$ from $G$, where for each  $i$-colored vertex $v$, we remove all vertices in the block of $v$ which have the same color as a neighbor of $v$.  Given an IT $I$ of $G^{\text{aug}}_i$, we form a new partial coloring $\chi_{\text{aug}}$ as follows: for each block $W$ with vertex $v \in I \cap W$, set $\chi_{\text{aug}}(v) = i$ and $\chi_{\text{aug}}(v') = \chi(v)$ for the vertex $v' \in \chi^{-1}(i) \cap W$ (if any). All other vertices $x$ have $\chi_{\text{aug}}(x) = \chi(x)$. The fundamental observation of \cite{aharoni} is the following:
\begin{proposition}[\cite{aharoni}]
  \label{gc1}
  The graph $G^{\text{aug}}_i$ has blocks of size at least $b - \Delta$. Furthermore,  $\chi_{\text{aug}}$ is a partial strong coloring with $U(\chi_{\text{aug}}) \leq U(\chi) - s$, where $s$ is the number of vertices $v \in I$ such that $v$ was uncolored in $\chi$ and no other vertex in the same block as $v$ was colored $i$ under $\chi$ 
\end{proposition}

We can immediately obtain a deterministic sequential algorithm.
\begin{proof}[Proof of the sequential part of Theorem~\ref{it-thm2}]
  We begin with the empty partial coloring, and apply a series of augmentation steps: namely, we choose an arbitrary uncolored vertex $u$, and let $i$ be some color not appearing in its block. We discard arbitrary vertices from $G^{\text{aug}}_i$, aside from vertex $u$, so that each block has size exactly $b' = b - \Delta$; let $V'$ denote the remaining vertices in $G^{\text{aug}}_i$.   Since $b' \geq (2 + \eps) \Delta$, we  apply Theorem~\ref{it-thm} with the weight function defined by $w(v) = 1$ if $v = u$, otherwise $w(v) = 0$. This yields an IT $I$ with $w(I) \geq w(V')/b' = 1/b' > 0$, and hence $u \in I$. By our choice of $u$, Proposition~\ref{gc1} gives $U(\chi_{\text{aug}}) \leq U(\chi) - 1$.     In particular, we get a full strong coloring after $n$ rounds.
\end{proof}

We now turn to the parallel algorithm. There are two stages to the construction: first, we get an algorithm for $5 \Delta(G) \leq b \leq O(1)$; we next use vertex-splitting to extend to unbounded $b$.
\begin{theorem}
  \label{pb1}
  Let $C$ be an arbitrary constant. There is an $NC^2$ algorithm which takes as input a graph $G = (V,E)$ with a $b$-regular vertex partition $\mathcal V$ such that $5 \Delta(G) \leq b \leq C$, and returns a strong coloring of $G$ with respect to $\mathcal V$.
\end{theorem}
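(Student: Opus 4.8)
The plan is to parallelize the Aharoni--Berger--Ziv augmentation process of Proposition~\ref{gc1} that underlies the sequential algorithm. The sequential version performs one augmentation per uncolored vertex, which is inherently serial; the point of the bounded-block-size regime is that, since $b\le b_{\max}=O(1)$ and hence $\Delta(G)\le b/5=O(1)$, each augmentation can be confined to a bounded-radius region of $G$, and augmentations carried out in disjoint regions do not interfere. We therefore maintain a partial strong coloring $\chi$, initially empty, and build it up in a bounded number of parallel phases.

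Concretely, let $G''$ be $G$ together with a clique on each block of $\mathcal V$; then $\Delta(G'')\le\Delta(G)+b-1=O(1)$, so fixing a suitable constant radius $r=r(b_{\max})$ the power graph $(G'')^{2r+1}$ also has constant maximum degree, and we compute once a proper coloring $\psi$ of $(G'')^{2r+1}$ with $C=O(1)$ colors in $NC^2$ (for instance via Theorem~\ref{def-col-cor} with $h=1$ to get a $1$-defective $O(1)$-coloring, then properly $2$-coloring each class, which is a matching). The algorithm then runs for $C$ phases; in phase $c$ we process, simultaneously and independently, every vertex $u$ with $\psi(u)=c$ that is still uncolored. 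For each such $u$ we run the ABZ augmentation of Proposition~\ref{gc1} restricted to the constant-size sub-instance induced by the ball $B^{G''}_r(u)$ (a union of whole blocks near $u$): we pick a color $i$ absent from $u$'s block, form the locally-modified graph, equalize its block sizes to $b'=b-\Delta(G)$ keeping $u$, and extract an independent transversal that agrees with $\chi$ on the frontier blocks and contains $u$ in the interior. Since $b'=b-\Delta(G)\ge 4\Delta(G)\ge 4\Delta(G')$ --- which is exactly what the hypothesis $5\Delta(G)\le b$ buys --- this transversal can be found by Proposition~\ref{itlem0} (or, as the region has constant size, by brute force). Distinct $u,u'$ of the same $\psi$-class are at $G''$-distance more than $2r+1$, so their regions are disjoint and non-adjacent and the updates commute. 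After $C$ phases every vertex has been processed in its own phase, so $\chi$ is a full strong coloring; the total time is the cost of computing $\psi$ plus $C$ rounds of $O(1)$-time local work, i.e.\ $\tilde O(\log^2 n)$ with $\poly(n)$ processors.

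Correctness rests on the invariant that $\chi$ is always a partial strong coloring: inside each region this follows from Proposition~\ref{gc1}, and a localized augmentation at $u$ alters $\chi$ only on vertices whose $G$-neighbors and block-mates all lie inside $B^{G''}_r(u)$, so local validity lifts to global validity. The main obstacle I expect is exactly this localization of the ABZ step: one must pin the transversal of the modified graph to the old color-$i$ pattern on a margin of boundary blocks, so that no recoloring leaks across the frontier, while still being able to route it through $u$, and then argue that such a boundary-constrained transversal exists under $b\ge 5\Delta(G)$. This is where the constant slack between $4\Delta(G)$ (what Proposition~\ref{itlem0} needs) and $b-\Delta(G)$ is consumed, and a Haxell/ABZ-style augmenting-tree argument on the constant-size region is the natural tool. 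A secondary item to pin down is that the one-off distance-$O(1)$ coloring of the bounded-degree graph $(G'')^{2r+1}$ genuinely fits in the $NC^2$ budget, which is standard but should be stated explicitly.
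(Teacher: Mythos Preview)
Your approach is genuinely different from the paper's, and it has two real gaps.

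The paper does not attempt to localize augmentations at all. Instead it performs one \emph{global} augmentation per stage: at stage $\ell$ it picks the color $i$ maximizing $\Phi_{\ell,i}=\sum_U[[\chi_\ell^{-1}(i)\cap U=\emptyset]]\cdot|\chi_\ell^{-1}(0)\cap U|$, forms the ABZ-modified graph $G'$ with blocks trimmed to size $b-\Delta(G)\ge 4\Delta(G)$, and calls Proposition~\ref{itlem0} with the weight $w(v)=[[\chi_\ell(v)=0\text{ and }i\notin\chi_\ell(\mathcal V(v))]]$ to obtain an IT $I$ with $w(I)\ge\Omega(w(V))$. Since $b\le b_{\max}=O(1)$, the choice of $i$ guarantees $w(V)=\Phi_{\ell,i}\ge \sum_U y_{U,\ell}^2/b\ge\Omega(y_\ell)$, so Proposition~\ref{gc1} gives $y_{\ell+1}\le(1-\Omega(1))y_\ell$ and $O(\log n)$ stages suffice, each one call to Proposition~\ref{itlem0}. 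The point is that the \emph{weighted} IT does the parallelizing work: a single transversal advances $\Omega(y_\ell)$ vertices at once, with no locality or boundary argument needed.

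Your localization route has two problems. First, the claim that after $C$ phases $\chi$ is full is false as stated: an ABZ augmentation at $u$ can uncolor a vertex $w\in B_r(u)$ (the old holder of color $i$ in some block is swapped to $\chi(v)=0$), and $\psi(w)$ may well be smaller than the current phase, so $w$ is never revisited. You give no argument that even $O(\log n)$ super-rounds would suffice. Second, the boundary-pinning step is the crux and is not established: you need an IT of the local modified graph that contains $u$ \emph{and} contains each existing color-$i$ vertex on every frontier block. Fixing those frontier vertices deletes their neighbors from the interior lists, and $b\ge 5\Delta$ leaves no slack beyond $b-\Delta\ge 4\Delta$ to absorb that loss; the ``Haxell/ABZ-style augmenting-tree argument'' you gesture at would have to be carried out in full, and it is not clear it goes through with the required pinning and the forced inclusion of $u$. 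Both issues disappear in the paper's approach precisely because it never localizes and instead leans on the weighted-IT machinery you helped set up in Proposition~\ref{itlem0}.
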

\begin{proof}
  We produce  $\chi$ through a series of augmentation steps, starting with the initial coloring $\chi_0$ being the empty coloring. Let us consider some stage $\ell$ of the process, where we have a partial strong coloring $\chi_{\ell}$. Let $A$ be the set of uncolored vertices in $\chi_{\ell}$; we say a block $W$ \emph{omits} color $j$ if if $\chi_{\ell}^{-1}(j) \cap W = \emptyset$.  For each color $i$, we define the quantity 
 $$
  \Phi_{i}  = \sum_{W \in \mathcal V: \text{$W$ omits $i$}} | A \cap W |,
  $$
  and note that if we sum over all possible colors $j = 1, \dots, b$ we get:
  $$
  \sum_j \Phi_{ j} = \sum_{W \in \mathcal V} | A \cap W | \sum_{j: \text{$W$ omits $j$}} 1 = \sum_{W \in \mathcal V} | A \cap W |^2.
  $$
    
  Now let us select a color $i^*$ to maximize the quantity $\Phi_{i^*}$.   For this color $i^{*}$, define the weight function $w$ by $w(v) = 1$ if $v \in A$ and the block of $v$ omits color $i^*$, and $w(v) = 0$ otherwise; note that $w(V) = \Phi_{i^*}$.  We form graph $G^{\text{aug}}_{i^*}$ and discard the lowest-weight $b - b'$ vertices from each block, where $b' = b - \Delta$. We then apply Proposition~\ref{itlem0} to $G^{\text{aug}}_{i^*}, w$ with some constant parameter $\lambda$, and augment $\chi_{\ell}$ by the resulting IT $I$ to obtain the next coloring $\chi_{\ell+1}$.  
  
  By our definition of $w$ and Proposition~\ref{gc1}, we have $U(\chi_{\ell+1}) \leq U(\chi_{\ell}) - w(I)$. It remains to estimate $w(I)$.   Only colored vertices are removed from $G$ to get $G^{\text{aug}}_{i^*}$, and we discard the lowest-weight vertices to reduce the blocksize to $b'$, so overall $w(V') \geq \Omega(w(V))$ where $V'$ is the vertex-set of $G^{\text{aug}}_{i^*}$. Since $w(V) = \Phi_{i^*}$ and $b$ is constant we have:
  $$
  w(I) \geq w(V') (1-\lambda)/(2 b' - 1) \geq \Omega(w(V)) \geq \Omega( \Phi_{ i^*} ).
  $$
  
  Since $i^*$ is chosen as the maximum, we have $\Phi_{i^*} \geq \sum_i \Phi_i / b \geq |A \cap W |^2/b$. Since $b$ is constant, we thus have $w(I) \geq \Omega( \sum_W |A \cap W|^2 ) \geq \Omega(U(\chi_{\ell}))$. So we have shown that
   $$
 U(\chi_{\ell+1}) \leq U(\chi_{\ell}) - \Omega( U( \chi_{\ell} )).
  $$

Thus $U(\chi_{\ell}) = 0$ for some $\ell = O(\log n)$, so that $\chi_{\ell}$ is a full strong coloring. Each iteration requires finding a weighted IT via Proposition~\ref{itlem0}, which requires $\tilde O(\log^2 n)$ time.
\end{proof}

We now extend to large block-size via a series of vertex-splitting steps.
\begin{proposition}
  \label{itlem4}
  There is an $NC^2$ algorithm which takes as input a graph $G = (V,E)$ with a $b$-regular vertex partition where $b \geq K$ for some sufficiently large constant $K$, and returns disjoint vertex sets $L^{(1)}, L^{(2)}$ such that $V = L^{(1)} \cup L^{(2)}$ and satisfying the following  properties:
  \begin{enumerate}
  \item The induced partition on $L^{(1)}$ is $\lceil b/2 \rceil$-regular.
  \item The induced partition on $L^{(2)}$ is $\lfloor b/2 \rfloor$-regular.
  \item For $i = 1,2$ we have $\Delta(G[L^{(i)}]) \leq D/2 + 20 \sqrt{b \log b}$ where $D = \max \{ b/10, \Delta(G) \}$.
  \end{enumerate}
\end{proposition}
\begin{proof}
We first use the LLL to partition $V$ into sets $A^{(1)}, A^{(2)}, A^{(12)}$ with the following properties:
 \begin{enumerate}
  \item For all blocks $W \in \mathcal V$ we have $| W \cap A^{(i)} | \leq b/2$ for $i = 1,2$
   \item For all vertices $v$ we have $|N(v) \cap ( A^{(i)} \cup A^{(12)} )| \leq D/2 + 20 \sqrt{b \log b}$
   \end{enumerate}
   
Specifically, we place each vertex $v$ into exactly one of the sets $A^{(1)}, A^{(2)}, A^{(12)}$, independently, with probabilities $p, p, 1 - 2 p$ respectively, for $p = 1/2 - 10 \sqrt{ \frac{b}{\log b}}$. (For $K$ sufficiently large we have $p \in [0,1/2]$ so this is a valid probability distribution.) One can show that this random process satisfies the symmetric LLL criterion with $\eps$-slack and the bad-events can be decided via automata on $\poly(d)$ states. The analysis is very similar to Lemma~\ref{itlem1}, so we omit the proof.

   Given these sets $A$, we form $L^{(1)}, L^{(2)}$ by starting with $L^{(1)} = A^{(1)}, L^{(2)} = A^{(2)}$; for each block $W$, we move $\lfloor b/2 \rfloor - | W \cap A^{(1)} |$ vertices of $A^{(12)} \cap W$ to $L^{(1)}$ and the remaining vertices in $A^{(12)} \cap W$ to $L^{(2)}$. The first condition on the sizes of $W \cap A^{(1)}, W \cap A^{(2)}$ ensures we have enough vertices in $A^{(12)} \cap W$ to allow this. Also, since $L^{(i)} \subseteq A^{(i)} \cup A^{(12)}$ the second condition on $|N(v) \cap ( A^{(i)} \cup A^{(12)} )|$ ensures the bound on  $\Delta(G[L^{(i)}])$ for $i = 1,2$.
\end{proof}

We can now complete the proof of Theorem~\ref{it-thm2}:
\begin{proof}[Proof of the parallel part of Theorem~\ref{it-thm2}]
  Given the graph $G = (V,E)$ with a vertex partition $\mathcal V$, we will apply a series of vertex-splitting steps via Proposition~\ref{itlem4}; at each stage $i = 0, $ we have disjoint vertex sets $V_{i,j}$ where $j = 0, \dots, 2^i - 1$, with the following properties:
  \begin{enumerate}
  \item For each $i$, the sets $V_{i,0}, \dots, V_{i,2^i-1}$ partition $V$. 
  \item Each induced partition on $V_{i,j}$ is $b_{i,j}$-regular where $\lfloor b/2^i \rfloor \leq b_{i,j} \leq \lceil b/2^i \rceil$.
    \item Each set $V_{i,j}$ has $\Delta(G[V_{i,j}]) \leq \delta_i$, for a parameter $\delta_i$ to be determined.
  \end{enumerate}

  Initially, we set $V_{0,0} = V$, and we apply this process as long as $b/2^i \geq \tau$ for some constant threshold value $\tau$ (for fixed $\eps$). This terminates after $O(\log b)$ rounds. In each round, we apply Proposition~\ref{itlem4} in parallel to induced subgraphs $G[V_{i,j}]$, and we set $V_{i+1, 2 j}, V_{i+1,2 j+1}$ to be the resulting sets $L^{(1)}, L^{(2)}$, with corresponding block-sizes $b_{i+1,2j} =  \lceil b_{i,j}/2 \rceil$ and $b_{i+1, 2j+1} = \lfloor b_{i,j} / 2 \rfloor$. 
  
  For $\tau$ sufficiently large, we ensure that $b_{i,j} \geq K$, and clearly the bounds $\lfloor b/2^i \rfloor \leq b_{i,j} \leq \lceil b/2^i \rceil$ hold.   The overall process terminates in $\tilde O(\log^3 n)$ time. The next step is determine the values $\delta_i$. We define them via the recurrence
$$
\delta_0 = \frac{b}{4+\eps}, \qquad \delta_{i+1} = \delta_i/2 + 21 \sqrt{ (b/2^i) \log(b/2^i)}
$$

We claim that  $b_{i,j} \geq 4 \delta_i$ for all $i$. For this, we first show by induction on $i$ that $\delta_{i+1} \leq \delta_0/2^i + (b/2^i)^{3/4}$. The base case $i = 0$ clearly holds. For the induction step, we have
\begin{align*}
  \delta_{i+1} &= \delta_i/2 + 21 \sqrt{ (b/2^i) \log(b/2^i)} \leq ( \delta_0/2^i + (b/2^i)^{3/4} ) /2 + 21 \sqrt{ (b/2^i) \log(b/2^i)}
  \end{align*}
and simple analysis shows this is at most $\delta_0/2^{i+1} + (b/2^{i+1})^{3/4}$ for $b/2^i \geq \tau$ and $\tau$ sufficiently large. Accordingly, to show $b_{i,j} \geq 4 \delta_i$, it therefore suffices to show that
$$
b/2^i -1 \geq 4( \delta_0/2^i + (b/2^i)^{3/4})  = (b/2^i)/(1+\eps/4) + 4 (b/2^i)^{3/4}
$$
which is clear for $b/2^i \geq \tau$ and $\tau$ is sufficiently large.

We can now show the crucial property that $\Delta(G[V_{i,j}]) \leq \delta_j$
for all $i,j$ by induction on $i$. Again, the base case is immediate. For the induction step, Proposition~\ref{itlem4} gives $\Delta(G[V_{i+1,j}]) \leq D_i/2 + 20 \sqrt{b_{i,j} \log b_{i,j}}$. Since $b_{i,j} \leq b/2^i + 1$, and $b/2^i \geq K$ for a sufficiently large constant $K$, this is at most $D/2 + 21 \sqrt{(b/2^i) \log (b/2^i)}$.  Observe that $\delta_i \geq \delta_0 / 2^i \geq \frac{b}{(4+\eps) 2^i}$ and $b_{i,j} \leq \lceil b/2^i \rceil \leq b/2^i + 1$. Since clearly $b/2^i + 1 \leq \frac{10 b}{(4+\eps) 2^i}$ holds when $b/2^i \geq \tau$ and $\tau$ is sufficiently large, this shows  $b_{i,j} \leq 10\delta_i$ and hence $D_i \leq \delta_i$. Thus indeed $\Delta(G[V_{i+1,j}]) \leq  \delta_{i+1}$.

Thus the preconditions of Proposition~\ref{itlem4} hold at each iteration $i$. At the end of this process, we apply Theorem~\ref{pb1} in parallel to the graphs $G[V_{i,j}]$, and we combine all the resulting strong colorings $\chi_{i,j}$ into a single strong coloring $\chi$ on $G$. Each $G[V_{i,j}]$ is $b_{i,j}$-regular where $b_{i,j} \leq 2 \tau + 1 \leq O(1)$, so these all run in $\tilde O(\log^3 n)$ time.
\end{proof}

\section{Acknowledgments}
Thanks to Aravind Srinivasan for helpful discussions, and for pointing me to the method of Sivakumar.  Thanks to Penny Haxell and Alessandra Graf for some discussions about derandomization in the context of independent transversals. Thanks to David Wood for finding some errors in the calculations for non-repetitive coloring. Thanks to Jonas H{\"u}botter and journal reviewers for finding a number of corrections and errors.

\appendix
\section{Calculations for the Shearer criterion and wdags}
\label{shearer-sec-proof}

\begin{proof}[Proof of Proposition~\ref{simple-shearer}]  
We have stated the criteria from least general  to most general, but we prove them in the reverse direction.
\begin{enumerate}
\item[(5)]  For $h \geq 0$, let $\mathfrak W^h_I$ be the set of wdags in $\mathfrak W_I$ with depth at most $h$. We show by induction on $h$  that $w(\mathfrak W^h_I) \leq \prod_{B \in I} \tilde \mu(B)$ for all $I$. The base case $h = 0$ is immediate. For the induction step, let $I = \{B_1, \dots, B_t \}$. Given a wdag $G \in \mathfrak W_I^h$, removing the sink nodes from $G$ yields a wdag $G'$ of depth at most $h-1$. Furthermore, the stable set $J = \sink(G')$ can be partitioned (not necessarily uniquely) as $J = \bigcup_{i=1}^t J_i$, where $J_i \subseteq \overline \Gamma(B_i)$. Thus, we have:
$$
w(\mathfrak W^h_I) \leq \prod_{B \in I} p(B) \negthickspace \negthickspace \negthickspace \negthickspace \negthickspace \sum_{\substack{\text{disjoint stable-sets} \\
      J_1 \subseteq \overline \Gamma(B_1), \dots, J_t \subseteq \overline \Gamma(B_t)}} \negthickspace \negthickspace \negthickspace  \negthickspace \negthickspace w(\mathfrak W^{h-1}_{J_1 \cup \dots \cup J_t})
      $$
      By applying the induction hypothesis to each term $ w(\mathfrak W^{h-1}_{J_1 \cup \dots \cup J_t})$, we then get:
      \begin{align*}
w(\mathfrak W^h_I)   &\leq \prod_{B \in I} p(B) \negthickspace \negthickspace  \sum_{\substack{\text{stable sets} \\ 
      J_1 \subseteq \overline \Gamma(B_1), \dots, J_t \subseteq \overline \Gamma(B_t)}}   \prod_{B \in J_1 \cup \dots \cup J_t} \negthickspace \negthickspace \negthickspace \tilde \mu(B)  \\
  &\leq \sum_{\text{stable $J_1 \subseteq \overline \Gamma(B_1)$}} p(B_1) \prod_{A_1 \in J_1} \tilde \mu(A_1)  \thickspace \thickspace \cdots \negthickspace \negthickspace \sum_{\text{stable $J_t \subseteq \overline \Gamma(B_t)$}} p(B_t) \prod_{A_t \in J_t} \tilde \mu(A_t) \\
  &\leq \tilde \mu(B_1) \cdots \tilde \mu(B_t) \qquad \text{by hypothesis}  
\end{align*}

Taking the limit as $h \rightarrow \infty$ gives $w(\mathfrak W_I) \leq \prod_{B \in I} \tilde \mu(B) < \infty$ for all $I$, so Shearer's criterion is satisfied. For $I = \{B \}$, we have $\mu(B) = w(\mathfrak W_{ \{B \} }) \leq \tilde \mu(B)$.

\item[(4)] We apply the cluster-expansion criterion with $\tilde \mu(B) = p(B) (1+\lambda)^{|\var(B)|}$ for each $B$. To enumerate the stable sets $I \subseteq \overline \Gamma(B)$, we may choose, for each $k \in \var(B)$, at most one other bad-event $A \in \mathcal B_k$ to place into $I$. So we calculate
$$
  \sum_{\text{stable $I \subseteq \overline \Gamma(B)$}} \prod_{A \in I} \tilde \mu(A) \leq \prod_{k \in \var(B)} \Bigl(1 + \sum_{A \in \mathcal B_k} \tilde \mu(A) \Bigr) \leq \prod_{k \in \var(B)} \Bigl(1 + \sum_{A \in \mathcal B_k} p(A) (1 + \lambda)^{|\var(A)|}  \Bigr)$$

By hypothesis, this is at most $(1+\lambda)^{|\var(B)|}$. So $p(B) \sum_{\text{stable $I \subseteq \overline \Gamma(B)$}}  \prod_{A \in I} \tilde \mu(A) \leq p(B) (1+\lambda)^{|\var(B)|} = \tilde \mu(B)$  as required for the cluster-expansion criterion. Furthermore, we have $\mu(B) \leq \tilde \mu(B) = p(B) (1+\lambda)^{|\var(B)|}$.

\item [(3)] Following \cite{harvey}, we apply the cluster-expansion criterion with $\tilde \mu(B) = \frac{x(B)}{1 - x(B)}$ for each $B$.  Here, $$
\sum_{\text{stable $I \subseteq \overline \Gamma(B)$}}  \prod_{A \in I} \tilde \mu(A) \leq \sum_{I \subseteq \overline \Gamma(B)} \prod_{A \in I} \tilde \mu(A) =  \prod_{A \in \overline \Gamma(B)} (1 + \tilde \mu(A)) =  \prod_{A \in \overline \Gamma(B)} (1 - x(A))^{-1}.
$$
    Thus, using the given bound $p(B) \leq x(B) \prod_{A \in \Gamma(B)} (1 - x(A))$, we have
  \begin{align*}
    p(B) \negthickspace  \negthickspace \negthickspace \negthickspace \negthickspace \sum_{\text{stable $I \subseteq \overline \Gamma(B)$}}  \prod_{A \in I} \tilde \mu(A) &\leq \Bigl( x(B) \negthickspace \negthickspace \prod_{A \in \Gamma(B)}\negthickspace \negthickspace (1 - x(A)) \Bigr) \Bigl( \negthinspace \prod_{A \in \overline \Gamma(B)} \negthickspace  \negthickspace (1 - x(A))^{-1} \Bigr) = \frac{x(B)}{1 - x(B)} = \tilde \mu(B).
  \end{align*}
as required for the cluster-expansion criterion.

\item[(2)] We apply the asymmetric criterion with $x(B) = \min \{1, 2 p(B) \}$ for all $B$. If $\Gamma(B) = \emptyset$, the condition is trivially satisfied. Otherwise, suppose $\Gamma(B) \neq \emptyset$. Note that, letting $B'$ be an arbitrary element of $\Gamma(B)$, we have $p(B) \leq \sum_{A \in \Gamma(B')} p(A) \leq 1/4$. Now  $$
x(B) \prod_{A \in \Gamma(B)} (1 - x(A))  \geq 2 p(B) \prod_{A \in \Gamma(B)} (1 - 2 p(A))  \geq 2 p(B) (1 - 2  \sum_{A \in \Gamma(B)}p(A));
$$
this is at least $2p(B) (1 - 2 \cdot 1/4) = p(B)$ by hypothesis, as required by the asymmetric criterion. Furthermore,  we have $\mu(B) \leq \frac{x(B)}{1 - x(B)} \leq \frac{2 p(B)}{1 - 2 p(B)} \leq \frac{2 p(B)}{1 - 2 \times 1/4} = 4 p(B)$.
\item We apply the asymmetric criterion  with $x(B) = \frac{d e p(B)}{d+1}$ for all $B$. We have:
  \begin{align*}
    x(B) \prod_{A \in \Gamma(B)} (1 - x(A)) &=  \frac{d e p(B)}{d+1} \prod_{A \in \Gamma(B)} (1 - \frac{d e p(A)}{d+1}) \geq \frac{d e p(B)}{d+1} \Bigl( 1 - \frac{d e \pmax}{d+1} \Bigr)^{d-1} \\
    &\geq p(B) \times \frac{d e}{d+1} \Bigl( 1 - \frac{1}{d+1} \Bigr)^{d-1} \geq p(B).
  \end{align*}

Furthermore, $\mu(B) \leq \frac{x(B)}{1 - x(B)} = \frac{e d p(B)}{d+1 - e d p(B)} \leq \frac{e d p(B)}{d+1 - e d \pmax} \leq e p(B)$.   \qedhere
\end{enumerate}
\end{proof}

Before proving Proposition~\ref{wwbound}, we recall a useful result. It was originally shown in \cite{mt2}, but we provide a new and simpler proof here.
\begin{proposition}[\cite{mt2}]
  \label{gv5}
    If $p(1 + \eps)$ converges, then $\sum_{B \in \mathcal B_k} \mu(B) \leq 1/\eps$ for any $k \in [n]$.
\end{proposition}
  \begin{proof}
Let us define $\mathfrak A = \bigcup_{B \in \mathcal B_k} \mathfrak S_B$, and define $\mathfrak A' \subseteq \mathfrak A$ to be the set of wdags $G \in \mathfrak A$ with the additional property that $L(v) \notin \mathcal B_k$ for all \emph{non-sink} vertices $v$.  We can construct a function $F$ mapping finite sequences  of $\mathfrak A'$ into $\mathfrak A$ as follows. Consider $G_1, \dots, G_{\ell} \in \mathfrak A'$ with sink nodes $v_1, \dots, v_{\ell}$ respectively. Define $G = F(G_1, \dots, G_{\ell})$ by taking copies of $G_1, \dots, G_{\ell}$, along with edges from $v \in G_i$ to $v' \in G_j$ if $i < j$ and $L(v) \sim L(v')$. Note that $G$ is a wdag and has only a single sink vertex $v_{\ell}$, since any $x \in G_i$ has a path to $v_i \in G_i$ which in turn has an edge to $v_{\ell}$.

To see that this is a bijection, take $G \in \mathfrak A$ and let $U$ denote the set of vertices $u \in G$ with $L(u) \in \mathcal B_k$. These vertices must be linearly ordered, so suppose they are sorted as $U = \{ u_1, \dots, u_{\ell} \}$. Then $(G_1, \dots, G_{\ell})$ is the unique pre-image of $G$ where we define inductively $G_i = G(u_i) - G_1 - \dots - G_{i-1}$.
  
The nodes of $F(G_1, \dots, G_{\ell})$ are the union of those of $G_1, \dots, G_{\ell}$, so $w_{q}(F(G_1, \dots, G_{\ell})) = w_{q}(G_1) \cdots w_{q}(G_{\ell})$ for any vector $q$. Summing over $\ell \geq 0$ and $G_1, \dots, G_{\ell} \in \mathfrak A'$ on the one hand, and over all $G \in \mathfrak A$ on the other, we have:
  \begin{equation}
    \label{gm2}
  w_{q}(\mathfrak A) = \sum_{\ell=1}^{\infty} \sum_{G_1, \dots, G_{\ell} \in \mathfrak A'} w_{q}(G_1) \cdots w_{q}(G_{\ell}) =  \sum_{\ell=1}^{\infty} w_{q}(\mathfrak A')^{\ell}
  \end{equation}

By hypothesis, $w_{p(1+\eps)}(\mathfrak A) \leq w_{p(1+\eps)}(\mathfrak W) < \infty$. By Eq.~(\ref{gm2}) with $q = p(1+\eps)$ this implies that $w_{q}(\mathfrak A') < 1$, and hence:
  $$
  w(\mathfrak A') = w_{q/(1+\eps)}(\mathfrak A') \leq \frac{ w_{q}(\mathfrak A')}{1+\eps} < \frac{1}{1+\eps}
  $$

  Applying Eq.~(\ref{gm2}) to $q = p$ then gives:
  \[
  \sum_{B \in \mathcal B_k} \mu(B) = w(\mathfrak A) = \sum_{\ell=1}^{\infty} w(\mathfrak A')^{\ell} < \sum_{\ell=1}^{\infty} (1+\eps)^{-\ell} = 1/\eps. \qedhere
  \]
\end{proof}

\begin{proof}[Proof of Proposition~\ref{wwbound}]
  \begin{enumerate}
      \item Here $q^{1-\eps/2} = p^{1 - \eps + \eps^2/4} \leq p^{1-\eps}$, and  $p^{1-\eps}$ converges.
\item  Let $\nu = \eps (1 - \pmax)/2$. Simple analysis shows $q (1+\nu) \leq p^{1-\eps}$; since $p^{1-\eps}$ converges, $q(1+\nu)$ converges as well. The result follows by applying Proposition~\ref{gv5} to $q$ and $\nu$.
\item  Since $w_q(\mathfrak C_B) = \mu_q(B) / q(B) \leq \mu_q(B) / \pmin^{1-\eps/2} \leq \mu_q(B) / \pmin$, result  (2) gives
$$
\sum_{B \in \mathcal B} w_q(\mathfrak C_B) \leq  \sum_{k \in [n]} \sum_{B \in \mathcal B_k} w_q(\mathfrak C_B) \leq \frac{2 n}{\eps \pmin(1 - \pmax)}
$$

\item We have $w_q(\mathfrak C_B) =\sum_{\text{stable $J \subseteq \overline \Gamma(B)$}}  w(\mathfrak W_J)$. We can form a stable set $J$ by going through each $k \in \var(B)$, and choosing at most $A \in \mathcal B_k$ to place into $J$. From  part (2) we  estimate
    \[
w_q(\mathfrak C_B) \leq \prod_{k \in \var(B)} \Bigl( 1 + \sum_{A \in \mathcal B_k} \mu_q(A) \Bigr)  \leq \Bigl( 1+\frac{2}{\eps (1- \pmax)}\Bigr)^{|\var(B)|} \leq \Bigl( \frac{4}{\eps (1-\pmax)} \Bigr)^{|\var(B)|}.  \qedhere
    \]
\end{enumerate}    
\end{proof}

\begin{proof}[Proof of Proposition~\ref{gb1}]
  \begin{enumerate}
  \item Consider any vertex $v \in H$. Since $H \trianglelefteq G$, any vertex $w \in G$ with an edge to $v$ is also in $H$. As a result, $H(v) = G(v)$, so the configuration $X_{v,R}$ is the same for both $G$ and $H$. Since $G$ is compatible with $R$, also $H$ is compatible with $R$.
    \item  Let $v_1, v_2, \dots $ be the nodes of the History-wdag. When running the MT algorithm with resampling table $R$, each event $L(v_k)$ was true at time $k$, and each variable $X(i)$ at time $k$ takes value $R(i,j)$ where $j$ is the number of previous events involving variable $i$. This is precisely configuration $X_{v_k,R}$. So each $L(v_k)$ is true on configuration $X_{v_k,R}$, and the History-wdag is compatible with $R$.
    \item For any $v \in G$, the configuration $X_{v,R}$ has the distribution $\Omega$. So each event $L(v)$ holds on $X_{v,R}$ with probability  precisely $p(L(v))$. Also, determining if each $L(v)$ holds on $X_{v,R}$ involves disjoint entries from $R$. So the events are independent and the overall probability that $G$ is compatible with $R$ is $\prod_{v \in G} p(L(v)) = w(G)$. 
    \item Suppose the MT algorithm runs for $t$ iteration, and let $v_1, \dots, v_t$ be the nodes of the History-wdag $H$. Then $H(v_1), \dots, H(v_t)$  are distinct single-sink wdags which are all compatible with $R$, i.e. $| \mathfrak S\cap \mathfrak R | \geq t$. Also, if the MT algorithm uses entry $R(i,j)$ at some time $s $, then there must be at least $j$ vertices $w$ in $H(v_s)$ with $i \in \var(L(w))$ (namely, the prior resampling of variable $i$). Hence $j \leq |H(v_s)| \leq \maxsize (\mathfrak S\cap \mathfrak R)$. \qedhere
\end{enumerate}
\end{proof}

\begin{proof}[Proof of Theorem~\ref{mtdist-thm1}]
  Let $B_1, \dots, B_t$ be an MT-execution with History-wdag $H$ on nodes $v_1, \dots, v_t$, and for $i = 0, \dots, t$ let $X_i$ be the configuration after resampling $B_1, \dots, B_i$. So $X_0$ is the initial row of $R$ and $X_t = X$ is the final configuration and $E$ holds on $X_t$. Let $s$ be the minimal value where $E$ holds on $X_s$. 
    
Define $U = \{v_i  \mid i \leq s, L(v_i) \in \Gamma(E)\}$ and $H' = H(U)$. Since $H'$ is a prefix of $H$, it is compatible with $R$. Also, $\sink(H') \subseteq \Gamma(E)$. The configuration $X_{\text{root},R}$ for $H'$ agrees with $X_s$ on all variables in $\var(E)$, so $E$ holds on it.  Finally, if $\Pr_{\Omega}(L(v_i) \cap \neg E) = 0$ for some $v_i \in U$, then $L(v_i)$ was true at time $i$ and so $E$ must be true after the resampling at time $i-1$, contradicting minimality of $s$.  So $L(v) \in \mathcal B^E$ for all $v \in H$ and so $H' \in \mathfrak R'_E$.
\end{proof}

\begin{proof}[Proof of Proposition~\ref{gb2}]
  \begin{enumerate}
  \item The bound $W_{\epsilon} \geq w_q(\mathfrak F)$ is obvious from the definition of $\mathfrak F$.
  We have $w_q(\mathfrak C_B) \geq 1$ and $w_q(\mathfrak C'_E) \geq 1$ since each of these includes the empty wdag which has weight one. Also, consider the event $B \in \mathcal B$ with $p(B) = \pmax$. We can form a wdag $G \in \mathfrak S_B$ by taking any wdag $G' \in \mathfrak S_B$ and adding a new sink node labeled $B$, so $w_q(\mathfrak S_B) \geq q(B) (1 + w_q( \mathfrak S_B))$, i.e. $\mu(B) \geq \frac{q(B)}{1 - q(B)}$ and hence $w(\mathfrak C_B) \geq \frac{1}{1 - q(B)} = \frac{1}{1 - \pmax^{1-\eps}} \geq \frac{1}{1 - \pmax}$. 
  \item Since $w_q(G) \leq \tau$ for any $G \in \mathfrak F^{\text{low}}_{\tau}$, we have:
   $$
   w( \mathfrak F^{\text{low}}_{\tau} ) = \sum_{G \in {\mathfrak F}^{\text{low}}_{\tau}} w_q(G)^{\frac{1}{1-\eps}} \leq \sum_{G \in {\mathfrak F}^{\text{low}}_{\tau}} w_q(G) \tau^{\frac{1}{1-\eps} - 1} \leq \tau^{\frac{1}{1-\eps} - 1} w_q(\mathfrak F) \leq \tau^{\eps} w_q(\mathfrak F) \leq \tau^{\eps} W_{\eps}.
   $$
\item We first count the wdags $G \in \mathfrak F_{\tau}$ with weight at least $\tau^2$:
$$
 \sum_{\substack{G \in \mathfrak F_{\tau}: w_q(G) \geq \tau^2}} 1 \leq \frac{1}{\tau^2} \sum_{\substack{G \in \mathfrak F_{\tau}}} w_q(G) \leq \frac{w_q(\mathfrak F)}{\tau^2} \leq \frac{W_{\eps}}{\tau^2}.
$$

We next count the single-sink wdags in $\mathfrak F_{\tau}$ which satisfy $w_q(G - v) \geq \tau$:
$$
\negthickspace \negthickspace \negthickspace \sum_{G: w_q(G - v) \geq \tau} \negthickspace \negthickspace \negthickspace 1 \leq \negthickspace  \sum_{ \substack{ G \in \mathfrak F_{\tau} \cap \mathfrak S}} \negthickspace  \frac{ w_q(G-v) }{\tau}  = \sum_{B \in \mathcal B} \sum_{ \substack{ G \in \mathfrak F_{\tau} \cap \mathfrak S_B }} \frac{ w_q(G) }{\tau q(B)} \leq \sum_{B \in \mathcal B} \frac{ w_q(\mathfrak S_B) }{\tau q(B)} = \sum_{B \in \mathcal B} \frac{w_q(\mathfrak C_B)}{\tau} \leq \frac{W_{\eps}}{\tau}.
$$

\item First consider any wdag $G \in \mathfrak F_{\tau}$ with  $w_q(G) \geq \tau^2$, and let $t = |G|$. We claim that  $t \leq \min\{ |\mathfrak F_{\tau}|, 4 W_{\eps} \log(1/\tau) \}$.  First, for each $v \in G$, the single-sink wdag $G(v)$ is in $\mathfrak F_{\tau}$, and all such $G(v)$ are distinct; thus $|\mathfrak F_{\tau}| \geq t$. Next, we have  $w_q(B) = w(B)^{1-\eps} \leq \pmax^{(1-\eps) t}$, and so $t \leq \frac{2 \log (1/\tau)}{(1-\eps) \log (1/\pmax)} \leq  \frac{2 \log (1/\tau)}{(1-\eps) (1 - \pmax)}$; by part (1) above we have $\frac{1}{1-\pmax} \leq W_{\eps}$.

Next, consider any wdag $G$ with a single sink node $v$ where $w_q(G - v) \geq \tau$. We have shown that $|G-v| \leq \min\{ |\mathfrak F_{\tau}|, 4 W_{\eps} \log(1/\tau) \}$, and so $|G| \leq \min\{ |\mathfrak F_{\tau}|, 4 W_{\eps} \log(1/\tau) \} + 1$. \qedhere
     \end{enumerate}
     \end{proof}
  
\section{Algorithm to enumerate wdags}
\label{gb2a-proof}

The algorithm here works  by successively merging smaller wdags. Formally, for wdags $G, H$, we form the new wdag $\text{Concat}(G,H)$  by first taking the union of $G$ and $H$; then, for any vertices $u \in G$ and $v \in H$ with $L(u) \sim L(v)$, we add an edge from $u$ to $v$. We thus use the following algorithm to enumerate $\mathfrak F_{\tau}$; here $T$ is a parameter we will set later.

\begin{algorithm}[H]
\centering
\begin{algorithmic}[1]
\State Initialize $\mathfrak A_0$ to contain $m$ graphs with a singleton node labeled by each $B \in \mathcal B$
\For{$i = 1, \dots, T$}
\State Initialize $\mathfrak A_i \leftarrow \mathfrak A_{i-1}$
\For{all pairs $G_1, G_2 \in \mathfrak A_{i-1}$}
\State Form $H = \concat(G_1,  G_2)$ 
\State Check if $H \in \mathfrak F_{\tau}$; if so, add it to $\mathfrak A_i$
\EndFor
\EndFor
\State Return $\mathfrak A_T$
\end{algorithmic}
\caption{Algorithm to enumerate $\mathfrak F_{\tau}$}
\label{enuma1}
\end{algorithm}

To analyze this procedure, we first show a few useful results:
\begin{proposition}
If $H \trianglelefteq G$, then $G = \concat(H,  G - H)$.
\end{proposition}
\begin{proof}
Clearly, the nodes of $G$ and $\concat(G,H)$ are both the union of the nodes of $H$ and $G -H$. Also, any edge within $H$ or within $G - H$ is the same within $G$ and $\concat(H, G-H)$. So, consider some nodes $u \in H, v \in G-H$ with $L(u) \sim L(v)$, where $\concat(H, G-H)$ has a directed edge $(u,v)$. This must be in $G$ as well, as if $G$ has a directed edge $(v,u)$,  since $u \in H$ and $H$ is a prefix,  then $v$ would also be placed into $H$.
\end{proof}

\begin{proposition}
\label{gaav1}
For a wdag $G \in \mathfrak F$ with $|G| \geq 2$, there is some $H \trianglelefteq G$ with $H \in \mathfrak F$ and $|G|/4 \leq |H| \leq |G|/2$.
\end{proposition}
\begin{proof}
Let $H$ be chosen to satisfy the properties that (i) $H \in \mathfrak F$ and (ii) $H \trianglelefteq G$ and (iii) $|H| \geq |G|/4$; among all such $H$ satisfying these three conditions, choose the one which has smallest size. Note that is well-defined since $H = G$ satisfies the conditions. Let $v_1, \dots, v_s$ be the sink nodes of $H$.

If $s = 1$, then $H - v_1 \in \mathfrak C_{L(v_1)} \subseteq \mathfrak C_B$ and also $H - v_1 \trianglelefteq G$. By minimality of $H$, we must then have $|H - v_1| < |G|/4$; since $|G| \geq 2$ this implies $|H| \leq |G|/2$.

If $s > 1$, then consider the wdags $H_1 = H(v_1), H_2 = H(v_2, \dots, v_s)$. Every node of $H$ is in $H_1$ or $H_2$ or both, so at least one of the wdags $H_i$ has $|H_i| \geq |H|/2$.  Each $H_i$ is in $\mathfrak F$  (its sink nodes are a subset of those of $H$), so by minimality of $H$ we have $|H|/2 \leq |H_i| < |G|/4$, i.e. $|H| < |G|/2$.
\end{proof}

\begin{theorem}
For $T = \Omega(\log \maxsize(\mathfrak F_{\tau}))$, we have $\mathfrak F_{\tau} = \mathfrak A_T$.
\end{theorem}
\begin{proof}
We claim that if any wdag $G \in \mathfrak F_{\tau}$ has size at most $(4/3)^{i}$ and has $w_q(G) \geq \tau^2$, then $G \in \mathfrak A_i$. We show it by induction on $i$; the base case $i = 0$ is clear since $\mathfrak A_0$ contains all the singleton wdags. For the induction step, consider $G \in \mathfrak F_{\tau}$ with $|G| \leq (4/3)^{i}$. If $|G|=1$ we have $G \in \mathfrak A_0 \subseteq \mathfrak A_{i}$. Otherwise, by Proposition~\ref{gaav1}, there is $H \trianglelefteq G$ with $|G|/4 \leq |H| \leq |G|/2$ and $H \in \mathfrak F$. We have $G - H \in \mathfrak F$, since its sink nodes are a subset of those of $G$. Since $w_q(G) \geq \tau^2$, both $H$ and $G - H$ are in $\mathfrak F_{\tau}$ and have weight at least $\tau^2$. Also, $H$ and $G - H$ have size at most $3 |G|/4 \leq (4/3)^{i-1}$. By induction hypothesis, both $H$ and $G-H$ are in $\mathfrak A_{i-1}$, so $G = \concat(H, G-H)$ is placed into $\mathfrak A_{i}$.

Thus all wdags $G \in \mathfrak F_{\tau}$ with $w_q(G) \geq \tau^2$ are contained in $\mathfrak A_{T'}$ for $T' = \Omega(\log \maxsize(\mathfrak F_{\tau}))$.   Now, consider any wdag $G \in \mathfrak F_{\tau}$ that is not already in $\mathfrak A_{T'}$. Necessarily $G$ has a single sink node $v$ and $w_q(G - v) \geq \tau$. Then $G = \concat(G-v, v)$ is added to $\mathfrak A_{T'+1}$.  So $T = T' + 1$ satisfies the claim.
\end{proof}

Putting together these results, we conclude with the following main theorem:
\begin{theorem}
\label{enum-alg}
Let $b = \maxsize(\mathfrak F_{\tau})$ and $\phi = | \mathfrak F_{\tau} | m'$. Algorithm~\ref{enuma1}  uses $\poly(\phi)$ processors and $\tilde O( \log b \log \phi ) \leq \tilde O(\log^2 \phi)$ time to enumerate $\mathfrak F_{\tau}$. In particular, there is a sequential algorithm to enumerate $\mathfrak F_{\tau}$ in $\poly( \phi)$ time.
\end{theorem}
\begin{proof}
 We can compute $\concat(G,H)$ from $G,H$ using $\tilde O( \log(|G|H|m'))$ time and $\poly(|G|, |H|, m')$ processors. Thus each individual iteration can be executed with $NC^1(b m \phi)$ complexity; note from Proposition~\ref{gb2}(4) that $b \leq O(\phi)$.  After $T = O(\log b)$ iterations, all of $\mathfrak F_{\tau}$ gets enumerated.
\end{proof}

\section{Proof of Theorem~\ref{fool-thm2}}
\label{appendix2}

As a starting point, we use the following result of \cite{harris2} for binary-valued probability spaces.
\begin{theorem}[\cite{harris2}]
  \label{fool-thm}
  Suppose the variables $X(1), \dots, X(n)$ are iid Bernoulli-$1/2$ and $F_1, \dots, F_k$ are automata on them, and let $\phi = \max\{ \size(F_i), n, k, 1/\eps \}$.
 Then there is a deterministic parallel algorithm to find a distribution $D$ of support size $|D| = \poly(\phi)$ which fools $F_1, \dots, F_k$ to error $\eps$. The algorithm has a complexity of $\poly(\phi)$ processors and $\tilde O(\log \phi \log n)$ time.
\end{theorem}

We now show Theorem~\ref{fool-thm2} by extending Theorem~\ref{fool-thm} to allow other possibilities for the alphabet $\Sigma$ and probability distribution $\Omega$. 

\begin{proof}[Proof of Theorem~\ref{fool-thm2}]

(1)  Let $\Omega'$ be the probability space obtained by quantizing the probability distribution on each variable $X(i)$ to multiple of $2^{-b}$ for $b = \Omega(\log(\sigma n/\eps))$. A straightforward coupling argument between $\Omega$ and $\Omega'$ shows that $\Omega'$ fools $F_1, \dots, F_k$ to error $\eps/2$.  We can encode the probability distribution $\Omega'$ by replacing each variable $X(i)$ with $b$ independent Bernoulli-$1/2$ variables $X'(i,1), \dots, X'(i,\ell)$. Also, we can simulate each automaton $F_i$ on the original variables $X$ by an automaton $F_i'$ on the expanded variables $X'$ by adding $2^b$ additional states. Thus, $\size(F'_i) \leq \size(F_i) + 2^b$, and the number of variables is increased to $n' = n b$.

  We apply Theorem~\ref{fool-thm} to generates a distribution $D$ over $\{0, 1 \}^{n'}$ fooling $F'_1, \dots, F'_k$ to error $\eps/2$. Here $D$ can also be viewed as a distribution over $\Sigma^n$, which fools $F_1, \dots, F_k$ to error $\eps$.

(2) We use the first result to construct a distribution $D$ fooling the automata to error $\rho' = \frac{\rho^2}{80 k}$.  
  Let us define $p_i = \Pr_{\Omega}(E_i)$, which  can be computed efficiently using Observation~\ref{simple-dynamic}.    Note that, since $D$ fools each automaton $F_i$, we have $\Pr_D(E_i) \leq  p_i + \rho'$ for each $i$. Let $U \subseteq \{1, \dots, k \}$ be the set of indices $i$ with $p_i \leq \nu := \frac{\rho}{40 k}$, and consider the function: 
  $$
  \Phi(X) =\sum_{i \in U} E_i(X)  + \frac{\sum_{i \notin U} s_i E_i(X)}{10 \sum_{i \notin U} s_i p_i}.
  $$
  
  We can calculate the expectation as:
  $$
  \mathbf E_D [\Phi(X)] \leq \sum_{i \in U} (p_i + \rho') + \frac{\sum_{i \notin U} s_i (p_i + \rho')}{10 \sum_{i \notin U} s_i p_i} \leq k (\nu  + \rho') + \frac{1}{10} + \frac{\rho'}{10 \nu} \leq \frac{1+\rho}{10}.
  $$
  
  Since $D$ has polynomial size, we can search the entire space to find a configuration $X$ with $\Phi(X) \leq (1+\rho)/10$. Since $\Phi(X) < 1$, this implies that $E_i(X) = 0$ for all $i \in U$. We then get
  \[
  \sum_{i \in [n]} s_i E_i(X) = \sum_{i \notin U} s_i E_i(X) = \Phi(X) \cdot (10 \sum_{i \notin U} s_i p_i)  \leq (1 + \rho)/10 \cdot (10 \sum_{i \in [n]} s_i p_i) = \sum_{i \in [n]} s_i p_i. \qedhere
  \]
\end{proof}

    \section{Simulating the MT algorithm}
    \label{sim-appendix}

    We will show the following result for the simulation of the MT algorithm:
    
\begin{lemma}
  \label{lem-post}
  Given a resampling table $R$ and an explicit listing of $\mathfrak S\cap \mathfrak R(R)$, there is an algorithm with $NC^2( \phi)$ complexity to find a good configuration $Y$ which is the output of a MT-execution on $R$, where $\phi = \max\{m,n, | \mathfrak S \cap \mathfrak R(R) | \}$.
\end{lemma}

    The algorithm  depends on the following definition taken from \cite{mt3}:
    \begin{definition}[Consistent wdags]
For a wdag $G$ and $i \in [n]$, we define $G[i]$ to the induced subgraph on all vertices $w \in G$ with $L(w) \in \mathcal B_i$. We say a pair of wdags $G, H$ are \emph{consistent} if for each $i \in [n]$, either $G[i]$ is an initial segment of $H[i]$ or $H[i]$ is an initial segment of $G[i]$.
    \end{definition}

Based on this definition, we use the following algorithm: we form the graph $H$ whose nodes are the elements of $\mathfrak S\cap \mathfrak R$, and where there is an edge on nodes corresponding to wdags $G, G'$ if $G, G'$ are inconsistent.  Using the standard algorithm of \cite{luby-mis}, we compute an MIS $\mathfrak I$ of $H$. We then output the configuration $Y_{\mathfrak I, R}$ defined by $Y_{\mathfrak I, R}(i) = R(i,\max_{G \in \mathfrak I} |G[i]|)$ for all $i$.

The algorithm clearly has the claimed complexity. Its correctness is implied by the following general result on maximal consistent wdag sets. (Note here that $\mathfrak I$ is precisely such a set).
     \begin{proposition}
    \label{lem-post1}
    Suppose that $R$ is a resampling table and $\mathfrak A \subseteq \mathfrak S \cap \mathfrak R(R)$ is a set of wdags  with the property that (i) every pair of wdags in $\mathfrak A$ are consistent, and (ii) for any $G' \in (\mathfrak S \cap \mathfrak R(R)) - \mathfrak A$ there is some $G \in \mathfrak A$ which is inconsistent with $G'$. Then  $Y_{\mathfrak A, R}$ is a good configuration which is the output of an  MT-execution on $R$.
  \end{proposition}
  \begin{proof}
We prove this by induction on $s_{\mathfrak A} := \sum_{G \in \mathfrak A} |G|$.  For the base case $s_{\mathfrak A} = 0$, we have $\mathfrak A = \emptyset$. Then $Y_{\mathfrak A, R}$ is trivially the output of an MT-execution (with no resamplings) on $R$. Also, it is a good configuration, as if some $B$ is true, then the wdag $G'$ consisting of a singleton node labeled $B$ is compatible with $R$, but there is clearly no $G \in \mathfrak A$ inconsistent with $G'$. 

For the induction step, let $G$ be an arbitrary element of $\mathfrak A$, let $v$ be an arbitrary source node of $G$, and let $B = L(v)$. Define $\mathfrak A_0$ to be the set of wdags $H \in \mathfrak A$ such that $L(u) \nsim B$ for all $u \in H$, and $\mathfrak A_1$ to be the set of wdags $H \in \mathfrak A$ with a (necessarily unique) source node labeled $B$.

We claim that $\mathfrak A = \mathfrak A_0 \sqcup \mathfrak A_1$. For, consider $H \in \mathfrak A - \mathfrak A_0$, and suppose that $u$ is the earliest node of $H$ such that $L(u) \sim B$. Consider some variable $i \in \var(B) \cap \var(B')$. The graph $G[i]$ starts with a node labeled $B$, while $H[i]$ starts with a node labeled $B'$. Since $G, H$ are consistent, it must be that $B' = B$. Furthermore, $H$ cannot contain a directed edge $(u', u)$, as then $L(u') \sim L(u) = B$, contradicting our choice of $u$. Thus $H$ has a source node $u$ labeled $B$ and so $u \in \mathfrak A_1$.

Now form the set $\mathfrak A'$ from $\mathfrak A$ by deleting  the source node labeled $B$ from each $H \in \mathfrak A_1$, and define a new resampling table $R'$ by ``shifting out'' $B$, i.e. set $R'(i,j) = R(i,j-1)$ for $i \in \var(B)$, and $R'(i,j) = R(i,j)$ otherwise. We claim that $\mathfrak A'$ satisfies the induction property with respect to resampling table $R'$. 

First, it is immediate from the definitions that every pair of wdags in $\mathfrak A'$ are consistent and that every $H' \in \mathfrak A'$ is compatible with $R'$. Furthermore, suppose that some $H' \in \mathfrak S \cap \mathfrak R(R')$ is consistent with all $G' \in \mathfrak A'$. Form $H$ from $H'$ by adding a new source node labeled $B$, with an edge to any $w \in H'$ such that $L(w) \sim B$. Since $H'$ is compatible with $R'$, this $H$ is compatible with $R$. Also, since $H'$ is consistent with every wdag in $\mathfrak A'$,  $H$ is consistent with every wdag in $\mathfrak A$. By our hypothesis on $\mathfrak A$, we must have $H \in \mathfrak A$. This implies that $H' \in \mathfrak A'$, since $H'$ is obtained from $H$ by deleting the source node labeled $B$.

 Finally, observe that $Y_{\mathfrak A', R'} = Y_{\mathfrak A, R}$.  At least one wdag in $\mathfrak A'$ has its size reduced (by removing a source node) compared to $\mathfrak A$, so $s_{\mathfrak A'} < s_{\mathfrak A}$. By the induction hypothesis, $Y_{\mathfrak A', R'}$ is a good configuration and is the output of an MT-execution on $R'$. Since $B$ holds on the initial row of $R$, this implies that $Y_{\mathfrak A, R}$ is the output of an MT-execution on $R$.
  \end{proof}

  \end{document}